\newif\ifsiamvar
\newif\ifarxivvar
\providecommand{\tabularnewline}{\\}
\providecommand{\algorithmname}{Algorithm}
\numberwithin{equation}{section}
\numberwithin{figure}{section}
	\theoremstyle{plain}
	\newtheorem{theorem}{Theorem}
	  \theoremstyle{remark}
	  \newtheorem*{remark}{Remark}
	  \theoremstyle{plain}
	  \newtheorem{lemma}[theorem]{Lemma}  
	  \theoremstyle{plain}
	  \newtheorem{proposition}[theorem]{Proposition}
	  \theoremstyle{plain}
	  \newtheorem{corollary}[theorem]{Corollary}
	  \theoremstyle{theorem}
	  \newtheorem{assumption}[theorem]{Assumption}
\newcolumntype{Y}{>{\centering\arraybackslash}X} 
\DeclareMathOperator{\E}{\mathbb{E}}
\DeclareMathOperator{\sgn}{sgn}
\newcommand{\gd}{d}
\newcommand{\gf}{f}
\newcommand{\gi}{i}
\tikzstyle{decision} = [diamond, draw, fill=blue!20, 
\tikzstyle{block} = [rectangle, draw, fill=blue!20, 
\tikzstyle{output} = [rectangle, draw, fill=green!20, 
\tikzstyle{line} = [draw, -latex']
\tikzstyle{cloud} = [draw, ellipse,fill=black!10,node distance=5cm,
\author{
\scshape{Andrei Cozma\,}\thanks{\footnotesize\scshape{Mathematical Institute, University of Oxford, OX2 6GG, United Kingdom} \newline
\hspace*{1.8em}\textup{andrei.s.cozma@gmail.com, matthieu.mariapragassam@gmail.com, christoph.reisinger@maths.ox.ac.uk\newline
The first author gratefully acknowledges financial support from the \textsc{EPSRC}.
The second author gratefully acknowledges financial support from the \textsc{Oxford--Man
Institute} and \textsc{BNP Paribas London}.\newline The authors thank three anonymous reviewers for their insightful suggestions and comments.}}
\and
\scshape{Matthieu Mariapragassam\,\footnotemark[1]\;\,\thanks{\footnotesize\scshape{Oxford-Man Institute of Quantitative Finance, University of Oxford, OX2 6ED, United Kingdom}}\ifsiamvar \else \, \Letter \fi}
\and \scshape{Christoph Reisinger\,\footnotemark[1]\;\,\footnotemark[2]}
}
\date{}
\begin{document}
\title{Calibration of a Hybrid Local-Stochastic Volatility \\ Stochastic Rates Model with a Control Variate Particle Method}
\maketitle

\begin{abstract}
We propose a novel and generic calibration technique for four-factor
foreign-exchange hybrid local-stochastic volatility models (LSV) with stochastic short rates. We build
upon the particle method introduced by Guyon and Henry-Labordère 
[Nonlinear Option Pricing, Chapter 11, Chapman and Hall, 2013]
and combine it with new variance reduction techniques in order to
accelerate convergence. We use control variates derived from: a calibrated pure local volatility model;
a two-factor Heston-type LSV model (both with deterministic rates); the stochastic (CIR) short rates.
The method can be applied to a large class of hybrid LSV models and is
not restricted to our particular choice of the diffusion. However, we address in the paper some specific difficulties
arising from the Heston model, notably by a new PDE formulation and finite element solution to bypass the singularities of the density when zero
is attainable by the variance.
The calibration procedure is performed on market data for the $\text{EUR-USD}$
currency pair  and has a comparable run-time to the PDE calibration of a two-factor LSV model alone. 
\end{abstract}

\ifsiamvar
	\begin{keywords}
		Heston-type local-stochastic volatility models, calibration, particle method, control variate, Fokker--Planck equation
	\end{keywords}
	
	\begin{AMS}
		60H35, 65C05, 65C30, 65N21
	\end{AMS}
\fi

\section{Introduction}

Efficient pricing and hedging of exotic derivatives requires a model
which is rich enough to re-price accurately a range of liquidly traded
market products. 
Calibration to vanilla options has been
widely documented 
in the literature since the work of Dupire \cite{Dupire} in the context of local volatility (LV).
Nowadays, the
exact re-pricing of call options is a must-have standard, and \emph{Local-Stochastic
Volatility} (LSV) models are the \emph{state-of-the-art} in many financial
institutions.
As discussed in Ren et al.\ \cite{Ren&Madan2007}, Tian et al. \cite{Tian2015}, Van der Stoep et al. \cite{Stoep2014} and
Guyon and Henry-Labordère \cite{Guyon2011}, LSV models
improve the
pricing and risk-management performance when compared to pure local
volatility or pure stochastic volatility models.
The local volatility component allows
a perfect calibration to the market prices of vanilla options. At
the same time, the stochastic volatility component already provides
built-in smiles and skews which give a rough fit, so that
a local volatility component --  the so-called leverage function -- relatively close to one suffices for a perfect
calibration.
Moreover, they exhibit superior dynamic properties over pure
local volatility models.

We focus on a Heston-type
LSV model because %
of the desirable properties of the Cox--Ingersoll--Ross (CIR) process for the variance, such
as mean-reversion and non-negativity, and since semi-analytic formulae
are available for calls and puts under Heston's model  (see \cite{Heston1993a}) and can help
calibrate the Heston parameters easily.
Various sophisticated calibration techniques for the local volatility component
are in use in the financial industry, e.g., based on the Monte Carlo particle method
in \cite{Guyon2011} or the PDE-based approach in
\cite{Ren&Madan2007}. 

In order to improve the pricing and hedging of foreign exchange (FX)
options, we furthermore introduce stochastic domestic and foreign short interest
rates into the model.
Empirical results (see e.g.\ \cite{Haastrecht2009})
have confirmed that 
for long-dated FX products 
the effect of interest rate volatility
can be as relevant as that of the FX rate volatility. 
Extensive research has been carried out in the area of
option pricing with stochastic volatility and interest rates in the
past few years. Van Haastrecht et al. \cite{Haastrecht2009}
extended the model of Schöbel and Zhu \cite{Schobel1999} to currency
derivatives by including stochastic interest rates, a model that benefits
from analytical tractability even in a full correlation setting due
to the processes being Gaussian. On the other hand, Ahlip and Rutkowski
\cite{Ahlip2013}, Grzelak and Oosterlee \cite{Grzelak2011} and Van
Haastrecht and Pelsser \cite{Haastrecht2011} examined Heston--CIR/Vasicek
hybrid models and concluded that they give
rise to non-affine models even under a partial correlation structure of the
driving Brownian motions and are not analytically tractable.

The resulting 4-factor
model 
complicates the calibration routine due to the higher dimensionality, especially when PDEs are used to find the joint
distribution of all factors. 
A few papers
discuss this problem in simpler settings. Deelstra \cite{Deelstra2013}
and Clark \cite{Clark2010} mainly consider 3-factor hybrid local
volatility models and focus on the theoretical rather than the practical
aspects of the calibration, whereas Van der Stoep et al.\ 
\cite{VanderStoep2016} consider an application to a 2-factor hybrid
local volatility. In \cite{Guyon2011}, Guyon and Henry-Labordère discuss an
application of Monte Carlo-based calibration methods to a 3-factor
LSV equity model with stochastic domestic rate and discrete dividends.

The model of Cox et al.~\cite{Cox1985} is popular when modeling
short rates because the (square-root) CIR process admits a unique
strong solution, is mean-reverting and analytically tractable. 
As of late, the non-negativity of the CIR process is considered
to be less desirable when modeling short rates. On one hand,
central banks have significantly reduced the interest rates since
the 2008 financial crisis and it is now commonly accepted that interest
rates need not be positive. On the other hand, if interest rates dropped
too far below zero, then large amounts of money would be withdrawn
from banks and government bonds, putting a severe squeeze on deposits.
Hence, we model the domestic and foreign short rates using the shifted
CIR (CIR\scalebox{.9}{\raisebox{.5pt}{++}}) process of Brigo and
Mercurio \cite{Brigo2001}. The CIR\scalebox{.9}{\raisebox{.5pt}{++}}
model allows the short rates to become negative and can fit any observed
term structure exactly while preserving the analytical tractability
of the original model for bonds, caps, swaptions and other basic interest
rate products.

We note that the CIR process is sometimes considered difficult to simulate in practice.
Moreover, as factor in the Heston model, it leads to singular probability densities
for parameter settings where the variance process can hit zero (i.e., if the so-called
Feller condition is violated),
which cannot be handled easily in the forward Kolmogorov equation by standard
numerical methods. In this paper, we address
both these issues by tailored schemes, but note that the variance and interest rate
processes can be exchanged without significant changes to the main framework
and its benefits, e.g.\ by exponential Ornstein-Uhlenbeck processes for the volatility and
Hull-White processes for the rates, both of which are also popular in the industry.


Based on the above considerations, we study the 4-factor hybrid LSV model defined in
(\ref{eq:Model Definition}) below, which is a Heston-type LSV model with two
shifted CIR short-rate processes.
We give a rigorous proof of the calibration condition for the leverage function given in
\cite[Proposition 12.8]{GuyonLabordere2013} 
for our model specification;
see also the condition given in \cite{Deelstra2013} for a 
4-factor LSV--2CIR++ model.


We propose a calibration approach
which builds on the particle method of \cite{Guyon2011},
and combines
it with a novel and efficient variance reduction technique.
The main
control variate 
is the two-factor LSV model obtained by assuming that the domestic and foreign
rates are deterministic in the original model. In this case, the leverage function is computed
by using a deterministic PDE solver.
This allows us to take
advantage of the efficiency and accuracy of PDE calibration for a low-dimensional model
while keeping the complexity for the high-dimensional model under control by Monte Carlo sampling
with drastically reduced variance.
We find that around 1000 particles are sufficient in practice.
Our numerical experiments suggest that this method recovers
the calibration speed from the corresponding 2-factor LSV model
with deterministic rates defined in (\ref{eq:modelDefinitionLSVDetRates}).


As a result of independent interest, we explain how to effectively deal with violation of the
Feller condition for the Heston-type LSV Kolmogorov forward 
equation and numerically solve the PDE using a
finite element method with a \emph{Backward Differentiation Formula}
(BDF) time-stepping scheme and an appropriate non-Dirichlet boundary
condition. 
To the best of our knowledge, this represents a new approach
which complements the literature on the
use of ADI schemes \cite{Clark2010,Ren&Madan2007,Wyns2016FV} to handle the PDE
calibration of an LSV model with deterministic rates. 

For Heston type models, the
CIR variance process can reach zero if the Feller condition
is violated, as  is often the case in FX markets
(we refer to Table 6.5 in \cite{Clark2010}
for examples on a large range of currency pairs and maturities).
As a consequence, 
the density is singular at the boundary $V=0$.
In \cite{Tian2015},
the authors propose to reduce the problem by considering $\log\left(V_{t}/v_{0}\right)$,
whereas \cite{Clark2010} suggests to refine the mesh near $V=0$.
While these methods alleviate the problem to some extent,
we propose to use a different boundary condition as well as a change
of variables which results in a bounded solution in a neighbourhood of $V=0$.

Moreover, a
main advantage of the finite element method compared to ADI schemes,
besides the greater flexibility in the mesh construction,
is that the Dirac delta initial condition can be handled naturally in
the weak formulation. This methodology yields an accurate calibration
of the Heston-type LSV model with deterministic rates for a broad set of market data.


Finally, we provide empirical evidence that the inclusion of stochastic rates is 
important for the pricing of some specific exotic derivatives.
In particular, in Section \ref{sec:pricing} we consider the pricing problem for a \emph{Target Accrual Redemption Note} (TARN) and a no-touch option.
We demonstrate that the impact of stochastic rates is comparable to the difference between pricing a 5-year no-touch option under a LV or LSV model.
Other exotics with similar features, not considered here, are Accumulators and \emph{Power Reverse Dual-Currency notes} (PRDC).
Moreover, stochastic rates become necessary for any hybrid product
which embeds the rates explicitly. Examples are spread options between
an FX rate and the Libor rate.

The remainder of this paper is organised as follows. In Section \ref{sec:model}, 
we specify the model and calibration framework and
provide a necessary and sufficient condition for a perfect
calibration to vanilla quotes.
A 
rigorous proof 
emphasising the use of local times and possible moment explosions is given in Appendix \ref{app:proof}.
In Section \ref{sec:Fast-Calibration},
we introduce the particle method used and detail how the control variates for both conditional expectations and 
standard expectations are constructed.
In Section \ref{sub:2-Factor-Heston-LSV-Calibration}, we describe the calibration of
the LSV model with deterministic rates using a carefully constructed finite element method.
In Section \ref{sec:Calibration-Results},
we present numerical results and show that a low number of particles
suffices to provide a very good fit to market quotes, which demonstrates the computational efficiency of the method.
The impact of stochastic rates for the pricing of a TARN and no-touch option is presented.
Section \ref{sec:Conclusion} concludes
with a brief discussion.

\section{Model definition and calibration}
\label{sec:model}

We consider a domestic and a foreign market with stochastic short rates
$r^{d}$ and $r^{f}$, and exchange rate $S$.
The spot $S_{T}$ is associated with the currency pair $\text{ccy1ccy2}$
(following the notations in \cite{Clark2010}) and denotes the amount
of units of $\text{ccy2}$ (domestic currency) needed to buy one unit
of $\text{ccy1}$ (foreign currency) at time $T$. 
We denote by $D^{d}$ and $D^{f}$ the domestic
and foreign discount factors associated with their respective money
market accounts,
\[
D_{t}^{d}=\text{e}^{-\int_{0}^{t}r_{u}^{d}du},\quad D_{t}^{f}=\text{e}^{-\int_{0}^{t}r_{u}^{f}du}\,.
\]

\subsection{Models}

We assume the existence of a filtered probability space ({\Large{}$\chi$},
$\mathcal{F},\left\{ \mathcal{F}_{t}\right\} _{t\geq0},\mathbb{Q}^{d})$
with a domestic risk-neutral measure $\mathbb{Q}^{d}$.
For future reference we also define a foreign risk-neutral measure $\mathbb{Q}^{f}$.
Under $\mathbb{Q}^{d}$,
$S$, $r^d$ and $r^f$ follow a system of SDEs
\begin{equation}
\begin{array}{c}
\begin{cases}
\cfrac{dS_{t}}{S_{t}}=\left(r_{t}^{d}-r_{t}^{f}\right)\,dt+\alpha\left(S_{t},t\right)\sqrt{V_{t}}\,dW_{t}\\
r_{t}^{d}=g_{t}^{d}+h^{d}\left(t\right)\\
r_{t}^{f}=g_{t}^{f}+h^{f}\left(t\right)\\
dg_{t}^{d}=\kappa_{d}\left(\theta_{d}-g_{t}^{d}\right)\,dt+\xi_{d}\sqrt{g_{t}^{d}}\,dW_{t}^{\gd}\\
dg_{t}^{f}=\left(\kappa_{f}\left(\theta_{f}-g_{t}^{f}\right)-\rho_{Sf}\xi_{f}\sqrt{g_{t}^{f}}\alpha\left(S_{t},t\right)\sqrt{V_{t}}\right)\,dt+\xi_{f}\sqrt{g_{t}^{f}}\,dW_{t}^{\gf}\\
dV_{t}=\kappa\left(\theta-V_{t}\right)\,dt+\xi\sqrt{V_{t}}\,dW_{t}^{V},
\end{cases}\end{array}\label{eq:Model Definition}
\end{equation}
where $V$ is the stochastic variance process and the four-dimensional standard Brownian motion $(W,W^{V},W^{\gd},W^{\gf})$ has the correlation structure
\begin{eqnarray*}
d\langle W_{t},W_{t}^{V} \rangle \; = \; \rho \, dt, \quad
d\langle W_{t},W_{t}^{\gd}  \rangle \; = \; \rho_{Sd} \, dt, \quad
d\langle W_{t},W_{t}^{\gf}  \rangle \; = \; \rho_{Sf} \, dt, \quad
d\langle W_{t}^{\gd},W_{t}^{\gf}  \rangle \; = \; \rho_{df} \, dt,
\end{eqnarray*}
with $\rho, \rho_{Sd}, \rho_{Sf}, \rho_{df} \in (-1,1)$,
the other correlations being zero (and such that the correlation matrix is positive definite),
and for given functions
$\alpha: \mathbb{R}^+ \times [0,T] \rightarrow \mathbb{R^+}$,
$h^{d/f}: [0,T] \rightarrow \mathbb{R}$, and non-negative numbers
$\kappa, \theta, \xi, \kappa_d, \theta_d, \xi_d, \kappa_f, \theta_f, \xi_f$,
as well as initial values $S_0, g_0^d, g_0^f, V_0$.

Let the call option price under model (\ref{eq:Model Definition})
for a notional of one unit of $ccy1$, with strike $K>0$ and maturity $T>0$, be 
\[
C\left(K,T\right)=\mathbb{E}^{\mathbb{Q}^{d}}\left[D_{T}^{d}\left(S_{T}-K\right)^{+}\right]\,.
\]

If the leverage function $\alpha \equiv1$ in (\ref{eq:Model Definition}), we recover a Heston model with shifted CIR
domestic and foreign short rates. We will refer to this model as Heston-2CIR++ model.
As this model will only be used for intermediate calibration steps, we will make the additional simplification that the
interest rate dynamics are independent of the dynamics of the spot FX rate and the variance process, for analytical tractability
(see \cite{Ahlip2013}).

We also define two simpler models which we will refer
to in the remainder of the article. 
In both these models, rates are deterministic,
$\bar{r}^{d}\left(t\right)=-\partial\ln P^{d}\left(0,t\right)/\partial t$
and 
$\bar{r}^{f}\left(t\right)=-\partial\ln P^{f}\left(0,t\right)/\partial t$,
with $P^{d/f}\left(0,T\right)$ the market zero coupon bond prices
for the domestic and foreign money market accounts, respectively.

We can then write the related
2-factor Heston-type LSV model with deterministic rates as
\begin{equation}
\begin{array}{c}
\begin{cases}
\cfrac{dS_{t}^{2D}}{S_{t}^{2D}}=\left(\bar{r}^{d}\left(t\right)-\bar{r}^{f}\left(t\right)\right)\,dt+\alpha^{2D}\left(S_{t}^{2D},t\right)\sqrt{V_{t}^{2D}}\,dW_{t},\quad S_{0}^{2D}=S_{0},\\
dV_{t}^{2D}=\kappa\left(\theta-V_{t}^{2D}\right)\,dt+\xi\sqrt{V_{t}^{2D}}\,dW_{t}^{V}, \quad V_{0}^{2D}=V_{0},
\end{cases}\end{array}\label{eq:modelDefinitionLSVDetRates}
\end{equation}
for a given function $\alpha^{2D}:\,\mathbb{R}^{+}\times\left[0,T\right]\rightarrow\mathbb{R}^{+}$,
and the pure \emph{Local Volatility} (LV) model{ as 
\begin{equation}
\frac{dS_{t}^{LV}}{S_{t}^{LV}}=\left(\bar{r}^{d}\left(t\right)-\bar{r}^{f}\left(t\right)\right)\,dt+\sigma_{LV}\left(S_{t}^{LV},t\right)\,dW_{t},\quad S_{0}^{LV}=S_{0},\label{eq:modelDefinitionLocalVolatility}
\end{equation}
with a given function $\sigma_{LV}:\,\mathbb{R}^{+}\times\left[0,T\right]\rightarrow\mathbb{R}^{+}$.

Note that while the volatility is ``local'', i.e., a function of spot FX and time, the short rates are assumed to be a function of time only.
We also note for future reference that
under the pure LV model (\ref{eq:modelDefinitionLocalVolatility}), call prices $C_{LV}$ satisfy
 the forward Dupire PDE (see \cite{Dupire})
\begin{equation}
\frac{\partial C_{LV}}{\partial T}-\frac{1}{2}\sigma_{LV}\left(K,T\right)^{2}K^{2}\frac{\partial^{2}C_{LV}}{\partial K^{2}}+K\left(\bar{r}^{d}\left(T\right)-\bar{r}^{f}\left(T\right)\right)\frac{\partial C_{LV}}{\partial K} +\bar{r}^{f}\left(T\right)C_{LV} =0\,.
\label{eq:dupirePDE}
\end{equation}


\subsection{Calibration outline}
The purpose of this paper is to calibrate $h^{d}$, $h^{f}$, $\kappa_{d}$,
$\kappa_{f}$, $\theta_{d}$, $\theta_{f}$, $\xi_{d}$, $\xi_{f}$,
$\kappa$, $\theta$, $\xi$, $\rho$ and especially $\alpha$ in (\ref{eq:Model Definition}).
We will use calibration of (\ref{eq:modelDefinitionLSVDetRates})
and (\ref{eq:modelDefinitionLocalVolatility}) as ``stepping stones''.
More precisely, the full calibration process consists of the following steps,
illustrated in
Figure \ref{fig:Calibration-routine-Flowchart}.

\begin{figure}[h]
\begin{centering}
\includegraphics[scale=0.75]{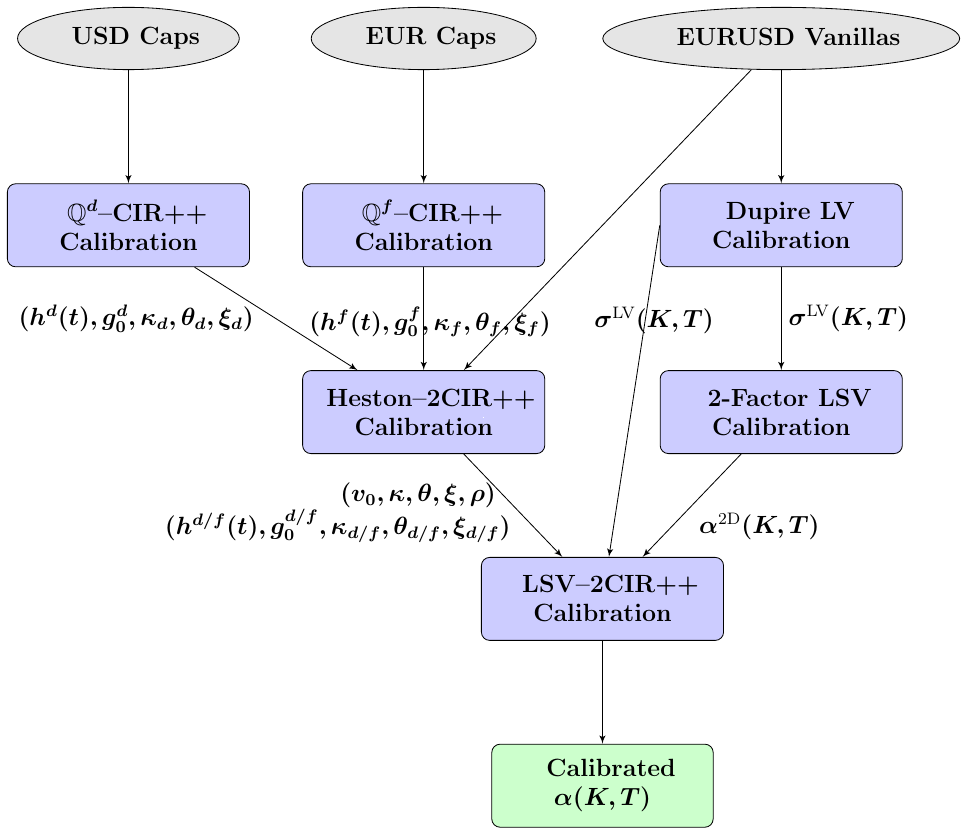}
\par\end{centering}

\caption{\label{fig:Calibration-routine-Flowchart}Full calibration routine
flowchart.}

\end{figure}

\begin{enumerate}
\item
Calibration of Heston-2CIR++ parameters:
\begin{enumerate}
\item
calibrate shifted CIR++ model for domestic and foreign short rates separately
(Appendix
\ifarxivvar
\ref{subsec:rate});
\else
D in \cite{cmr18});
\fi
\item
calibrate Heston-2CIR++ SV model assuming volatility, domestic and foreign
short rates are all independent processes
(Appendix
\ifarxivvar
\ref{subsec:hybridsv});
\else
F in \cite{cmr18});
\fi
\end{enumerate}
\item
Calibrate local volatility assuming time-dependent domestic and foreign short
rates (Appendix
\ifarxivvar
\ref{sub:Appendix-Local-volatility-calibration});
\else
E in \cite{cmr18});
\fi
\item
Calibration of Heston-2CIR++ LSV model:
\begin{enumerate}
\item
calibrate the leverage function of 2-factor Heston LSV model (\ref{eq:modelDefinitionLSVDetRates}) assuming
time-dependent domestic and foreign rates using local volatility from
Step 2 (Section~\ref{sub:2-Factor-Heston-LSV-Calibration});
\item
calibrate the leverage function of 4-factor LSV-2CIR++ LSV model (\ref{eq:Model Definition})
using Heston-2CIR++ parameters from Step 1, the local volatility from
Step 2 and leverage function of 2-factor Heston LSV model as a control
variate (Sections \ref{sec:Fast-Calibration} and~\ref{sec:Calibration-Results}).
\end{enumerate}
\end{enumerate}

\subsection{A necessary and sufficient condition for exact calibration}
\label{sec:nec_suff}

In the following, we give the main formula that links market call prices, via the Dupire local volatility, to prices under (\ref{eq:Model Definition}). 

%


In \cite{Guyon2011}, the following calibration condition is given\footnote{The (equivalent) context there is an equity with stochastic short rate and dividends.}: 
\begin{eqnarray}
\alpha^{2}\left(K,T\right) &=& \frac{\mathbb{E}^{\mathbb{Q}^{d}}\left[D_{T}^{d}\,|\,S_{T}=K\right]}{\mathbb{E}^{\mathbb{Q}^{d}}\left[D_{T}^{d}V_{T}\,|\,S_{T}=K\right]}\left(\sigma_{LV}\left(K,T\right)^{2}+ 
\frac{\mathbb{E}^{\mathbb{Q}^{d}}[Q_T]}{\frac{1}{2}K^{2}\frac{\partial^{2}C_{LV}}{\partial K^{2}}}
\right)\,,\label{eq:alphaFormula}
\end{eqnarray}
where 
$\sigma_{LV}$ is 
a local volatility as in (\ref{eq:modelDefinitionLocalVolatility}), and
\begin{eqnarray}
\label{eqn:Q}
Q_T &=&
D_{T}^{d} \left(r_{T}^{f}-\bar{r}^{f}\left(T\right)\right)\left(S_{T}-K\right)^{+}
-K D_{T}^{d}\mathbf{1}_{S_{T}\geq K}\left[\left(r_{T}^{d}-\bar{r}^{d}\left(T\right)\right)-\left(r_{T}^{f}-\bar{r}^{f}\left(T\right)\right)\right]. 
\end{eqnarray}

\begin{assumption}
\label{main-ass}
$\alpha$ is Lipschitz and uniformly
bounded by $\alpha_{\max}$, $h^{d,f}$ are uniformly bounded and
both the marginal density $\phi(\cdot,T)$ of $S_T$
in (\ref{eq:Model Definition}) and $\mathbb{E}^{\mathbb{Q}^{d}}\left[D_{T}^{d}V_{T}\,|\,S_{T}= \cdot \right]$
are continuous.
\end{assumption}
We note that the continuity and positivity in $\mathbb{R}^3_+$ of the joint density $\psi(\cdot,\cdot,\cdot,T)$ of $\left(S_{T}, V_{T}, D_{T}^{d}\right)$ in (\ref{eq:Model Definition})
is sufficient for $\mathbb{E}^{\mathbb{Q}^{d}}\left[D_{T}^{d}V_{T}\,|\,S_{T}= \cdot \right]$ to be continuous. 

We also define
$\varphi=2+\sqrt{2}$, $\zeta=\xi\alpha_{\max}$, and
\begin{equation}
\label{explosion_time}
\begin{cases}
T^{*}=\frac{2}{\sqrt{\varphi^{2}\zeta^{2}-\kappa^{2}}}\left[\frac{\pi}{2}+\arctan\left(\frac{\kappa}{\sqrt{\varphi^{2}\zeta^{2}-\kappa^{2}}}\right)\right], & \text{if\, }\kappa<\varphi\zeta,\\
T^{*}=\infty, & \text{if\, }\kappa\geq\varphi\zeta,
 \end{cases}
\end{equation}
which is a lower bound for the explosion time of $S_t^2$ (see \cite{Cozma2015}).

We prove the following theoretical results in Appendix \ref{app:proof}:

\begin{proposition}
\label{prop:fwd_eq}
Under Assumption \ref{main-ass}, 
the call
price $C\left(K,T\right)$ under model (\ref{eq:Model Definition})
satisfies 
\begin{eqnarray}
&&\frac{\partial C\left(K,T\right)}{\partial T}-\frac{1}{2}\alpha^2\left(K,T\right) K^{2}\frac{\mathbb{E}^{\mathbb{Q}^{d}}\left[D_{T}^{d}V_{T}\,|\,S_{T}=K\right]}{\mathbb{E}^{\mathbb{Q}^{d}}\left[D_{T}^{d}\,|\,S_{T}=K\right]}\frac{\partial^{2}C\left(K,T\right)}{\partial K^{2}}\label{eq:stoVolDupirePDE}\\
&&\hspace{4 cm} +\;\; \mathbb{E}^{\mathbb{Q}^{d}}\left[D_{T}^{d}r_{T}^{f}\left(S_{T}-K\right)^{+}\right]-\mathbb{E}^{\mathbb{Q}^{d}}\left[D_{T}^{d}\mathbf{1}_{S_{T}\geq K}K\left(r_{T}^{d}-r_{T}^{f}\right)\right]  =  0\,\nonumber 
\end{eqnarray}
for any strike $K>0$ and maturity $T<T^{*}$, with $T^{*}$ given by (\ref{explosion_time}).
\end{proposition}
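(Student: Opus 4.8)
The plan is to obtain (\ref{eq:stoVolDupirePDE}) from an It\^o--Tanaka expansion of the discounted payoff, in the spirit of the classical derivation of Dupire's equation, but carrying the discount factor $D^{d}$ and the stochastic variance $V$ through the computation. First I would apply Tanaka's formula to $(S_{t}-K)^{+}$ and the integration-by-parts (product) formula to $D_{t}^{d}(S_{t}-K)^{+}$; since $t\mapsto D_{t}^{d}=\mathrm{e}^{-\int_{0}^{t}r_{u}^{d}\,du}$ is continuous and of finite variation there is no covariation term, and substituting $dS_{t}=S_{t}(r_{t}^{d}-r_{t}^{f})\,dt+S_{t}\alpha(S_{t},t)\sqrt{V_{t}}\,dW_{t}$ together with the identity $S_{t}\mathbf{1}_{S_{t}>K}=(S_{t}-K)^{+}+K\mathbf{1}_{S_{t}>K}$, the finite-variation part collapses to $\big(-D_{t}^{d}r_{t}^{f}(S_{t}-K)^{+}+D_{t}^{d}(r_{t}^{d}-r_{t}^{f})K\mathbf{1}_{S_{t}>K}\big)\,dt$, leaving the local martingale $\int_{0}^{\cdot}D_{t}^{d}\mathbf{1}_{S_{t}>K}S_{t}\alpha(S_{t},t)\sqrt{V_{t}}\,dW_{t}$ and the singular term $\tfrac{1}{2}\int_{0}^{\cdot}D_{t}^{d}\,dL_{t}^{K}$, with $L^{K}$ the semimartingale local time of $S$ at $K$. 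Taking expectations over $[0,T]$ gives an integral identity for $C(K,T)$, \emph{provided} the stochastic integral is a true martingale.

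That proviso is the crux, and it is where $T<T^{*}$ enters. It suffices to show $\mathbb{E}\!\int_{0}^{T}(D_{t}^{d})^{2}S_{t}^{2}\alpha(S_{t},t)^{2}V_{t}\,dt<\infty$. Boundedness of $h^{d}$ together with $g^{d}\geq0$ makes $D^{d}$ bounded on $[0,T]$, and $\alpha\leq\alpha_{\max}$, so this reduces to bounding $\mathbb{E}\!\int_{0}^{T}(D_{t}^{d}S_{t})^{2}V_{t}\,dt$. Writing $D_{t}^{d}S_{t}=D_{t}^{f}\widetilde{M}_{t}$ with $\widetilde{M}_{t}=S_{t}D_{t}^{d}/D_{t}^{f}=S_{0}\,\mathcal{E}\!\big(\int_{0}^{\cdot}\alpha(S_{s},s)\sqrt{V_{s}}\,dW_{s}\big)_{t}$ an exponential local martingale (hence a supermartingale, with finite expectations), and using that $D^{f}$ is bounded and $V$ has finite moments of every order, a H\"older argument reduces the claim to $\sup_{t\leq T}\mathbb{E}[\widetilde{M}_{t}^{\,\varphi}]<\infty$ with $\varphi=2+\sqrt{2}$ --- which is precisely the moment-explosion bound of \cite{Cozma2015}, valid for $T<T^{*}$ with $\zeta=\xi\alpha_{\max}$ as in (\ref{explosion_time}). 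Beyond $T^{*}$ the integrand may fail to be integrable and the identity genuinely breaks down; Assumption \ref{main-ass} (bounded, Lipschitz $\alpha$; bounded $h^{d,f}$) is exactly what makes these a priori estimates available, and I expect this to be the main obstacle.

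For the local time term I would use the occupation times formula: with $\rho_{\varepsilon}\to\delta_{K}$ a mollifier, $L_{t}^{K}=\lim_{\varepsilon\to0}\int_{0}^{t}\rho_{\varepsilon}(S_{s}-K)\,d\langle S\rangle_{s}$ and $d\langle S\rangle_{s}=S_{s}^{2}\alpha(S_{s},s)^{2}V_{s}\,ds$, so by continuity of $D^{d}$, Tonelli, conditioning on $S_{t}$, and the continuity near $K$ of $x\mapsto x^{2}\alpha(x,t)^{2}\,\mathbb{E}[D_{t}^{d}V_{t}\mid S_{t}=x]\,\phi(x,t)$ (which holds for a.e.\ $t\leq T$ under Assumption \ref{main-ass} and the remark following it on the joint density), one obtains $\mathbb{E}\!\int_{0}^{T}D_{t}^{d}\,dL_{t}^{K}=\int_{0}^{T}K^{2}\alpha(K,t)^{2}\,\mathbb{E}^{\mathbb{Q}^{d}}[D_{t}^{d}V_{t}\mid S_{t}=K]\,\phi(K,t)\,dt$. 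Differentiating the integral identity in $T$ (the integrands being continuous in $T$ under Assumption \ref{main-ass}) then yields $\partial_{T}C$ equal to $\mathbb{E}[-D_{T}^{d}r_{T}^{f}(S_{T}-K)^{+}+D_{T}^{d}(r_{T}^{d}-r_{T}^{f})K\mathbf{1}_{S_{T}>K}]$ plus $\tfrac{1}{2}K^{2}\alpha(K,T)^{2}\,\mathbb{E}[D_{T}^{d}V_{T}\mid S_{T}=K]\,\phi(K,T)$.

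It remains to bring in the second $K$-derivative. Differentiating $C(K,T)=\mathbb{E}[D_{T}^{d}(S_{T}-K)^{+}]$ twice in $K$ --- legitimate by dominated convergence, since $D_{T}^{d}S_{T}=D_{T}^{f}\widetilde{M}_{T}$ is integrable, and then by the fundamental theorem of calculus using the continuity in Assumption \ref{main-ass} --- gives $\partial_{K}C=-\mathbb{E}[D_{T}^{d}\mathbf{1}_{S_{T}>K}]$ and $\partial_{K}^{2}C=\mathbb{E}^{\mathbb{Q}^{d}}[D_{T}^{d}\mid S_{T}=K]\,\phi(K,T)$. Substituting $\phi(K,T)=\partial_{K}^{2}C/\mathbb{E}^{\mathbb{Q}^{d}}[D_{T}^{d}\mid S_{T}=K]$ into the diffusion term, moving the rate terms to the left-hand side, and using $\mathbf{1}_{S_{T}>K}=\mathbf{1}_{S_{T}\geq K}$ a.s.\ (since $S_{T}$ has a density) produces exactly (\ref{eq:stoVolDupirePDE}). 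The martingale/moment-explosion step is the delicate one; the remainder is a careful but standard localisation-and-continuity argument.
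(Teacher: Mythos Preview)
Your proposal is correct and follows essentially the same route as the paper: It\^o--Tanaka for $(S_t-K)^+$, the product rule with $D^d$, the moment-explosion bound from \cite{Cozma2015} combined with H\"older to show the stochastic integral is a true martingale for $T<T^*$, an occupation-times argument to convert the weighted local-time integral into $K^2\alpha(K,t)^2\,\mathbb{E}[D_t^d V_t\mid S_t=K]\,\phi(K,t)$, and finally the identification $\partial_K^2 C=\mathbb{E}[D_T^d\mid S_T=K]\,\phi(K,T)$. The paper packages the weighted local-time step as two separate auxiliary results and invokes the \cite{Cozma2015} bound directly on $D^d S$ (for some $\omega>2$) rather than via your exponential-martingale factorisation $D^d S=D^f\widetilde{M}$, but these are presentational rather than substantive differences.
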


\begin{theorem}
\label{thm:alphaFormula}  
Under Assumption \ref{main-ass},
the call
price $C\left(K,T\right)$ under model (\ref{eq:Model Definition})
matches the 
price $C^{LV}$ under the local volatility model (\ref{eq:modelDefinitionLocalVolatility})
for any strike $K>0$ and maturity $T<T^{*}$ only if (\ref{eq:alphaFormula}) holds for all $K,T>0$.

If (\ref{eq:stoVolDupirePDE}) has a unique solution, then the condition (\ref{eq:alphaFormula}) is also sufficient.
\end{theorem}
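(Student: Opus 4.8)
The plan is to derive both implications from the forward equation of Proposition~\ref{prop:fwd_eq} for $C(K,T)$ together with the Dupire equation~(\ref{eq:dupirePDE}) for $C_{LV}=C^{LV}$; the only genuinely analytic ingredient is a uniqueness statement for a linear parabolic Cauchy problem with the (frozen) coefficients of~(\ref{eq:stoVolDupirePDE}). As a preliminary I would record two identities. Differentiating $C(K,T)=\mathbb{E}^{\mathbb{Q}^{d}}[D_{T}^{d}(S_{T}-K)^{+}]$ under the expectation—legitimate because $D_{T}^{d}\in L^{1}$ and $(S_{T}-K)^{+}$ is Lipschitz in $K$ uniformly in $\omega$, and $\phi(\cdot,T)$ continuous gives $\mathbb{Q}^{d}(S_{T}=K)=0$—yields $\partial C/\partial K=-\mathbb{E}^{\mathbb{Q}^{d}}[D_{T}^{d}\mathbf{1}_{S_{T}\geq K}]$. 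Inserting this together with $C=\mathbb{E}^{\mathbb{Q}^{d}}[D_{T}^{d}(S_{T}-K)^{+}]$ into the definition~(\ref{eqn:Q}) of $Q_{T}$ and taking expectations gives
\[
\mathbb{E}^{\mathbb{Q}^{d}}[Q_{T}]=R(K,T)-\bar r^{f}(T)\,C(K,T)-K\bigl(\bar r^{d}(T)-\bar r^{f}(T)\bigr)\frac{\partial C}{\partial K}(K,T),
\]
where $R(K,T):=\mathbb{E}^{\mathbb{Q}^{d}}[D_{T}^{d}r_{T}^{f}(S_{T}-K)^{+}]-K\,\mathbb{E}^{\mathbb{Q}^{d}}[D_{T}^{d}\mathbf{1}_{S_{T}\geq K}(r_{T}^{d}-r_{T}^{f})]$ is the pair of ``rate terms'' appearing in~(\ref{eq:stoVolDupirePDE}). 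This identity is what lets one trade those rate terms against $\mathbb{E}^{\mathbb{Q}^{d}}[Q_{T}]$.

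For \emph{necessity}, assume $C\equiv C_{LV}$ for all $K>0$ and $T<T^{*}$. Substituting $C_{LV}$ for $C$ in~(\ref{eq:stoVolDupirePDE}), using the identity above with $C$ replaced by $C_{LV}$ (the call price of~(\ref{eq:Model Definition}) being $C_{LV}$ by assumption), and subtracting the Dupire equation~(\ref{eq:dupirePDE}), the $\partial C_{LV}/\partial T$, the first-order and the zero-order terms cancel and one is left with
\[
\tfrac12 K^{2}\frac{\partial^{2}C_{LV}}{\partial K^{2}}\left(\sigma_{LV}(K,T)^{2}-\alpha^{2}(K,T)\,\frac{\mathbb{E}^{\mathbb{Q}^{d}}[D_{T}^{d}V_{T}\,|\,S_{T}=K]}{\mathbb{E}^{\mathbb{Q}^{d}}[D_{T}^{d}\,|\,S_{T}=K]}\right)+\mathbb{E}^{\mathbb{Q}^{d}}[Q_{T}]=0.
\]
Dividing by $\tfrac12 K^{2}\,\partial^{2}C_{LV}/\partial K^{2}>0$ (strict convexity of call prices in strike) and by $\mathbb{E}^{\mathbb{Q}^{d}}[D_{T}^{d}V_{T}\,|\,S_{T}=K]>0$ (continuity and positivity of the joint density of $(S_{T},V_{T},D_{T}^{d})$, as in Assumption~\ref{main-ass}), then solving for $\alpha^{2}(K,T)$, gives exactly~(\ref{eq:alphaFormula}) on the range where Proposition~\ref{prop:fwd_eq} applies, i.e.\ up to the explosion bound $T^{*}$ of~(\ref{explosion_time}).

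For \emph{sufficiency}, fix $\alpha$ satisfying~(\ref{eq:alphaFormula}); then model~(\ref{eq:Model Definition}) is fully specified, all the conditional and plain expectations in~(\ref{eq:stoVolDupirePDE}) and~(\ref{eq:alphaFormula}) become determined functions of $(K,T)$, and $C$ solves~(\ref{eq:stoVolDupirePDE}) by Proposition~\ref{prop:fwd_eq}. Rewriting the second-order coefficient of~(\ref{eq:stoVolDupirePDE}) as $\tfrac12\sigma_{LV}^{2}K^{2}+\mathbb{E}^{\mathbb{Q}^{d}}[Q_{T}]/(\partial^{2}C_{LV}/\partial K^{2})$ via~(\ref{eq:alphaFormula}), using the $Q_{T}$-identity for the rate terms, subtracting the Dupire equation for $C_{LV}$, and noting $1-(\partial^{2}C/\partial K^{2})/(\partial^{2}C_{LV}/\partial K^{2})=-(\partial^{2}u/\partial K^{2})/(\partial^{2}C_{LV}/\partial K^{2})$ for $u:=C-C_{LV}$, the lower-order and inhomogeneous pieces recombine and one finds that $u$ solves the \emph{homogeneous} linear parabolic equation
\[
\frac{\partial u}{\partial T}-\tfrac12\alpha^{2}(K,T)K^{2}\,\frac{\mathbb{E}^{\mathbb{Q}^{d}}[D_{T}^{d}V_{T}\,|\,S_{T}=K]}{\mathbb{E}^{\mathbb{Q}^{d}}[D_{T}^{d}\,|\,S_{T}=K]}\,\frac{\partial^{2}u}{\partial K^{2}}+K\bigl(\bar r^{d}(T)-\bar r^{f}(T)\bigr)\frac{\partial u}{\partial K}+\bar r^{f}(T)\,u=0,
\]
with $u(\cdot,0)\equiv0$ and the usual decay at $K\to0,\infty$. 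This is~(\ref{eq:stoVolDupirePDE}) with its coefficients frozen, up to lower-order terms with bounded coefficients that do not affect well-posedness, so the assumed uniqueness forces $u\equiv0$, i.e.\ $C\equiv C_{LV}$.

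The hard part will be the self-referential character of~(\ref{eq:alphaFormula}): the conditional expectations $\mathbb{E}^{\mathbb{Q}^{d}}[D_{T}^{d}\,|\,S_{T}=K]$, $\mathbb{E}^{\mathbb{Q}^{d}}[D_{T}^{d}V_{T}\,|\,S_{T}=K]$ and the term $\mathbb{E}^{\mathbb{Q}^{d}}[Q_{T}]$ are all taken under the very model whose leverage function is being constrained, so the argument must consistently treat them as given data—once $\alpha$ is fixed (sufficiency) or once $C=C_{LV}$ is assumed (necessity)—and must keep the distinction between $\partial^{2}C/\partial K^{2}$ and $\partial^{2}C_{LV}/\partial K^{2}$ straight through the manipulations of~(\ref{eq:alphaFormula}). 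The remaining points—differentiation under the expectation, strict positivity of $\mathbb{E}^{\mathbb{Q}^{d}}[D_{T}^{d}V_{T}\,|\,S_{T}=K]$ and of $\partial^{2}C_{LV}/\partial K^{2}$, and the restriction $T<T^{*}$ that rules out moment explosion—are all supplied by Assumption~\ref{main-ass} and Proposition~\ref{prop:fwd_eq}.
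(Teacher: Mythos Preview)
Your proposal is correct and follows essentially the same route as the paper: necessity by subtracting Dupire's equation (\ref{eq:dupirePDE}) from the forward equation (\ref{eq:stoVolDupirePDE}), and sufficiency by appealing to uniqueness. Your preliminary $Q_T$-identity is exactly the paper's observation that $\mathbb{E}^{\mathbb{Q}^d}[\overline{Q}_T]=\mathbb{E}^{\mathbb{Q}^d}[Q_T]$ (using $\mathbb{E}^{\mathbb{Q}^d}[D_T^d\mathbf{1}_{S_T\geq K}]=-\partial C/\partial K$ and $\mathbb{E}^{\mathbb{Q}^d}[D_T^d(S_T-K)^+]=C$), just isolated up front rather than at the end.

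The one place where you go further than the paper is the sufficiency direction: the paper dispatches it in a single sentence (``re-tracing the steps in reverse order''), whereas you actually derive the homogeneous equation satisfied by $u=C-C_{LV}$. Your derivation is correct, but note that the equation you obtain for $u$ has the deterministic lower-order terms $K(\bar r^d-\bar r^f)\partial_K u+\bar r^f u$, while the homogeneous version of (\ref{eq:stoVolDupirePDE}) itself has \emph{no} lower-order terms (the rate expectations there are inhomogeneous data, not operators on $C$). So strictly speaking you are invoking uniqueness for an equation that differs from (\ref{eq:stoVolDupirePDE}) by bounded lower-order terms. Your parenthetical acknowledgement of this is honest; one could argue the paper's one-line reversal glosses over the same issue, since ``re-tracing'' literally would require $\mathbb{E}^{\mathbb{Q}^d}[\overline{Q}_T]=\mathbb{E}^{\mathbb{Q}^d}[Q_T]$, which already presupposes $C=C_{LV}$. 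In both cases the implicit reading is that the uniqueness hypothesis is meant as a well-posedness statement for the principal part, under which adding bounded first- and zero-order coefficients is harmless.
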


Uniqueness of the solution $C$ to the heat equation (\ref{eq:stoVolDupirePDE}) is normally expected under sufficient regularity of the diffusion coefficient and under a growth condition.

The condition (\ref{eq:alphaFormula}) expresses when a model of the form (\ref{eq:Model Definition}) with \emph{exogenously given} $\alpha$ is consistent with market prices, which are expressed through the local volatility function $\sigma_{LV}$.
We make no claim about the existence of such a model (see also Remark \ref{rem:high-xi} below), and note that $\alpha$ enters (\ref{eq:alphaFormula}) not only explicitly but also through the $\mathbb{Q}^{d}$-expectations.
Existence of a calibrated model is linked to the existence of a solution to the McKean-Vlasov equation which results when inserting $\alpha$ \emph{defined endogenously} by (\ref{eq:alphaFormula}) in terms of $\sigma_{LV}$ and the model itself into (\ref{eq:Model Definition}). In \cite{abergel2010nonlinear}, the existence of a short-time solution of the associated Fokker-Planck equation for the density of LSV processes of this type is shown under certain regularity assumptions. The upper bound on the time in \cite{abergel2010nonlinear} is needed to guarantee that the density stays strictly positive from an assumed strictly positive initial condition, and has no direct link to $T^*$ in this paper.

The ratio on the right-hand side of (\ref{eq:alphaFormula}) accounts for the stochastic volatility;
if there is no stochastic volatility
(i.e.\ $V_{T}=1$), we recover the formula in \cite{Clark2010}.
The term $Q_T$ 
accounts for the stochastic rates and,
if rates are deterministic, $Q_T=0$ and we recover the formula derived in \cite{Dupire},
\begin{equation}
\alpha^{2D}\left(K,T\right)=\frac{\sigma_{LV}\left(K,T\right)}{\sqrt{\mathbb{E}^{\mathbb{Q}^{d}}\left[V_{T}^{2D}\,|\,S_{T}^{2D}=K\right]}}\,.
\label{eq:alphaFormula2D}
\end{equation}
At time $T=0$, 
from $r_{0}^{d}-\bar{r}^{d}\left(0\right)=r_{0}^{f}-\bar{r}^{f}\left(0\right)=0$ we get $\alpha\left(K,0\right)=\sigma_{LV}\left(K,0\right)/\sqrt{v_{0}}.$

Theorem \ref{thm:alphaFormula} provides technical conditions for the formula presented in
\cite{Guyon2011}, where a formal proof is given without specification of the rates processes.
Here, we consider specifically
an LSV--2{CIR\scalebox{.9}{\raisebox{.5pt}{++}}}
model and derive the result rigorously.
In Lemma \ref{prop:TrueMartingale} we provide a sufficient condition for the
process 
\[
\int_{0}^{t}\mathbf{1}_{S_{u}\geq K}D_{u}^{d}S_{u}\alpha\left(S_{u},u\right)\sqrt{V_{u}}\,dW_{u}
\]
to be a true martingale up to $T^{*}$, which is an important step
in the proof of Theorem \ref{thm:alphaFormula}. 
On the one hand, $T^{*}$ is a lower bound for the explosion time of the second moment
of the discounted spot process $D_{t}^{d}S_{t}$. On the other hand,
\cite{Andersen2007a} show that the moment explodes in finite time
for the Heston model, a property that is inherited by our Heston-type
LSV--2{CIR\scalebox{.9}{\raisebox{.5pt}{++}}}
model (\ref{eq:Model Definition}) as well as the Heston-type LSV--2Hull--White model
in \cite{Deelstra2013}. Therefore, the formula may not hold for certain
values of the model parameters and for large maturities $T$. However,
in practice, $T^{*}$ is very large. For instance, from our calibration
given in Section \ref{sec:Calibration-Results} we obtain
$\kappa=1.4124$, $\xi=0.2988$, $\alpha_{\max}=1.40$, such that $T^{*}=28.6.$

\begin{remark}
\label{rem:high-xi}
A numerical experiment in \cite{Guyon2011} raises the question of the
existence of a calibrated 2-factor LSV model for large $\xi$ (there, $\xi\approx350\%$
is used to match forward smiles).
In this particular case and with the other model parameters kept the
same, we find $T^{*}=0.20$, which indicates that moment explosions
may occur sooner.
\end{remark}

\section{Fast calibration with a new control variate particle method\label{sec:Fast-Calibration}}

In this and the next section, we describe two of the main components of the calibration routine. 
We recall the calibration condition (\ref{eq:alphaFormula}),
which involves conditional expectations as well as standard expectations, which have to be estimated under model (\ref{eq:Model Definition}).

First, we describe the basic particle method for the estimation of these expectations. Then, we present
the various control variates, building on intermediary calibration steps, which we use in order to reduce
the computational cost of the calibration of $\alpha$ in the $4$-factor model (\ref{eq:Model Definition}).

Therefore, we require the prior calibration of the interest rate models in (\ref{eq:Model Definition}), the LV model (\ref{eq:modelDefinitionLocalVolatility}), the
Heston-2CIR++ model, 
and the LSV model (\ref{eq:modelDefinitionLSVDetRates}).
The calibration of the latter via a PDE is detailed in Section \ref{sub:2-Factor-Heston-LSV-Calibration},
while we refer to Appendices 
\ifarxivvar
\ref{sub:CIR++-Model-Calibration}, \ref{sub:Appendix-Local-volatility-calibration}, and \ref{sub:Heston-2CIR++-Calibration} 
\else
D, E, and F of \cite{cmr18}
\fi
for the former three.


Equation (\ref{eq:alphaFormula}) contains the local volatility, which
can be obtained from derivatives of market prices
from (\ref{eq:dupirePDE})
by re-arranging it (into Dupire's formula), and explicitly the second derivative of market prices with respect to strike.
Different approximation approaches are used in practice, e.g., one writes the formulae in terms of the implied volatility,
and uses a smooth parametrisation for the differentiation.
Here, we first calibrate a parametrisation of the local volatility model with
a fixed-point iteration as in \cite{Rehai2006,Tur2014} 
and then use $\frac{\partial^{2}C_{LV}}{\partial K^{2}}$ obtained from
the solution of the forward PDE (\ref{eq:dupirePDE}) with a smoothing scheme
(see Appendix
\ifarxivvar
\ref{sub:Appendix-Local-volatility-calibration}).
\else
E of \cite{cmr18}).
\fi



\subsection{Calibration by particle method\label{sub:Calibration-Algorithm-with-CVParticleMethod}}


A calibrated $\alpha$ is implicitly defined by (\ref{eq:alphaFormula}),
where the right-hand side depends on $\alpha$ in a non-linear way
through the (conditional) expectations. 
Formal insertion of the calibration formula into the SDE (\ref{eq:Model Definition})
leads to a process where the diffusion coefficient depends on the distribution of the joint process
$X_{t}=(S_{t},V_{t},r_{t}^{d},r_{t}^{f}, D_t^d)$.
The process
thus falls in the class of 
McKean-Vlasov processes 
\cite{McKean1966}.

The existence and uniqueness of the solution for this McKean-Vlasov SDE are not
established theoretically, to the best of our knowledge.
From an empirical perspective, in \cite{Guyon2011} and in Section 11.8 of \cite{GuyonLabordere2013} the authors
encountered problems for very high values of $\xi$; 
see Remark \ref{rem:high-xi}.
%
In our case, for $\xi$ 
calibrated to market smiles ($\approx30\%$)  we are able to reach a high
accuracy. 


The particle method for processes of this type was introduced{
in \cite{McKean1966} and is discussed in Chapter 2, Section 3 of  \cite{Sznitman1991}; it was applied to LSV model calibration
in} \cite{Guyon2011} and in Section 11.6 of \cite{GuyonLabordere2013}.

We define $N$-sample path approximations of $X_{t}$ as
$\left(X_{t}^{i,N}\right)_{i\in\llbracket1,N\rrbracket}=(S_{t}^{i},V_{t}^{i},r_{t}^{d,i},r_{t}^{f,i},D_t^{d,i})_{i\in\llbracket1,N\rrbracket}$
by the ($5\times N$)-dimensional
SDE 
\begin{eqnarray}
\qquad
\begin{cases}
\cfrac{dS_{t}^{i}}{S_{t}^{i}}=\left(r_{t}^{d,i}-r_{t}^{f,i}\right)\,dt+\hat{\alpha}_{N}\!\left(S_{t}^{i},t,\left(X_{t}^{j,N}\right)_{j\leq N}\right)\sqrt{V_{t}^{i}}\,dW_{t}^{i}\\
r_{t}^{d,i}=g_{t}^{d,i}+h^{d}\left(t\right)\\
r_{t}^{f,i}=g_{t}^{f,i}+h^{f}\left(t\right)\\
dg_{t}^{d,i}=\kappa_{d}\left(\theta_{d}-g_{t}^{d,i}\right)\,dt+\xi_{d}\sqrt{g_{t}^{d,i}}\,dW_{t}^{\gd,i}\\
dg_{t}^{f,i}=\left(\kappa_{f}\left(\theta_{f}-g_{t}^{f,i}\right)-\rho_{Sf}\xi_{f}\sqrt{g_{t}^{f,i}}\hat{\alpha}_{N}\!\left(S_{t}^{i},t,\left(X_{t}^{j,N}\right)_{j\leq N}\right)\sqrt{V_{t}^{i}}\right)\,dt+\xi_{f}\sqrt{g_{t}^{f,i}}\,dW_{t}^{\gf,i}\\
dV_{t}^{i}=\kappa\left(\theta-V_{t}^{i}\right)\,dt+\xi\sqrt{V_{t}^{i}}\,dW_{t}^{V,i} \\
dD_t^{d,i}=- r_t^{d,i} D_t^{d,i} \, dt \,, &
\end{cases}
\label{part_sys}
\end{eqnarray}
where $(W_{t}^{i},W_{t}^{\gd,i},W_{t}^{\gf,i},W_{t}^{V,i})$, ${i\in\llbracket1,N\rrbracket}$
are $N$ i.i.d.\ copies of the four correlated Brownian motions,
and $\hat{\alpha}_{N}$ is an estimator for $\alpha$ based on $\left(X_{t}^{i,N}\right)_{i\leq N}$,
\begin{eqnarray}
\label{estimator-alpha}
\hat{\alpha}_{N}\left(K,T,\left(X_{t}^{i,N}\right)_{i\leq N}\right)=\sqrt{\frac{\sigma_{LV}\left(K,T\right)^{2}}{\hat{p}_{N}\left(K,T\right)}+
\frac{\widehat{Q}\left(K,T\right)}{\frac{1}{2}\hat{p}_{N}\left(K,T\right)K^{2}\frac{\partial^{2}C_{LV}}{\partial K^{2}}}}\,,
\end{eqnarray}
with 
\begin{eqnarray}
\label{pNqN}
\widehat{Q}
=\frac{1}{N}\sum_{i=1}^{N} Q_{T}^{i} \qquad
\text{and} \qquad
\hat{p}_{N}\left(K,T\right)=\frac{\sum_{i=1}^{N}D_{T}^{d,i}V_{T}^{i}\delta_{N}\left(S_{T}^{i}-K\right)}{\sum_{i=1}^{N}D_{T}^{d,i}\delta_{N}\left(S_{T}^{i}-K\right)},
\end{eqnarray}
where
$\hat{p}_{N}$ is an estimator for
$\mathbb{E}^{\mathbb{Q}^{d}}\!\!\left[D_{T}^{d}V_{T}\,|\,S_{T}=K\right]/\,\mathbb{E}^{\mathbb{Q}^{d}}\!\!\left[D_{T}^{d}\,|\,S_{T}=K\right]$,
with $\delta_{N}$ a kernel function, and
$Q_{T}^{i}$ is the $i$-th sample of $Q_{T}$ from (\ref{eqn:Q}) based on $X_T^{i,N}$.

The paths of the $5\times N$-dimensional process
$(X_{t}^{i,N})_{i\leq N}$ are now entangled due to the dependence
on $\hat{\alpha}_{N}$ in $(X_{t}^{i,N})_{i\leq N}$. The
process can be seen as a system of $N$ interacting particles evolving
in a $5$-dimensional space, where particle $i$ is defined by its
position $X_{t}^{i,N}$. As in \cite{Guyon2011}, we will therefore
use the term ``particle'' instead of ``path''. 
Because of the four driving factors, we will keep referring to this as a 4-factor model in spite of the extra state variable $D^d$.

A central ingredient for proving convergence of the particle method is the chaos propagation property
(see Chapter 2, Section 3 of \cite{Sznitman1991}), which is not proven for the present case.


\subsection{Variance reduction for the Markovian projection}
\label{sub:Variance-Reduction-for-Markovian-Projection}

Our goal here is to reduce the variance of the estimator $\hat{p}_{N}$ from (\ref{pNqN}) to be able to use a minimal number of particles.

We assume that the 2-factor LSV model
(\ref{eq:modelDefinitionLSVDetRates}) is perfectly calibrated to
market call prices, i.e.\ that (\ref{eq:alphaFormula2D}) is satisfied.
Then we will use 
\begin{equation}
\hat{p}_{N}^{2D}\left(K,T\right)=\sum_{i=1}^{N}\frac{V_{T}^{2D,i}\delta_{N}\left(S_{T}^{2D,i}-K\right)}{\sum_{i=1}^{N}\delta_{N}\left(S_{T}^{2D,i}-K\right)},
\label{eq:p2D}
\end{equation}
which is an estimator for
\[
p^{2D}(K,T)=\mathbb{E}^{\mathbb{Q}^{d}}\left[V_{T}^{2D}\,|\,S_{T}^{2D}=K\right],
\]
as a control variate for $\hat{p}_{N}$, 
and $p^{2D}$ will be computed using a PDE solver.
The Kolmogorov forward equation for $p^{2D}$ is commonly used for the calibration of LSV models (see \cite{Clark2010,Ren&Madan2007,Wyns2016FV}),
and we propose in Section \ref{sub:2-Factor-Heston-LSV-Calibration} a new method which is tailored to the specific difficulties associated with density functions in Heston-style models.

We thus define a new estimator $p_{N}^{*}$ by
\begin{equation}
p_{N}^{*}(K,T)=\hat{p}_{N}\left(K,T\right)+\lambda\left(\hat{p}_{N}^{2D}\left(K,T\right)-p^{2D}\left(K,T\right)\right).\label{eq:Control-Variate-Formula}
\end{equation}
The latter has an asymptotically diminishing bias if we assume the particle method
to converge in distribution (and neglect the time stepping bias).

In order to get a good estimate for the optimal $\lambda$, we can rewrite the above estimator as 
\[
p_{N}^{*}=\frac{1}{N}\sum_{i=1}^{N}m_{i}+\lambda\left(\frac{1}{N}\sum_{i=1}^{N}m_{i}^{2D}-p^{2D}\right)
\]
with
\[
\begin{array}{cc} m_{i}=\frac{D_{T}^{d,i}V_{T}^{i}\delta_{N}\left(S_{T}^{i}-K\right)}{\frac{1}{N}\sum_{i=1}^{N}D_{T}^{d,i}\delta_{N}\left(S_{T}^{i}-K\right)},\quad & 
m_{i}^{2D}=\frac{V_{T}^{2D,i}\delta_{N}\left(S_{T}^{2D,i}-K\right)}{\frac{1}{N}\sum_{i=1}^{N}\delta_{N}\left(S_{T}^{2D,i}-K\right)}\end{array} \, ,
\]
which mimics the standard Monte Carlo control variate form. We can
think of $m_{i}$ and $m_{i}^{2D}$ roughly as samples of two random variables
$m$ and $m^{2D}$ respectively (but note they are not independent, although for large $N$ the correlation is very low),
and for the best variance reduction (see Section 4.1 in \cite{Glasserman2004}), we take
\[
\lambda=-\frac{\text{Cov}\left(m,m^{2D}\right)}{\text{Var}\left(m^{2D}\right)}\,,
\]
which we can estimate by 
\begin{equation}
\hat{\lambda}_{N}=-\frac{\sum_{i=1}^{N}\left(m_{i}-\hat{p}_N \right)\left(m_{i}^{2D}-\hat{p}_N^{2D}\right)}{\sum_{i=1}^{N}\left(m_{i}^{2D}-\hat{p}_N^{2D}\right)^{2}}\,.
\label{eq:lambda-Markov-Proj}
\end{equation}
We recall that the expected variance reduction factor is
\begin{equation}
\frac{1}{1-\text{Corr}\left(m,m^{2D}\right)^{2}}\,.\label{eq:variance-reduction-factor}
\end{equation}
 Hence, if the stochastic rates are not highly volatile, i.e. if $\xi_{f}$
and $\xi_{d}$ are small enough, we expect a very good variance reduction
as the correlation between the particles generated by the 4-factor
hybrid LSV model (\ref{eq:Model Definition}) and by the 2-factor
LSV model (\ref{eq:modelDefinitionLSVDetRates}) will be high. Our
numerical tests performed on a model calibrated to recent EUR and
USD market data exhibit a correlation
between model (\ref{eq:Model Definition}) and
(\ref{eq:modelDefinitionLSVDetRates}) of $95\%$ up to $1.5$ years
and $50\%$ around $5$ years. Additionally, our stress test in Subsection
\ref{sub:Stress-scenario} suggests that even under high volatility
regimes for the rate processes, i.e. when $\xi_{f}$ and $\xi_{d}$ are large,
 the variance reduction brought by this control variate is significant.

\subsection{Variance reduction for standard expectations}
\label{sub:Variance-Reduction-for-Standard-Expectations}

Here, we discuss the variance reduction for the estimator
$\widehat{Q}$
in (\ref{pNqN}) for $\mathbb{E}^{\mathbb{Q}^{d}} \left[Q_T \right]$, where we repeat 
\begin{eqnarray}
\label{QX}
Q_T = 
\underbrace{D_{T}^{d}\left(r_{T}^{f}-\bar{r}^{f}\left(T\right)\right)\left(S_{T}-K\right)^{+}}_{=: X_{1,T}}
-K \underbrace{D_{T}^{d}\mathbf{1}_{S_{T}\geq K}\left[\left(r_{T}^{d}-\bar{r}^{d}\left(T\right)\right)-
\left(r_{T}^{f}-\bar{r}^{f}\left(T\right)\right)\right]}_{=: X_{2,T}},
\end{eqnarray}
from (\ref{eqn:Q}) for the convenience of the reader.
This is an estimator for a standard expectation (in contrast to conditional expectations).
We introduce control variates
\[
Y_{1,T} = D_{T}^{d}\left(S_{T}-K\right)^{+}, \qquad Z_{1,T}=\widehat{r}_{T}^{f}-\bar{r}^{f}\left(T\right)\,
\]
for $X_{1,T}$ defined in (\ref{QX}), and
\begin{eqnarray*}
Y_{2,T}=D_{T}^{d}\mathbf{1}_{S_{T}\geq K}, \qquad \quad
Z_{2,T}=\left(r_{T}^{d}-\bar{r}^{d}\left(T\right)\right)-\left(\widehat{r}_{T}^{f}-\bar{r}^{f}\left(T\right)\right)\,
\end{eqnarray*}
for $X_{2,T}$, and where $\widehat{r}^{f}$ is the foreign rate process without the quanto adjustment.\footnote{The last quantity
is introduced because the expectation of ${r}^{f}$ is not analytically available.}
We know that if the model (\ref{eq:Model Definition}) is perfectly calibrated
to call option prices, 
\begin{eqnarray*}
\mathbb{E}^{\mathbb{Q}^{d}}\left[Y_{1,T}\right] &=& C_{LV}\left(K,T\right), \\
\mathbb{E}^{\mathbb{Q}^{d}}\left[Y_{2,T}\right] &=&-\frac{\partial C_{LV}}{\partial K}\left(K,T\right),
\end{eqnarray*}
estimated from market data via a calibrated LV model.
The following are also analytically available:
\begin{eqnarray*}
\zeta_1  \equiv  \mathbb{E}^{\mathbb{Q}^{d}}\left[Z_{1,T}\right]&=&g_{\text{0}}^{f}e^{-\kappa_{f}T}+\theta_{f}\left(1-e^{-\kappa_{f}T}\right)+h^{f}\left(T\right)-\bar{r}^{f}\left(T\right), \\
\zeta_2  \equiv \mathbb{E}^{\mathbb{Q}^{d}}\left[Z_{2,T}\right]&=&g_{\text{0}}^{d}e^{-\kappa_{d}T}+\theta_{d}\left(1-e^{-\kappa_{d}T}\right)+h^{d}\left(T\right)-\bar{r}^{d}\left(T\right)-\zeta_{1}.
\end{eqnarray*}

We denote the Monte Carlo estimators of the corresponding $\mathbb{Q}^{d}$-expectations as
$\widehat{X}_1$, $\widehat{X}_2$, $\widehat{Y}_1$, $\widehat{Y}_2$, $\widehat{Z}_1$, $\widehat{Z}_2$,
respectively, 
using the same Brownian paths for $W,\;W^{V},\;W^{\gd},\;W^{\gf}$ in all estimators.

We can define a new Monte Carlo estimator ${Q}^{*}$
for $\mathbb{E}^{\mathbb{Q}^{d}}\left[Q_T\right]$ as
\begin{eqnarray}
{Q}^{*} & = & {X}_1^* - K {X}_2^{*},
\label{eq:Control-Variates-Formula-StoRatesPart}
\end{eqnarray}
with
\begin{eqnarray*}
{X}_1^{*} & = & \widehat{X}_1 +\lambda_{1} \left(\widehat{Y}_1 -C_{LV}\left(K,T\right)\right)+\eta_{1}\left(\widehat{Z}_1 - \zeta_{1}\right), \\
{X}_2^{*} & = & \widehat{X}_2 +\lambda_{2} \left(\widehat{Y}_2 +\frac{\partial C_{LV}\left(K,T\right)}{\partial K}\right)+\eta_{2}\left(\widehat{Z}_2-\zeta_{2}\right)\,.
\end{eqnarray*}
The weights $\lambda_{1}$, $\lambda_{2}$, $\eta_{1}$, $\eta_{2}$ above are chosen to minimize the variance of ${Q}^{*}$ (see \cite{Glasserman2004}).

This approach is particularly useful for out-of-the-money options
and digital options as the Monte Carlo estimator will exhibit higher
variance in these settings.

A computational problem arises if there is no particle with
$S_{T}>K$, which happens if $K$ is large and the total number of particles is relatively small (as will be the case with control variates),
since the estimators of
$\mathbb{V}\left[D_{T}^{d}\mathbf{1}_{S_{T}\geq K}\right]$ and $\mathbb{V}\left[D_{T}^{d}\left(S_{T}-K\right)^{+}\right]$
are then zero. In that case, we pick
\begin{eqnarray*}
\lambda_{1}=-\zeta_{1},\quad\eta_{1}=-C_{LV}\left(K,T\right)\,,\\
\lambda_{2}=-\zeta_{2},\quad\eta_{2}=\frac{\partial C_{LV}\left(K,T\right)}{\partial K}\,,
\end{eqnarray*}
such that both control variates are of the same order of magnitude.

\subsection{Implementation details}

{The leverage function $\alpha$ can in principle be computed for any $K$ and $T$ by the 
estimator (\ref{estimator-alpha}). However, for computational purposes, we
defined it in this way on a grid of points and interpolate it from there with cubic splines
in spot and piecewise constant in time.
We denote by $N_{T}$ the number of maturities. 
Then there are $N_{T}+1$ volatility
``slices'' in total such that we denote the $m$-th time slice
$\alpha\left(\cdot,T_{m}\right)$, by $\alpha_{m}$, represented numerically
as splines with $N_{S}$ nodes. While having $N_{S}$ too
small will lead to accuracy problems, choosing it too large will make
the surface rougher due to over-fitting. We find 25-30 points to provide
a good trade-off between accuracy and smoothness.
For a given $T_{m}$, the leverage function is thus defined on some interval
$\left[S_{\min}^{m},\,S_{\max}^{m}\right]$
and is extrapolated constant outside these bounds. Because we need more
grid points around the forward value and less around $S_{\min}^{m}$
and $S_{\max}^{m}$, we use a hyperbolic grid (with $\eta=0.05$, see
Appendix \ref{sub:Mesh-construction} for more details) refined
around the forward value 
\[
F_{m}=S_{0}e^{\int_{0}^{T_{m}}\left(\bar{r}^{d}\left(t\right)-\bar{r}^{f}\left(t\right)\right)dt}, \qquad
\text{with}  \qquad
S_{\min}^{m}=F_{m}e^{-3 \sigma_{F}\left(T_{m}\right)\sqrt{T_{m}}},\quad S_{\max}^{m}=F_{m}e^{3\sigma_{F}\left(T_{m}\right)\sqrt{T_{m}}},
\]
where $\sigma_{F}\left(T_{m}\right)$ is the at-the-money forward
market volatility for maturity $T_{m}$ (interpolated linearly in
variance).
Each of the grid values can be seen as a parameter and we denote them
by $\left(\alpha_{m,j}\right)_{m\leq N_{T},\,j\leq N_{S}}$ with the
associated spot grid values $\left(s_{m,j}\right)_{m\leq N_{T},\,j\leq N_{S}}$.}

We now give the calibration algorithm. 
As previously, we denote the 
particle system at time $T$ for the model (\ref{eq:Model Definition}){
by {\small{}$\left(S_{T}^{i},V_{T}^{i},r_{T}^{d,i},r_{T}^{f,i},D_{T}^{d,i}\right)_{i\leq N}$}.
Similarly, we denote the 2-factor particle system at time $T$ for the model
}(\ref{eq:modelDefinitionLSVDetRates}){
by {\small{}$\left(S_{T}^{2D,i},V_{T}^{2D,i}\right)_{i\leq N}$}. }

We work with an Gaussian kernel
\[
\delta_{N}\left(x,T\right)=\frac{e^{-\frac{1}{2}\left(\frac{x}{h_{N}\left(T\right)}\right)^{2}}}{h_{N}\left(T\right)\sqrt{2\pi}}\,,
\]
with a bandwidth given by a Silverman-type rule (see \cite{Silverman1986})
\[
h_{N}\left(T\right)=\eta S_{0}\sigma_{LV}\left(S_{0},T\right)\sqrt{\max\left(T,T_{\min}\right)}N^{-\frac{1}{5}},
\]
where 
$\eta=1.5$ and $T_{\min}=0.25$ in our tests.

 The step-by-step calibration is detailed in Algorithm \ref{alg:ParticleMethod}.

\begin{algorithm}[!htp]
\caption{$\alpha\left(s,T\right)$ Calibration with control variate particle
method\label{alg:ParticleMethod}}

\begin{algor}
\item [{{*}}] $\alpha\left(s,T_{1}=0\right)=\frac{\sigma_{LV}\left(s,0\right)}{\sqrt{v_{0}}}$
\item [{for}] ( $m=1\,;\,m\leq N_{T}\,;\,m++$) 

\begin{algor}
\item [{{*}}] \begin{raggedright}
\textbf{generate} $\left(Z,Z_{v},Z_{d},Z_{f}\right)_{i\leq N}$
and $\left(U\right)_{i\leq N}$, i.e. $4\times N$ independent draws
from $\mathcal{N}\left(0,1\right)$ and $N$ draws from $\mathcal{U}\left(\left[0,1\right]\right)$,
respectively\\

\par\end{raggedright}
\item [{{*}}] \textbf{evolve }the {4-factor particle
system} from $T_{m}$ to $T_{m+1}$ with $QE-$Scheme (\ref{eq:QE-Scheme})
where $\alpha\left(s,\left[T_{m},T_{m+1}\right[\right)=\alpha\left(s,T_{m}\right)$\\

\item [{{*}}] \textbf{evolve }the {2-factor particle}
system from $T_{m}$ to $T_{m+1}$ with $QE-$Scheme (\ref{eq:QE-Scheme})
with pre-computed $\alpha^{2D}$ and using $\left(Z,Z_{v},U\right)_{i\leq N}$\\

\item [{{*}}] \textbf{solve} the Dupire forward PDE (\ref{eq:dupirePDE})
from $T_{m}$ to $T_{m+1}$ for $C_{LV}$, $\frac{\partial C_{LV}}{\partial K}$
and $\frac{\partial^{2}C_{LV}}{\partial K^{2}}$\\

\item [{{*}}] \textbf{set }$T=T_{m+1}$
\item [{for}] ( $j=1\,;\,j\leq N_{S}\,;\,j++$)

\begin{algor}
\item [{{*}}] \textbf{set} $K=s_{m+1,j}$

\item [{{*}}] \textbf{compute }as in (\ref{eq:alphaFormula2D})
\vspace{-1em}
\[
p^{2D}=\mathbb{E}^{\mathbb{Q}^{d}}\left[V_{T}^{2D}\,|\,S_{T}^{2D}=K\right]=\left(\frac{\sigma_{LV}\left(K,T\right)}{\alpha^{2D}\left(K,T\right)}\right)^{2}
\]
\vspace{-1em}
\item [{{*}}] \textbf{compute} as in (\ref{eq:p2D})
\vspace{-1em}
\textbf{ 
\[
\hat{p}_{N}^{2D}\left(K,T\right)=\frac{\sum_{i=1}^{N}V_{T}^{2D,i}\delta_{N}\left(S_{T}^{2D,i}-K\right)}{\sum_{i=1}^{N}\delta_{N}\left(S_{T}^{2D,i}-K\right)}
\]
}
\vspace{-1em}
\item [{{*}}] \textbf{compute} $\hat{\lambda}_{N}$ as in (\ref{eq:lambda-Markov-Proj})
and

\[
p_{N}^{*}\left(K,T\right)=\frac{\sum_{i=1}^{N}D_{T}^{d,i}V_{T}^{i}\delta_{N}\left(S_{T}^{i}-K\right)}{\sum_{i=1}^{N}D_{T}^{d,i}\delta_{N}\left(S_{T}^{i}-K\right)}+\hat{\lambda}_{N}\left(\hat{p}_{N}^{2D}\left(K,T\right)-p^{2D}\left(K,T\right)\right)
\]

as in (\ref{eq:Control-Variate-Formula})
\item [{{*}}] \textbf{compute} as in (\ref{eq:Control-Variates-Formula-StoRatesPart})
\vspace{-1em}
\begin{eqnarray*}
{Q}^{*} & = & \left({X}_1^{*}-K {X}_2^{*}\right)
\end{eqnarray*}
\vspace{-1em}
with

\begin{eqnarray*}
{X}_1^{*} & = & \widehat{X}_1 +\lambda_{1}\left(\widehat{Y}_1 -C_{LV}\left(K,T\right)\right)+\eta_{1}\left(\widehat{Z}_1-\mathbb{E}^{\mathbb{Q}^{d}}\left[\left(r_{T}^{f}-\bar{r}^{f}\left(T\right)\right)\right]\right)\\
{X}_2^{*} & = & \widehat{X}_2 +\lambda_{2}\left(\widehat{Y}_2+\frac{\partial C_{LV}\left(K,T\right)}{\partial K}\right)+\eta_{2}\left(\widehat{Z}_2-\mathbb{E}^{\mathbb{Q}^{d}}\left[\left(r_{T}^{d}-\bar{r}^{d}\left(T\right)\right)-\left(r_{T}^{f}-\bar{r}^{f}\left(T\right)\right)\right]\right)
\end{eqnarray*}

\item [{{*}}] \textbf{compute} 
\[
\alpha_{m+1,j}=\sqrt{\frac{1}{p_{N}^{*}\left(K,T\right)}\left(\sigma_{LV}\left(K,T\right)^{2}+\frac{{Q}^{*}}{\frac{1}{2}K^{2}\frac{\partial^{2}C_{LV}\left(K,T\right)}{\partial K^{2}}}\right)}\,
\]

\end{algor}
\item [{endfor}]~
\end{algor}
\item [{endfor}]~\end{algor}
\end{algorithm}
Our empirical findings suggest that Quasi-Monte Carlo sampling of the random
numbers does not provide significant accuracy gains.
In order to speed up the computation of the sums involving kernel
functions such as
\[
\sum_{i=1}^{N}V_{T}^{2D,i}\delta_{N}\left(S_{T}^{2D,i}-K\right)\,,
\]
it is advised (see \cite{Guyon2011}) to sort the particle state vector
by spot value and select only the relevant particles that fall inside
an interval $\left[K-\Delta K,\,K+\Delta K\right]$, where we choose
\[
\Delta K=\sqrt{-2h_{N}^{2}\left(T\right)\ln\left(\epsilon\sqrt{2\pi}h_{N}\right)}\,,
\]
with $\epsilon=10^{-5}$.

\section{Two-factor Heston-type LSV model calibration by PDE 
\label{sub:2-Factor-Heston-LSV-Calibration}}

Here, we describe the calibration of the 2-factor sub-model
of (\ref{eq:Model Definition}) defined in (\ref{eq:modelDefinitionLSVDetRates})
by solution of the forward PDE.\footnote{In this section only, we write $S$,$V$
and $\alpha$ in lieu of $S^{2D}$, $V^{2D}$ and $\alpha^{2D}$,
respectively, to ease notation.}
{

}

\subsection{Transformation and weak formulation}

The following is a small variation of the main result in \cite{Lucic2008},
and the proof is therefore omitted.
Note that a new non-Dirichlet boundary condition appears at $z=0$.

\begin{theorem}
\label{thm:Fokker-Planck-HestonLSV}Define the region $\Omega=\mathbb{R}_{+}^{2}$ 
and assume that the density $\phi$ (under $\mathbb{Q}^{d}$) of the
Markovian process $\left(S_{t},V_{t}\right)$ started at $\left(S_{0},v_{0}\right)$
at time 0 exists 
and is $C^{2,2,1}\left(\Omega \times \mathbb{R}_{+}\right)$.
Then $\phi$ is the solution to the Kolmogorov forward equation
\begin{equation}
\begin{cases}
\frac{\partial\phi}{\partial t}+\left(\bar{r}^{d}\left(t\right)-\bar{r}^{f}\left(t\right)\right)\frac{\partial x\phi}{\partial x}+\frac{\partial\kappa\left(\theta-z\right)\phi}{\partial z}\\
\hspace{1 cm} -\; \frac{1}{2}\left(\frac{\partial^{2}x^{2}\alpha^{2}\left(x,t\right)z\phi}{\partial x^{2}}+\frac{\partial^{2}\xi^{2}z\phi}{\partial z^{2}}+2\frac{\partial^{2}\rho\xi x\alpha\left(x,t\right)z\phi}{\partial x\partial z}\right)=0, & \left(x,z\right)\in\Omega,\,t>0\,,\\
\left.\left(\frac{\xi^{2}}{2}\frac{\partial z\phi}{\partial z}-\kappa\left(\theta-z\right)\phi+\rho\xi z\frac{\partial x\alpha\left(x,t\right)\phi}{\partial x}\right)\right\rfloor _{z=0}=0, & z=0,\,x\geqslant0,\,t>0\,,\\
\lim_{z\rightarrow\infty}\phi(x,z,t)=\lim_{x\rightarrow\infty}\phi(x,z,t)=\phi(0,z,t)=0, & \left(x,z\right)\in\Omega,t>0\,,\\
\lim_{t\rightarrow0}\phi(x,z,t)=\delta(x-S_{0},z-v_{0}), & \left(x,z\right)\in\Omega\,.
\end{cases}\label{eq:Fokker-Planck-HestonLSV}
\end{equation}
\end{theorem}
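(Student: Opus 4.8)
The plan is to derive the forward (Fokker--Planck) equation for the density $\phi$ of $(S_t,V_t)$ by the standard duality argument, treating the boundary $\{z=0\}$ with care since the diffusion coefficient in the $z$-direction degenerates there and the Feller condition may be violated. First I would write down the infinitesimal generator $\mathcal{L}_t$ of the diffusion $(S_t,V_t)$ from (\ref{eq:modelDefinitionLSVDetRates}): for a test function $g\in C_c^\infty(\overline{\Omega}\times[0,T))$,
\[
\mathcal{L}_t g = (\bar r^d(t)-\bar r^f(t))\,x\,g_x + \kappa(\theta-z)\,g_z + \tfrac12 x^2\alpha^2(x,t)z\,g_{xx} + \tfrac12\xi^2 z\,g_{zz} + \rho\xi x\alpha(x,t)z\,g_{xz}\,.
\]
By It\^o's formula applied to $g(S_t,V_t,t)$ and taking expectations, $\frac{d}{dt}\mathbb{E}[g(S_t,V_t,t)] = \mathbb{E}[(\partial_t + \mathcal{L}_t)g(S_t,V_t,t)]$, which in terms of $\phi$ reads $\int_\Omega (\partial_t g + \mathcal{L}_t g)\,\phi\,dx\,dz = \frac{d}{dt}\int_\Omega g\,\phi\,dx\,dz = \int_\Omega g\,\partial_t\phi\,dx\,dz$. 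The next step is to integrate by parts in $x$ and $z$ to move all derivatives off $g$ and onto $\phi$ (and the coefficients), producing the formal adjoint $\mathcal{L}_t^*$ acting on $\phi$ — this yields the first line of (\ref{eq:Fokker-Planck-HestonLSV}).

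The delicate part is bookkeeping the boundary terms generated by these integrations by parts. At $x=0$ and as $x\to\infty$ and $z\to\infty$, the stated decay/vanishing conditions on $\phi$ (lines 3 of the system) kill the boundary contributions, using also that $g$ has compact support in $x$ if one wishes, or the assumed integrability of $\phi$. The genuinely new feature is the boundary $z=0$: integrating the $\partial_{zz}$ and $\partial_{xz}$ and $\partial_z$ terms by parts in $z$ over $z\in(0,\infty)$ leaves a surface term at $z=0$ of the form
\[
-\int_{x\geq 0} g(x,0,t)\left(\tfrac{\xi^2}{2}\partial_z(z\phi) - \kappa(\theta-z)\phi + \rho\xi z\,\partial_x(x\alpha(x,t)\phi)\right)\Big\rfloor_{z=0}\,dx\,,
\]
because $g$ need not vanish at $z=0$ (the variance process is not killed there). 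Since $g$ restricted to $\{z=0\}$ is arbitrary, this forces the Neumann-type flux condition in line 2 of (\ref{eq:Fokker-Planck-HestonLSV}). One must check that the $C^{2,2,1}$ regularity assumption makes all these boundary traces well-defined and that the terms involving $z\phi$ evaluated at $z=0$ are interpreted as limits; this is where the smoothness hypothesis on $\phi$ is used, and it is exactly the point at which the argument differs from \cite{Lucic2008}, where a Dirichlet condition $\phi|_{z=0}=0$ was available under the Feller condition. Finally, the initial condition $\lim_{t\to0}\phi(\cdot,\cdot,t)=\delta(\cdot-S_0,\cdot-v_0)$ follows from $(S_0,V_0)=(S_0,v_0)$ deterministic, i.e.\ $\mathbb{E}[g(S_t,V_t,t)]\to g(S_0,v_0,0)$ as $t\to0$ by right-continuity of the paths.

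I expect the main obstacle to be making the formal integration by parts at $z=0$ rigorous: one has to justify that no additional singular mass accumulates at the boundary (no Dirac component in $\phi$ at $z=0$), that $z\phi$, $\partial_z(z\phi)$ and $z\partial_x(x\alpha\phi)$ have well-defined, finite one-sided limits as $z\downarrow 0$ under the stated regularity, and that the decay at infinity is strong enough to discard those boundary terms — all of which are folded into the hypothesis that $\phi$ exists and lies in $C^{2,2,1}(\Omega\times\mathbb{R}_+)$ together with the usual integrability of a probability density. Since these are minor variations on the arguments in \cite{Lucic2008}, the write-up can legitimately cite that reference for the technical details and confine the new content to the derivation of the $z=0$ flux condition, which is why the full proof is omitted in the paper.
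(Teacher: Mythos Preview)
Your proposal is correct and follows essentially the same approach as the paper, which simply states that the proof is ``similar to the proof of Lemma 4.1 and Theorem 4.1 in \cite{Lucic2008}'' and omits the details. Your sketch of the duality argument via It\^o's formula on test functions, integration by parts to obtain $\mathcal{L}_t^*$, and the careful extraction of the surface term at $z=0$ (where $g$ is \emph{not} required to vanish because the CIR process is reflected rather than killed there) is exactly the standard derivation, and your identification of the $z=0$ flux condition as the ``small variation'' from \cite{Lucic2008} matches the paper's own remark that ``a new non-Dirichlet boundary condition appears at $z=0$.''
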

\begin{proof}
Similar to the proof of Lemma 4.1 and Theorem 4.1 in \cite{Lucic2008}.
\end{proof}

The marginal density function of the CIR process at $t$ is 
(see \cite{Cox1985})
\[
\phi_{v}\left(z,t\right)=ce^{-u-cz}\left(\frac{cz}{u}\right)^{\beta/2}I_{\beta}\left(2\sqrt{cuz}\right)\,,
\]
with $c=\frac{2\kappa}{\left(1-e^{-\kappa t}\right)\xi^{2}}$ , $u=cv_{0}e^{-\kappa t}$
and $\beta=\frac{2\kappa\theta}{\xi^{2}}-1$, where $I_{\beta}$ is
the modified Bessel function of the first kind of order $\beta$.
We can write an asymptotic expression for small $z$ using the asymptotic
formula for the modified Bessel function found in \cite{Abramowitz1974},
\begin{eqnarray}
\phi_{v}\left(z,t\right) 
 & \sim & \frac{c^{\left(\beta+1\right)}e^{-u-cz}}{\Gamma\left(\beta+1\right)}z^{\beta}\,,\label{eq:cirDensitySmallZ}
\end{eqnarray}
such that $\phi_{v}\left(z,t\right)$ diverges for $z=0^{+}$ when
$2\kappa\theta<\xi^{2}$. This agrees with the well-known density $\phi_{v}^{\infty}$ of the stationary distribution (see \cite{Cox1985})
given by
\begin{equation}
\phi_{v}^{\infty}\left(z\right)=\lim_{t\rightarrow\infty}\phi_{v}\left(z,t\right)=\frac{\omega^{\left(\beta+1\right)}z^{\beta}e^{-\omega z}}{\Gamma\left(\beta+1\right)} \,, \label{eq:hestonDensityStationary}
\end{equation}
with $\omega=2\kappa/\xi^{2}$. 
This motivates a scaling of the density $\phi$ of the form $p=\phi z^{-\beta}$,
and to solve a new PDE for $p$ which we define hereafter. 
By insertion in (\ref{eq:Fokker-Planck-HestonLSV}) we get the following.
\begin{corollary}
For any $\beta \in \mathbb{R}$,
$p=\phi z^{-\beta}$ satisfies the
initial boundary value problem
\begin{equation}
\begin{cases}
\frac{\partial p}{\partial t}+(\bar{r}^{d}\left(t\right)-\bar{r}^{f}\left(t\right))\frac{\partial xp}{\partial x}-\beta\kappa p\\
\hspace{2 cm} + \; \frac{\partial\kappa\left(\theta-z\right)p}{\partial z}-\left(\beta+1\right)\left(\frac{\partial\xi^{2}p}{\partial z}+\frac{\partial\rho\xi x\alpha\left(x,t\right)p}{\partial x}\right)\\
\hspace{2 cm} -\; \frac{1}{2}z\left[\frac{\partial^{2}x^{2}\alpha^{2}\left(x,t\right)p}{\partial x^{2}}+\frac{\partial^{2}\xi^{2}p}{\partial z^{2}}+2\frac{\partial^{2}\rho\xi x\alpha\left(x,t\right)p}{\partial x\partial z}\right]=0\,, & \left(x,z\right)\in\Omega,\,t>0\,,\\
\frac{\xi^{2}z}{2}\left.\frac{\partial p}{\partial z}\right\rfloor _{z=0}+\left.\kappa zp\right\rfloor _{z=0}+\rho\xi z\left.\frac{\partial x\alpha\left(x,t\right)p}{\partial x}\right\rfloor _{z=0}=0, & z=0,\,t>0\,,\\
\lim_{z\rightarrow\infty}p(x,z,t)=\lim_{x\rightarrow\infty}p(x,z,t)=p(0,z,t)=0, & \left(x,z\right)\in\Omega,\,z\neq0,\,t>0\,,\\
\lim_{t\rightarrow0}p(x,z,t)=z^{-\beta}\delta(x-S_{0},z-v_{0}), & \left(x,z\right)\in\Omega\,.
\end{cases}\label{eq:KolmogorovForward-HestonLSV-Reformulated}
\end{equation}
\end{corollary}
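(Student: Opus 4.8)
The plan is to prove the corollary by the direct change of variables $\phi=z^{\beta}p$ (equivalently $p=z^{-\beta}\phi$) in the forward problem (\ref{eq:Fokker-Planck-HestonLSV}) supplied by Theorem~\ref{thm:Fokker-Planck-HestonLSV}. On $\Omega$, where $z>0$, the factor $z^{\beta}$ is smooth and strictly positive, so the substitution is an equivalence in the interior; one then only has to verify the interior PDE, the flux boundary condition at $z=0$, and the far-field decay together with the initial datum.

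For the interior PDE I would substitute $\phi=z^{\beta}p$ into each term of (\ref{eq:Fokker-Planck-HestonLSV}). The $x$-derivatives pass through unchanged, as $z^{\beta}$ does not depend on $x$:
\[
\partial_{x}(x\phi)=z^{\beta}\,\partial_{x}(xp),\qquad \partial_{xx}\!\big(x^{2}\alpha^{2}z\phi\big)=z^{\beta+1}\,\partial_{xx}\!\big(x^{2}\alpha^{2}p\big).
\]
For the $z$-derivatives I apply the Leibniz rule, letting $\partial_{z}$ and $\partial_{zz}$ act on the prefactor $z^{\beta}$, respectively $z^{\beta+1}$:
\[
\partial_{z}\!\big(\kappa(\theta-z)\phi\big)=z^{\beta}\,\partial_{z}\!\big(\kappa(\theta-z)p\big)+\beta z^{\beta-1}\kappa(\theta-z)p,
\]
\[
\partial_{zz}\!\big(\xi^{2}z\phi\big)=\xi^{2}\Big(z^{\beta+1}\partial_{zz}p+2(\beta+1)z^{\beta}\partial_{z}p+\beta(\beta+1)z^{\beta-1}p\Big),
\]
\[
\partial_{xz}\!\big(\rho\xi x\alpha\,z\phi\big)=z^{\beta+1}\,\partial_{xz}\!\big(\rho\xi x\alpha\,p\big)+(\beta+1)z^{\beta}\,\partial_{x}\!\big(\rho\xi x\alpha\,p\big).
\]
Inserting these expansions into (\ref{eq:Fokker-Planck-HestonLSV}) and collecting, every surviving term carries the common factor $z^{\beta}$, so dividing through by it on $\{z>0\}$ leaves precisely (\ref{eq:KolmogorovForward-HestonLSV-Reformulated}); in particular, the lower-order pieces generated when $\partial_{z}$ and $\partial_{zz}$ fall on the prefactor supply the new zeroth-order term $-\beta\kappa p$ and the new first-order terms $-(\beta+1)\big(\partial_{z}(\xi^{2}p)+\partial_{x}(\rho\xi x\alpha\,p)\big)$.

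For the flux boundary condition at $z=0$ I would carry out the analogous substitution in the corresponding condition of (\ref{eq:Fokker-Planck-HestonLSV}). Using $\partial_{z}(z\phi)=\partial_{z}(z^{\beta+1}p)=(\beta+1)z^{\beta}p+z^{\beta+1}\partial_{z}p$ and rewriting $\kappa(\theta-z)\phi$ and $\rho\xi z\,\partial_{x}(x\alpha\phi)$ in terms of $p$, one can extract a common factor and reduce the condition to $\big(\tfrac{\xi^{2}z}{2}\,\partial_{z}p+\kappa z\,p+\rho\xi z\,\partial_{x}(x\alpha\,p)\big)\big|_{z=0}=0$. The far-field conditions $\lim_{z\to\infty}\phi=\lim_{x\to\infty}\phi=\phi(0,\cdot,\cdot)=0$ pass to $p=z^{-\beta}\phi$ because $z^{-\beta}$ is bounded on any compact subset of $(0,\infty)$ and contributes only an algebraic factor at the two far boundaries — which is why the statement for $p$ is restricted to $\{z\neq0\}$ — and multiplying the Dirac initial datum by $z^{-\beta}$ gives $\lim_{t\to0}p=z^{-\beta}\delta(x-S_{0},z-v_{0})$.

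The only real work, and effectively the entire content of the proof, is the bookkeeping in the interior equation: one has to group correctly the several terms produced when $\partial_{z}$ and $\partial_{zz}$ hit $z^{\beta}$ (and $z^{\beta+1}$), and then check that after cancelling $z^{\beta}$ the remaining coefficients reproduce (\ref{eq:KolmogorovForward-HestonLSV-Reformulated}) term by term. Since this is a purely mechanical computation, in the write-up I would display the three Leibniz identities above and simply note that the remaining collection of terms is routine — which is why the paper records the result as obtained ``by insertion in~(\ref{eq:Fokker-Planck-HestonLSV})'' and omits the details.
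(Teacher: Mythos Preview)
Your approach is exactly the paper's: it records the corollary as obtained ``by insertion in~(\ref{eq:Fokker-Planck-HestonLSV})'' and gives no further detail, and your substitution $\phi=z^{\beta}p$ together with the Leibniz identities is precisely that insertion.

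There is, however, one inaccuracy in your bookkeeping claim. It is not true that ``every surviving term carries the common factor $z^{\beta}$''. Your own identities show that $\partial_{z}\!\big(\kappa(\theta-z)\phi\big)$ produces a term $\beta\kappa\theta\,z^{\beta-1}p$ and $\partial_{zz}\!\big(\xi^{2}z\phi\big)$ produces $\xi^{2}\beta(\beta+1)z^{\beta-1}p$; after dividing by $z^{\beta}$ these leave a residual
\[
\beta\Big(\kappa\theta-\tfrac{1}{2}\xi^{2}(\beta+1)\Big)z^{-1}p,
\]
which is absent from~(\ref{eq:KolmogorovForward-HestonLSV-Reformulated}). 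This vanishes only for $\beta=0$ or for the specific value $\beta=\tfrac{2\kappa\theta}{\xi^{2}}-1$ introduced just above the corollary, not for arbitrary $\beta\in\mathbb{R}$ as the statement suggests. So when you actually carry out the ``routine'' collection, be sure to invoke this particular choice of $\beta$ to kill the $z^{-1}$ term; otherwise the interior PDE does not reduce to the displayed form.
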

While this PDE is easier to handle numerically, one wants
to work with the original density function $\phi$ for most of the
applications. 
There are two main calculations one would like to achieve:
the expected payoff $f\left(S_{T}\right)$ for a given function $f$;
the Markovian projection $\mathbb{E}\left[V_{T}\,|\,S_{T}=K\right]$.

As $\phi$ is still intractable for small $z$ and computing $z^{\beta}p\left(x,z,t\right)$
is not numerically feasible, we perform an integration by parts (noticing $\lim_{z\rightarrow0}z^{\beta+1}p=\lim_{z\rightarrow \infty}z^{\beta+1}p=0$ since $\beta+1>0$)
to obtain, for deterministic rates,
\begin{equation}
\mathbb{E}\left[D_{T}f\left(S_{T}\right)\right]=-D_{T}\int_{0}^{\infty}f\left(x\right)\int_{0}^{\infty}\frac{z^{\beta+1}}{\beta+1}\frac{\partial p(x,z,T)}{\partial z}\,dz\,dx \label{eq:dampenedPDEMarkovProj0}
\end{equation}
and
\begin{equation}
\mathbb{E}\left[V_{T}\,|\,S_{T}=K\right]=-\left(\beta+1\right)\frac{\int_{0}^{\infty}z^{\beta+1}p\left(K,z,T\right)\,dz}{\int_{0}^{\infty}z^{\beta+1}\frac{\partial p(K,z,T)}{\partial z}\,dz}\,.\label{eq:dampenedPDEMarkovProj}
\end{equation}

\subsection{Finite element method with a two-step BDF time scheme}

We combine a finite element approximation in space with a \emph{Backward
Differentiation Formula} (BDF) scheme in time,
since Crank-Nicholson time-stepping or ADI schemes can give rise to
instabilities for Dirac initial data (see \cite{pooley2003convergence, wyns2015convergence};
we refer to \cite{bokanowski2017stability} for a stability analysis and to \cite{Floch} for some financial applications of BDF schemes).

Equation (\ref{eq:KolmogorovForward-HestonLSV-Reformulated}) can be written as
{
\begin{eqnarray*}
\label{pdeabstract}
&&\frac{\partial p}{\partial t}-z\nabla\mathbf{\cdot}\mathbf{u}+\nabla\cdot\left(\mathbf{b} -\left(\beta+1\right)\mathbf{w} \right)+ c p=0\,,
\vspace{0.3 cm} \\
\nonumber
&&\mathbf{u=}\frac{1}{2}\begin{bmatrix}\frac{\partial x^{2}\alpha^{2}\left(x,t\right)p}{\partial x}+\frac{\partial\rho\xi x\alpha\left(x,t\right)p}{\partial z}\\
\frac{\partial\xi^{2}p}{\partial z}+\frac{\partial\rho\xi x\alpha\left(x,t\right)p}{\partial x}
\end{bmatrix}\!, \;
\mathbf{b}=\begin{bmatrix}(\bar{r}^{d}\left(t\right)-\bar{r}^{f}\left(t\right))x\\
\kappa\left(\theta-z\right)
\end{bmatrix}\!, \;
\mathbf{w}=\begin{bmatrix}\rho\xi x\alpha\left(x,t\right)p\\
\xi^{2}p
\end{bmatrix}\!, \;
c = -\beta\kappa\,.
\end{eqnarray*}
Denote 
by $\Gamma_{R}=\left\{ \left(x,z\right)\in\partial\Omega:\,z=0\right\} $
the subset of the boundary of $\Omega$ with Robin boundary condition.
We derive a weak formulation in the usual way (see, e.g., \cite{quarteroni2008numerical}), i.e.,
we multiply the PDE by a test function $v\in H^{1}\left(\Omega\right)$,
integrate over $\Omega$, using the divergence theorem and boundary conditions, to obtain the weak form of (\ref{pdeabstract}),
\begin{eqnarray*}
\int_{\Omega}\frac{\partial p}{\partial t}v\,d\Omega+a(p,v) &=& 0,
\end{eqnarray*}
with the bi-linear form 
\begin{eqnarray*}
a\left(p,v\right)&=&\int_{\Omega} 
\mathbf{u}\mathbf{\cdot\nabla} (z v)  + 
\left(\nabla\cdot\left(\mathbf{b}-\left(\beta+1\right)\mathbf{w}\right)+  c p \right) v\,d\Omega \,- 
\, \int_{\Gamma_{R}}\left(\kappa zp + \frac{1}{2}z\frac{\partial\rho\xi x\alpha\left(x,t\right)p}{\partial x}\right)v\,d\Gamma_{R},
\end{eqnarray*}
}
where the last term contains the new boundary condition.

%
Let us define a
uniform time mesh with $t_{m}=m\Delta_{t}$, $m\in\llbracket0,M\rrbracket$. 
We denote $p_{m}=p\left(\cdot,\cdot,t_{m}\right)$, in which case the BDF scheme can be written as
\[
\int_{\Omega}\left(p_{m+2}-\frac{4}{3}p_{m+1}+\frac{1}{3}p_{m}\right)v\,d\Omega+\frac{2}{3}\Delta_{t}a(p_{m+2},v)=0, \qquad m=0,\ldots, M-2,
\]
where the first time step is divided into two standard fully implicit
time steps (see \cite{gilescarter}). 
Then, for the first time step, using the Dirac delta initial condition, 
\[
\int_{\Omega}p_{1}v\,d\Omega+\Delta_{t}a(p_{1},v)=\int_{\Omega}p_{0}v\,d\Omega=v_{0}^{-\beta}v\left(S_{0},v_{0}\right).
\]
Note that we initially only assumed $v \in H^1 \not\subset C$, and therefore the operation above with the Dirac delta is not defined for all such $v$.
However, we will next use continuous basis functions.
If $\left(S_{0},v_{0}\right)$ coincides with a mesh point,
this is equivalent to solving a linear system where the right-hand
side vector is $v_{0}^{-\beta}$ for the source point node and zero otherwise.

The PDE solution is approximated by a conforming finite element method with $\mathbb{P}_{2}$ elements, i.e., a polynomial
of order two on a triangle cell. The computation of the finite element solutions was performed with the FEniCS library \cite{Fenics2015}. 
Each triangle is characterised by
6 local degrees of freedom (nodes) as displayed in Figure \ref{fig:fem-element}
(see \cite{FEMBook2005} for details).
We describe the mesh construction in detail in Appendix \ref{sub:Mesh-construction}.
An example of a thus generated mesh
with 30 spot steps and 30 variance steps is illustrated in Figure \ref{fig:Finite-Element-Mesh-1}.

\noindent
\begin{figure}
	\hspace{-1.5 cm}
	\begin{centering}
		\includegraphics[scale=0.37]{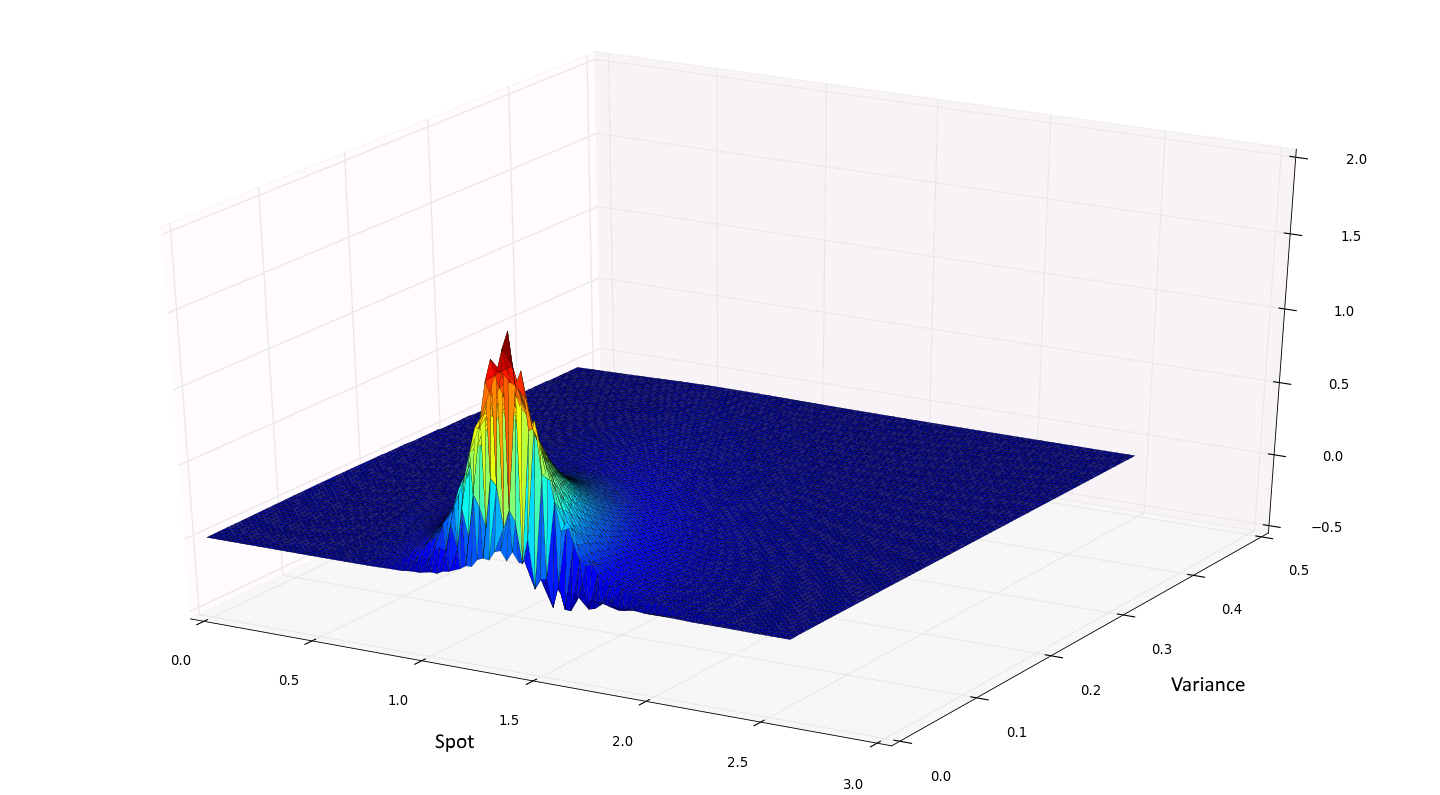}
		\hspace{-1.5 cm}
		\includegraphics[scale=0.25]{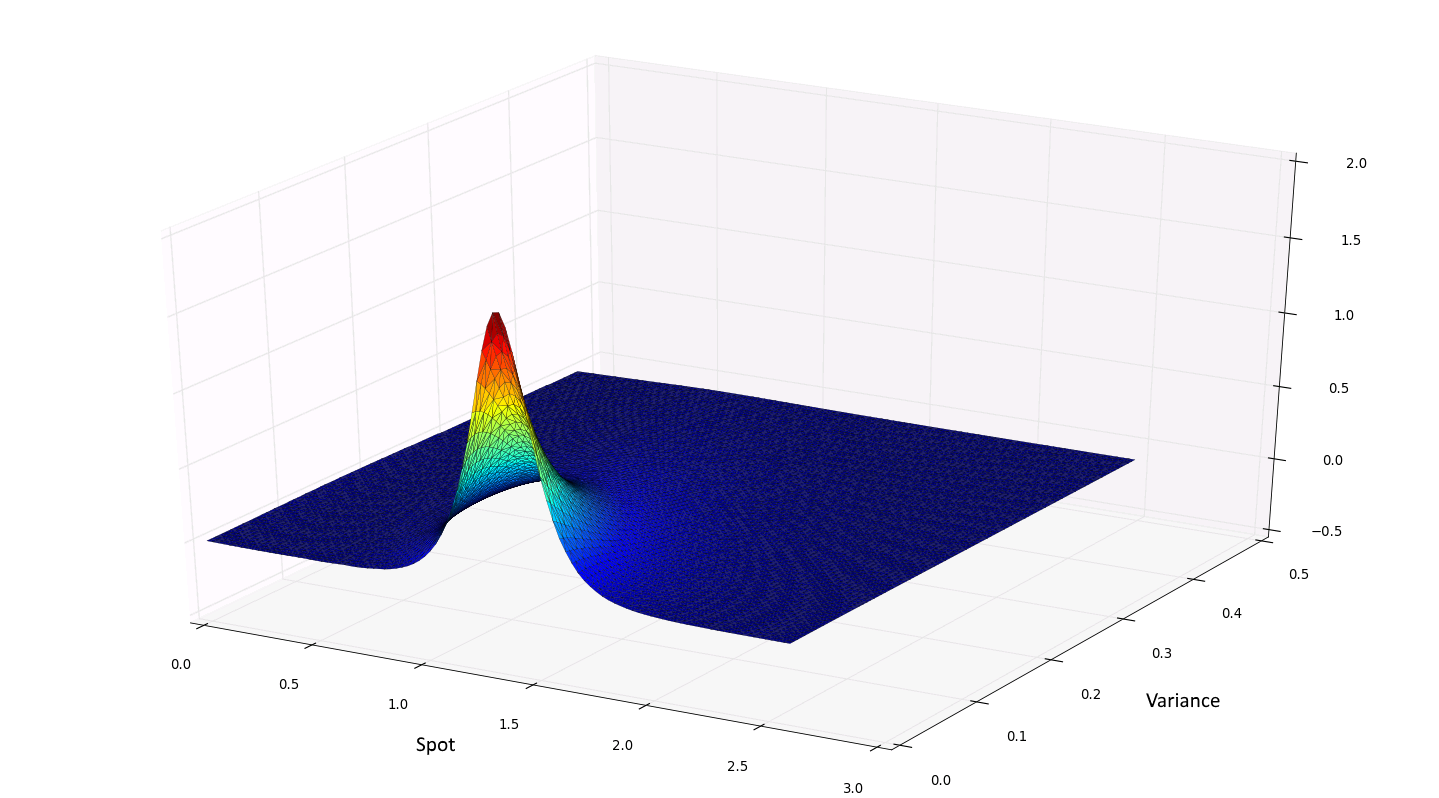}
		\captionof{figure}{$p=z^{-\beta}\phi$ using $\phi$ computed with the standard 
			(left) and change of variables PDE (right)}\label{fig:-computed-=0000E0-posteriori}%
	\end{centering}
\end{figure}

\begin{minipage}[c]{1\columnwidth}%
\begin{minipage}[c]{0.23\columnwidth}%
\begin{center}
~\\
~\\
\includegraphics[scale=0.7]{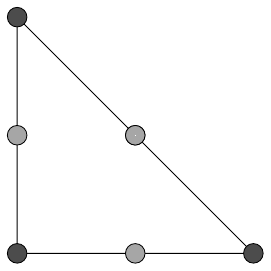}\\

\par\end{center}

\captionof{figure}{$\mathbb{P}_{2}$ element with 6 degrees of freedom}

\label{fig:fem-element}%
\end{minipage}~~~~%
\begin{minipage}[c]{0.75\columnwidth}%
	\begin{centering}
		\includegraphics[scale=0.32]{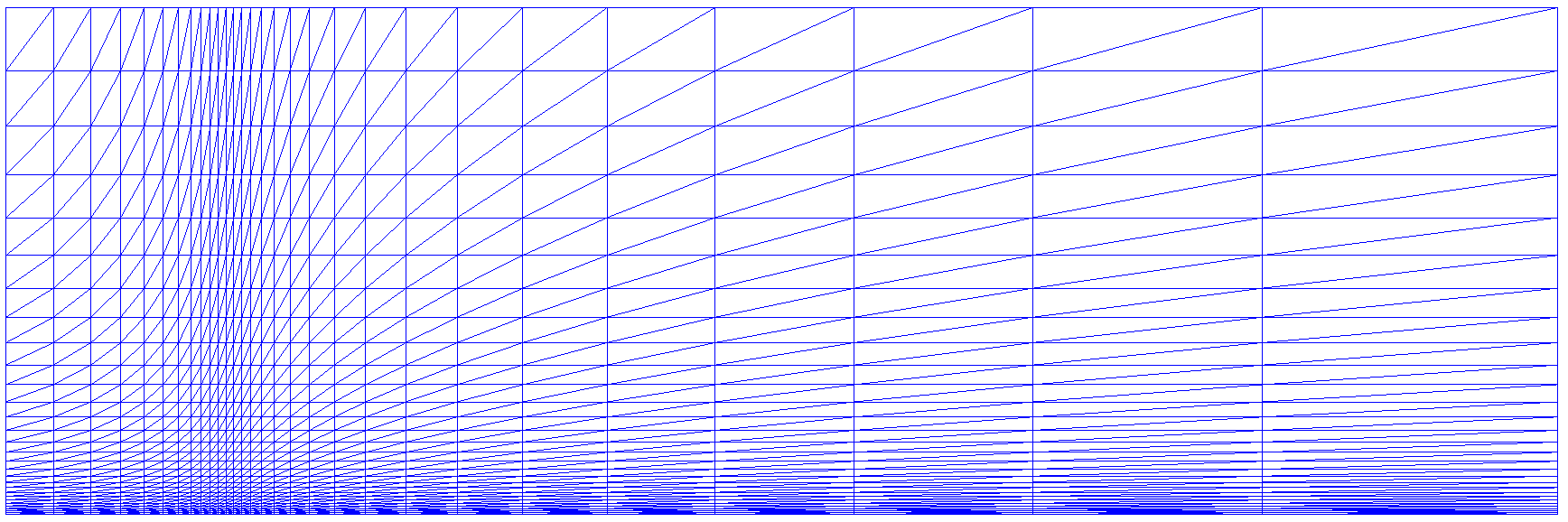}\captionof{figure}{Finite element triangular mesh refined around $x=S_{0}$ and $z=0$}\label{fig:Finite-Element-Mesh-1}%
	\end{centering}
\end{minipage}%
\end{minipage}

In order to see the improvement due to the transformed PDE (\ref{eq:KolmogorovForward-HestonLSV-Reformulated})
for $p$, we plot both $z^{-\beta}\phi$ and $p$ for a pure
Heston model with
\[
r=3\%,\quad q=1\%,\quad\kappa=1,\quad\theta=v_{0}=0.04,\quad\rho=-0.3,\quad\xi=0.5,\quad t=1\,,
\]
which corresponds to a Feller ratio of $0.32$. We use 100 time steps
as well as 30 spot steps and 30 variance steps. The solution for $p=z^{-\beta}\phi$
computed with no change of variables is presented in Figure \ref{fig:-computed-=0000E0-posteriori},
where we notice significant numerical instabilities. The tranformed PDE is solved for the same problem and $p$ is also plotted in
Figure \ref{fig:-computed-=0000E0-posteriori}. 

%
%
%

\subsection{Calibration algorithm}

The calibration of the 2-factor LSV model (\ref{eq:modelDefinitionLSVDetRates})
is performed by finding the leverage function $\alpha$ defined in 
(\ref{eq:alphaFormula2D}).
We compute $\mathbb{E^{\mathbb{Q}}}\left[V_{T}\mid S_{T}=K\right]$
from (\ref{eq:dampenedPDEMarkovProj})
with the solution $p$ of the dampened PDE (\ref{eq:KolmogorovForward-HestonLSV-Reformulated}).
Both integrals can
be computed by double adaptive Clenshaw-Curtis quadrature rules to
handle singularities properly when $T$ is small (see \cite{Gough2009}).

Furthermore, for very small or very large values of  $K$, both the numerator
and denominator will be very small. So we define a smooth extrapolation
rule by
\[
\mathbb{E^{\mathbb{Q}}}\left[V_{T}\,|\,S_{T}=K\right] \approx \frac{\left(\beta+1\right)\left(\int_{0}^{\infty}z^{\beta+1}p\left(K,z,T\right)\,dz+\epsilon\left(v_{0}e^{-\kappa T}+\left(\theta-v_{0}\right)\left(1-e^{-\kappa T}\right)\right)\right)}{\epsilon\left(\beta+1\right)-\int_{0}^{\infty}z^{\beta+1}\frac{\partial p(K,z,T)}{\partial z}\,dz}\, ,
\]
where we pick $\epsilon=10^{-14}$ in our numerical tests.

%
%
%
%
%

The calibration will be done forward in time. Denote by $\left(\Delta T\!\left(i\right)\right)_{i\leq N_{T}}$
the interval lengths between maturities and by $N_{T}$ the number of maturities.

The leverage function $\alpha$ is again defined by splines as detailed in Section
\ref{sub:Calibration-Algorithm-with-CVParticleMethod}.
This approach allows us
to compute $\mathbb{E^{\mathbb{Q}}}\left[V_{T}\,|\,S_{T}=K\right]$
only on the nodes, reducing the computational time considerably.
In the calibration routine, we use forward constant interpolation of the leverage function between maturities
to handle the non-linearity of the problem.
We can then write the calibration procedure as in Algorithm \ref{alg:calibWithDampedPDE}.

\begin{algorithm}
\caption{$\alpha\left(s,T\right)$ calibration with the dampened Kolmogorov
forward PDE}
\label{alg:calibWithDampedPDE}

\begin{algor}
\item [{{*}}] $\alpha\left(s,0\right)=\frac{\sigma_{LV}\left(s,0\right)}{\sqrt{v_{0}}}$
\item [{{*}}] $T=0$
\item [{for}] ( $i=1\,;\,i\leq N_{T}\,;\,i++$) 

\begin{algor}
\item [{{*}}] \begin{raggedright}
\textbf{solve} (\ref{eq:KolmogorovForward-HestonLSV-Reformulated})
for $p$ on $\left[T,T+\Delta T\right]$ with $\alpha\left(s,\left[T,T+\Delta T\left(i\right)\right]\right)=\alpha\left(s,T\right)$
\par\end{raggedright}
\item [{for}] ( $j=1\,;\,j\leq N_{S}\,;\,j++$)

\begin{algor}
\item [{{*}}] 
\[
E_{V_{T}}=-\left(\beta+1\right)\frac{\int_{0}^{\infty}z^{\beta+1}p\left(s_{i,j},z,T+\Delta T\left(i\right)\right)\,dz}{\int_{0}^{\infty}z^{\beta+1}\frac{\partial p(s_{i,j},z,T+\Delta T\left(i\right))}{\partial z}\,dz}
\]

\item [{{*}}] 
\[
\alpha_{i,j}=\frac{\sigma_{LV}\left(s_{i,j},T+\Delta T\left(i\right)\right)}{\sqrt{E_{V_{T}}}}
\]

\end{algor}
\item [{endfor}]~
\item [{{*}}] $T=T+\Delta T\left(i\right)$
\end{algor}
\item [{endfor}]~\end{algor}
\end{algorithm}


In the test, we use the Heston parameters calibrated in Appendix
\ifarxivvar
\ref{sub:Heston-2CIR++-Calibration}
\else
F of \cite{cmr18}
\fi
for the Heston-2CIR++ model,
i.e.,
\[
v_{0}=0.0094,\quad\theta=0.0137,\quad\kappa=1.4124,\quad\rho=-0.1194,\quad\xi=0.2988\,,
\]
where the Feller ratio is 
\[
\frac{\mbox{2\ensuremath{\kappa\theta}}}{\xi^{2}}\approx0.4335<1\,,
\]
which violates the Feller condition.

The calibrated leverage function $\alpha$ is plotted in Figure \ref{fig:leverage2D}.
For the solution of the forward PDE between maturities, we use a BDF scheme with constant
stepsize and find that $50$ time steps per year and a $80\times80$
spot-variance finite element mesh give very accurate results.

\begin{figure}[h]
\begin{centering}
\includegraphics[scale=0.23]{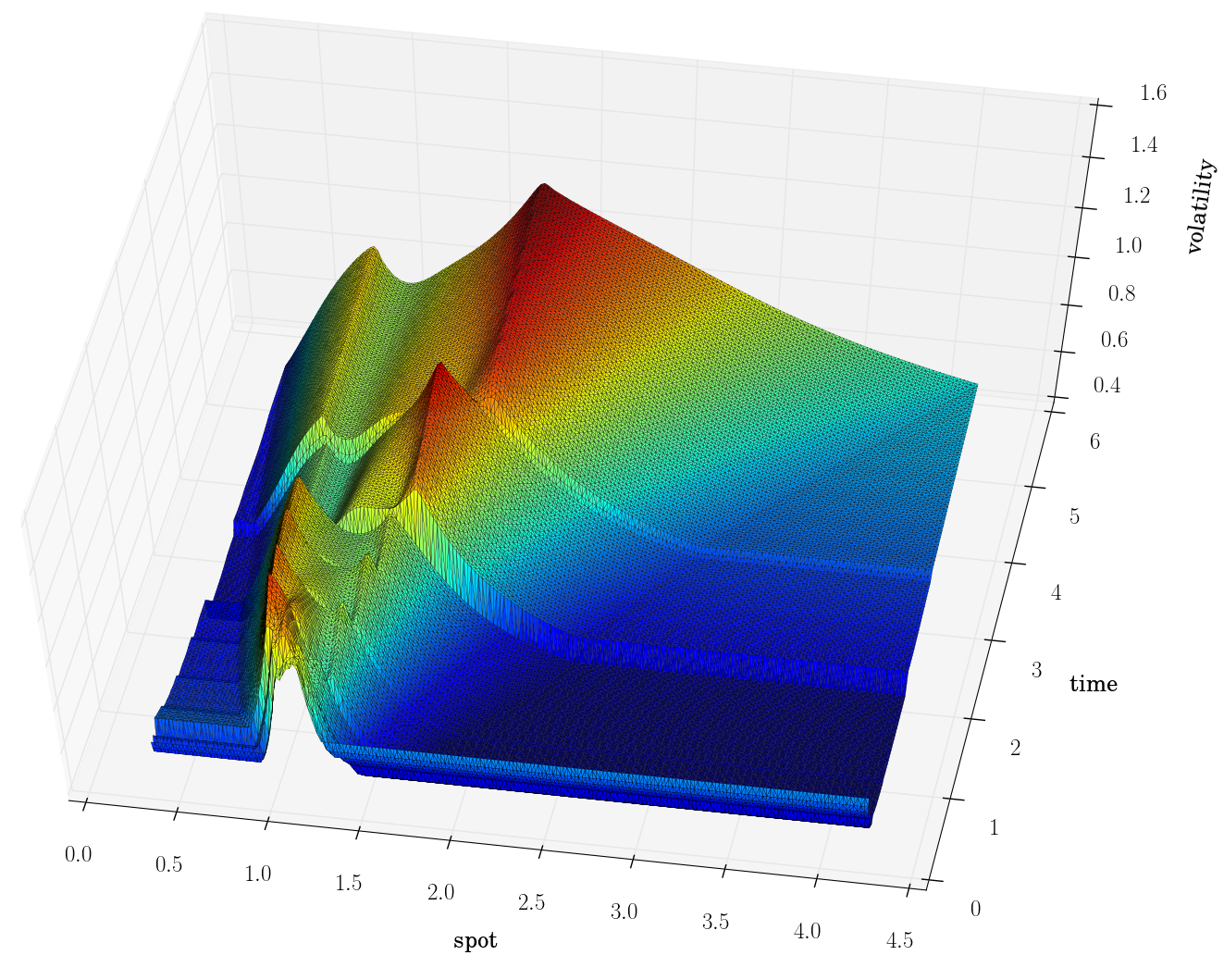} \hfill
\includegraphics[scale=0.33]{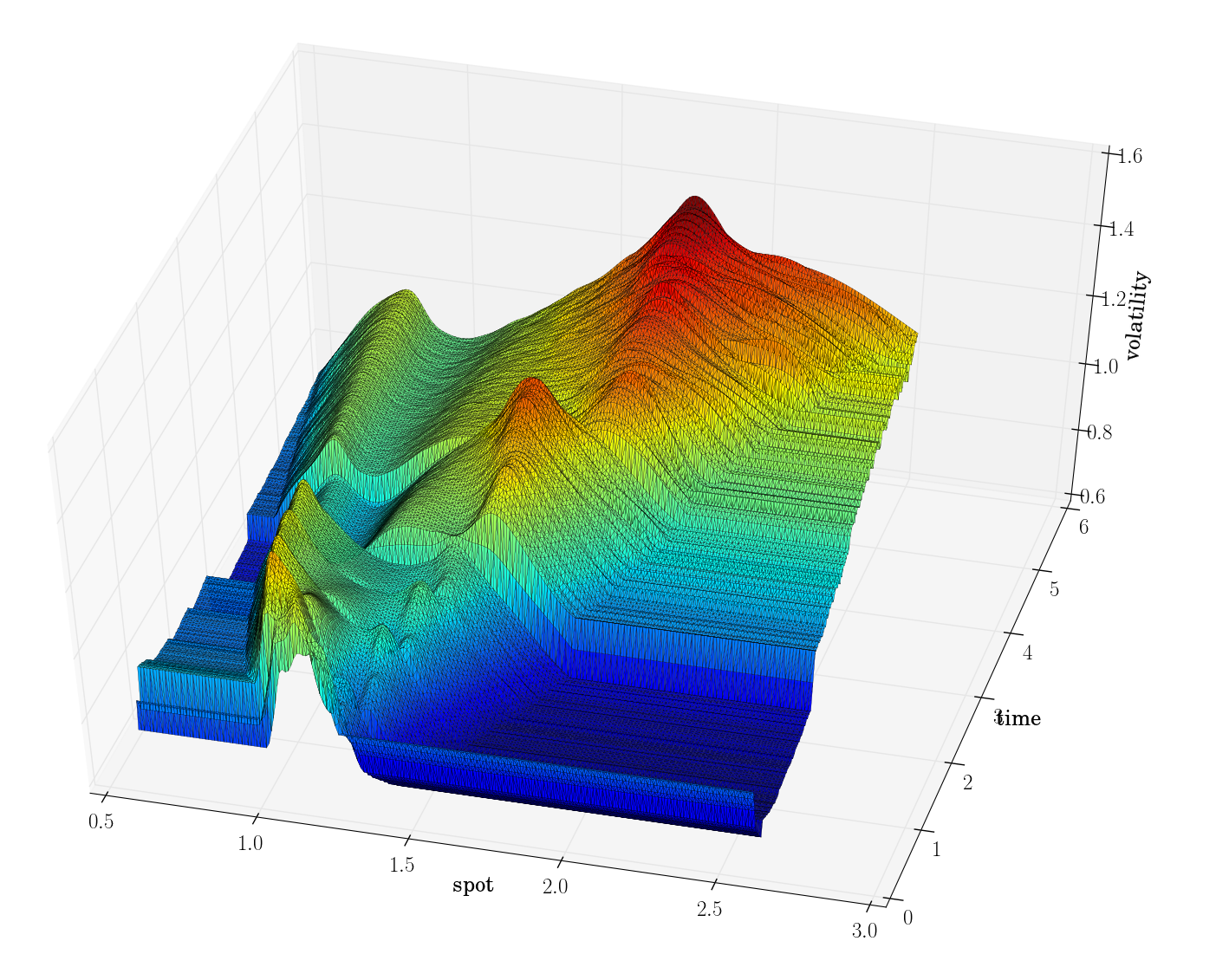}
\par\end{centering}

\caption{Calibrated leverage function $\alpha$ for (left) the 2-factor LSV model
(\ref{eq:modelDefinitionLSVDetRates})
and (right) the 4-factor LSV model
(\ref{eq:Model Definition}).
\label{fig:4DLSVLeverage}
\label{fig:leverage2D}}
\end{figure}

%
%

\section{Four-factor Heston-type LSV 2CIR++ model calibration\label{sec:Calibration-Results}}

In this section, we give the results for the calibration of the main
model (\ref{eq:Model Definition}) and test the efficiency of the algorithm.

We use vanilla options implied volatility data from Bloomberg from the $18/03/2016$ for the currency
pair $\text{EURUSD}$,
namely,
$10\text{D-Put},25\text{D-Put},50\text{D},\,25\text{D-Call},\,10\text{D-Call}$,
for the maturities\footnote{We skip the $\text{7Y}$ and $\text{10Y}$ quotes as they were not liquid enough.}
$\text{3W, 1M, 2M, 3M, 6M, 1Y, 1Y6M, 2Y, 3Y, 5Y.}$

%
%

%
We use historical correlations estimated in \cite{DeGraff2016} from weekly time series data from 2012--2014,
\[ \rho_{Sd}=-0.3024, \qquad \rho_{Sf}=0.1226, \qquad \rho_{df}= 0.6293 \,.\]
We assume that both CIR\scalebox{.9}{\raisebox{.5pt}{++}} processes
are calibrated under their own risk-neutral measure as in
Appendix
\ifarxivvar
\ref{subsec:rate},
\else
D in \cite{cmr18},
\fi
and that the Heston--2CIR\scalebox{.9}{\raisebox{.5pt}{++}}
model is calibrated as in 
Appendix
\ifarxivvar
\ref{sub:Heston-2CIR++-Calibration}
\else
F of \cite{cmr18}
\fi
for the Heston-2CIR++ model.
The parameters are as follows 
\[
\begin{cases}
v_{0}=0.0094,\quad\theta=0.0137,\quad\kappa=1.4124,\quad\rho=-0.1194,\quad\xi=0.2988\,,\\
g_{0}^{d}=0.0001,\quad\theta_{d}=0.5469,\quad\kappa_{d}=0.0837,\quad\rho_{Sd}=0,\quad\xi_{d}=0.0274\,,\\
g_{0}^{f}=0.0001,\quad\theta_{f}=1.1656,\quad\kappa_{f}=0.0110,\quad\rho_{Sf}=0,\quad\xi_{f}=0.0370\,.
\end{cases}
\]

In order to approximate the particle system (\ref{part_sys}), 
we use an extension of the QE-scheme from \cite{Andersen2008} to model (\ref{eq:Model Definition}).
A full description of the time marching scheme is provided in Appendix \ref{appendix:Monte-Carlo-QE-scheme}.

\subsection{Calibration results and efficiency}

For a first illustration of the model fit and
the improvement through the control variates, we calibrate the 4-factor model with $800$ particles
with and without control variates. 
The associated leverage function is plotted in Figure \ref{fig:4DLSVLeverage} (right).

We then plot, in Figure \ref{fig:calibrationFit-4000CVVSNOCV},
the model implied volatility slices for 3M, 1Y, 2Y, and 5Y.
The figure shows a significantly improved fit due to the control variates.
We will analyse the accuracy and convergence in detail in Subsection \ref{subsec:error_red}.

\begin{figure}[h]
\begin{centering}
\includegraphics[scale=0.27]{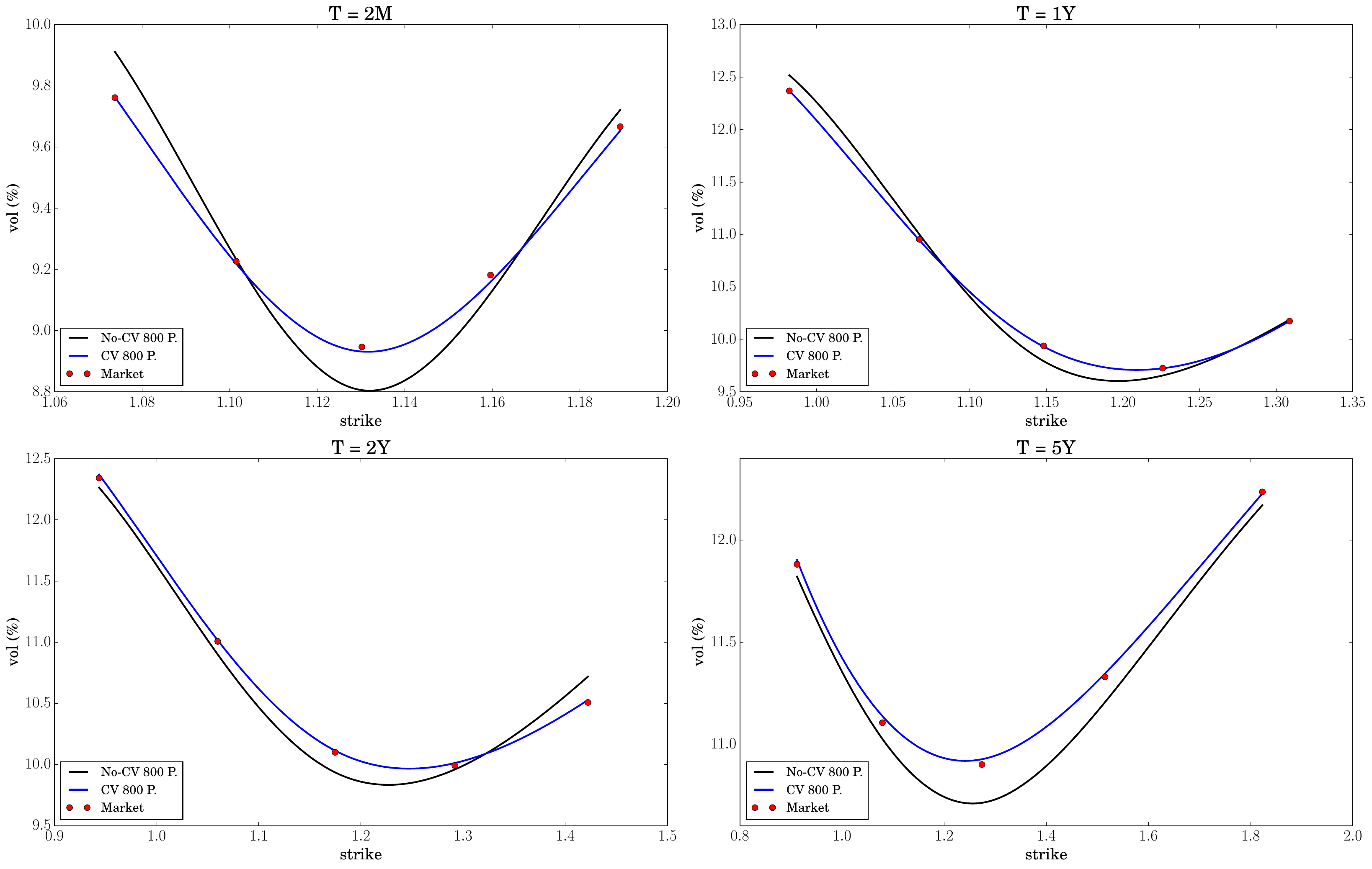}
\par\end{centering}

\centering{}\caption{Calibration fit with 800 particles for the 4-factor LSV model with and
without control variates \label{fig:calibrationFit-4000CVVSNOCV}}
\end{figure}

As regards computational time, the control variate particle method with 20000 particles, which are enough for a reasonably converged solution (see Subsection \ref{subsec:error_red}), took approximately $10$\% of the overall time spent in the PDE calibration of the LSV(2D) model. Most practitioners are familiar with the computational time required to calibrate a 2D LSV model by a forward PDE, a rough estimate being below one second for short- to mid-term
expiries (5Y).
The extra cost to calibrate the full 4D LSV--2CIR++ model is almost negligible with control variates, while the same accuracy without control variates requires more than 60 times the cost of the 2D PDE solver.

\subsection{Variance and error reduction}
\label{subsec:error_red}

In this subsection, we 
compare the results with control
variates (CV) to those with the plain particle method (No-CV) as
a function of the number of particles.
The error measure we use is the absolute error in volatility (in $\%$ units).
For instance, a maximum error
(taken over all quoted deltas and maturities)
 of $0.03\%$ for a $20\%$ market volatility
signifies that the calibrated volatility can be $20.00\%\pm0.03\%$
in the worst case scenario.

We use the calibration routine described in Subsection \ref{sub:Calibration-Algorithm-with-CVParticleMethod}
with and without control variates for 160, {$\text{800}$,}
$\text{4\,000}$, $\text{20\,000}$, $\text{100\,000}$, $\text{500\,000}$
and $\text{2\,500\,000}$ particles.


The results are presented in Figure \ref{fig:CV VS NoCV}.
On the basis that higher correlations between spot-rates and rate-rate will make our control variates more efficient,
we also display in Figure \ref{fig:CV VS NoCV} the calibration errors with no correlations between spot-rates and rate-rate as
a presumed worst-case.

\begin{figure}[h]
\begin{centering}
\includegraphics[scale=0.43]{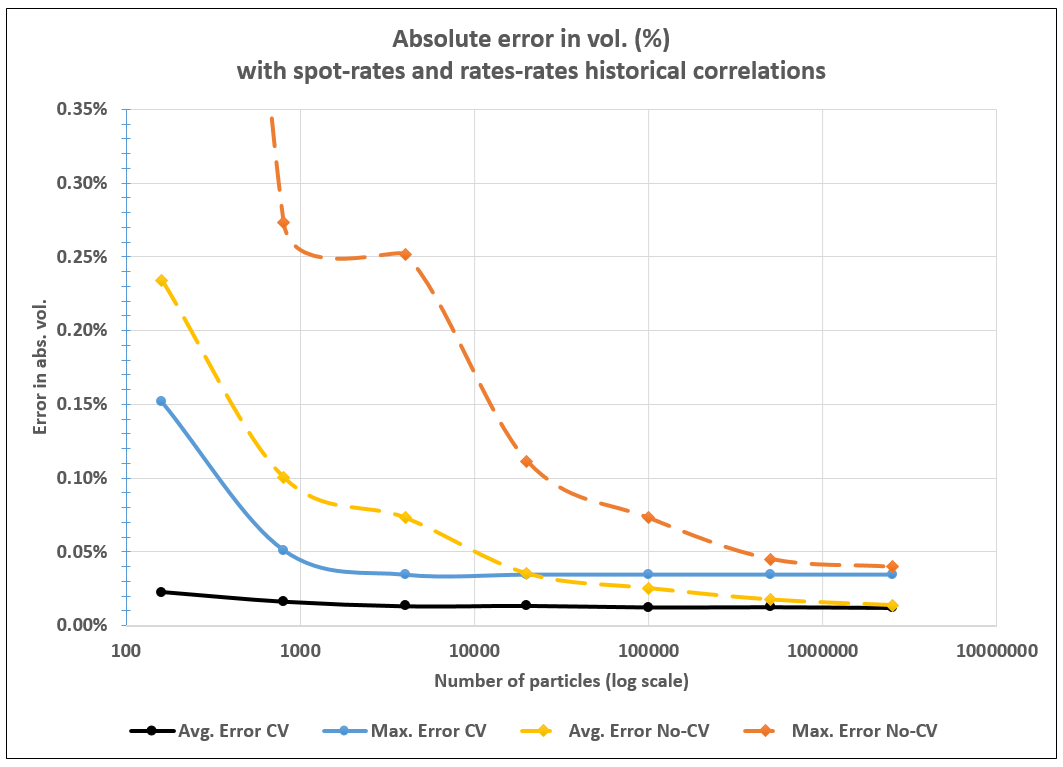}
\includegraphics[scale=0.43]{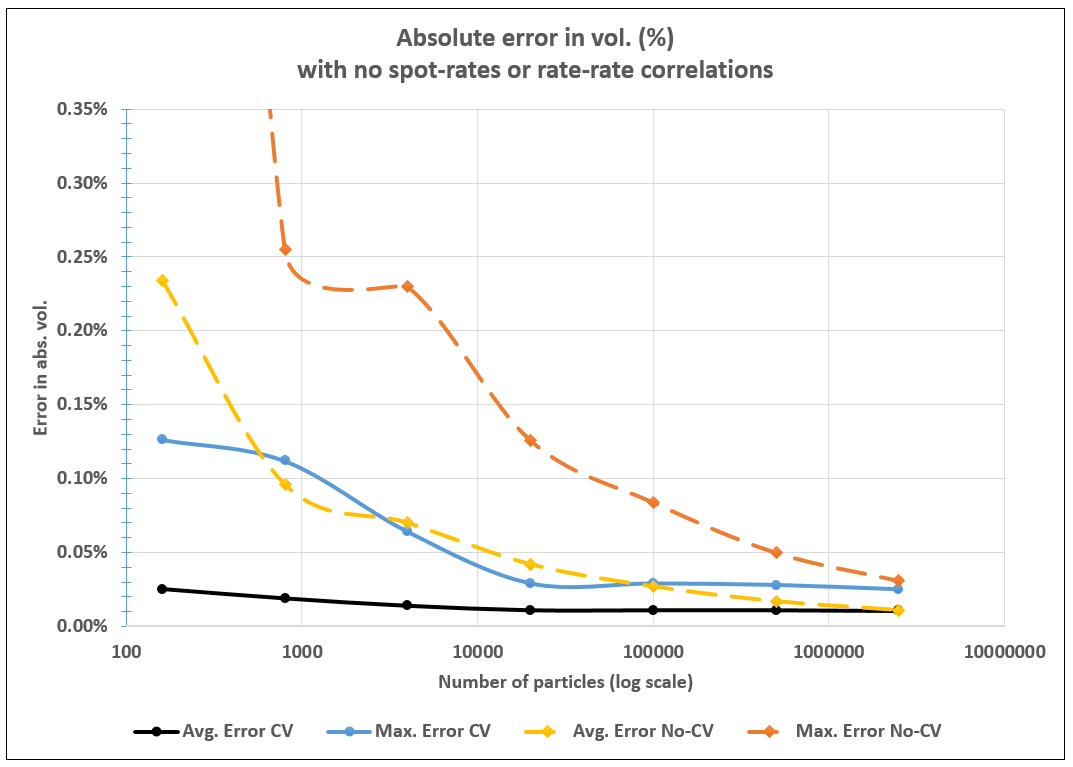}

\par\end{centering}

\caption{Error convergence with and without control variates, either with historical $\rho_{Sd}, \rho_{Sf}, \rho_{df}$ correlations (left) or $\rho_{Sd}=0, \rho_{Sf}=0, \rho_{df}=0$ correlations (right). Error is computed
on all quoted deltas and maturities (log-scale) \label{fig:CV VS NoCV}}
\end{figure}
We infer from the data in Figure \ref{fig:CV VS NoCV} that the use
of control variates greatly improves the general calibration routine.
Convergence in both the maximum error ($0.032\%$)
and in the average error ($0.012\%$) is reached
(i.e., the error from here on is dominated by other
sources, such as the time discretisation error)
for $\text{4\,000}$
particles when using control variates, while the plain
particle method (without variance reduction) only reaches the same
accuracy with $\text{2\,500\,000}$ particles. 
We infer that the control
variates give a $625$-fold speed-up.

Moreover,
from careful data analysis of Figure \ref{fig:CV VS NoCV}, we find
a convergence rate of $0.3$ in the number of particles for
the plain particle method, for both the average and maximum error.
The addition of the control variates preserves the convergence rate but reduces the absolute size of the error significantly. 

We note that the error can be further reduced
by increasing the number of time steps per year. We used $250$ time steps
(such that, on average, there is one
time step per open day) as it already yields a very
accurate calibration at a reasonable computational cost.

Part of the error reduction is due to the conditional control variate
described in Subsection \ref{sub:Variance-Reduction-for-Markovian-Projection}
which can provide very good results for short-term horizons (the other part being due to
the control variates for standard expectations).
In order
to analyse this further, we plot the variance reduction factor from (\ref{eq:variance-reduction-factor}),
estimated with $\text{500\,000}$ particles,
as a function of time in Figure \ref{fig:varReductionFactorInTime}. We estimate
a trend line $1+\frac{C}{T^{1.6}}$, for a given constant $C$, and
thus a very good variance reduction for short-term options.
For longer terms, this still yields a good variance reduction factor
of $5.2$ for the $3Y$ maturity and $2.34$ for $5Y$.

In addition, the other two control variates presented in Subsection
\ref{sub:Variance-Reduction-for-Standard-Expectations} help
control the short- to long-term behaviour as well. 
We plot
both variance reduction factors for $X_{1}^{*}$ and $X_{2}^{*}$ in Figure \ref{fig:varReductionFactorInTime}.
They 
seem to reach
a steady state for maturities around $1.6$ and $6.5$, respectively.
We note that we displayed here average variance reduction values over all strikes, 
while $X_{1}^{*}$ was found to provide significant variance reduction for small strikes.

\begin{figure}[h]
	\begin{centering}
		\includegraphics[scale=0.43]{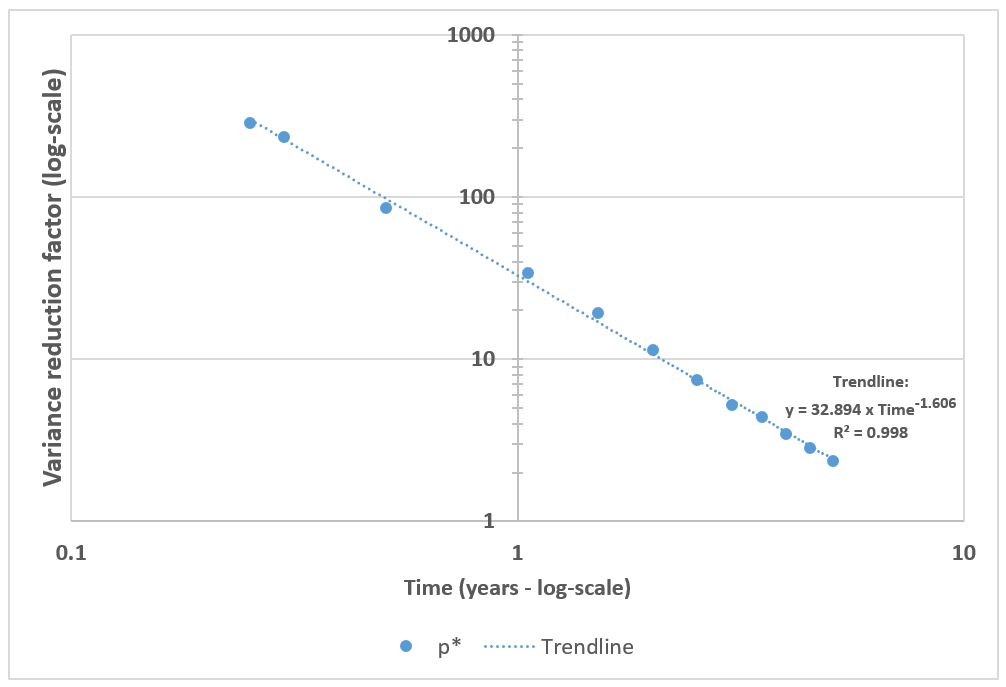} \hfill
		\includegraphics[scale=0.395]{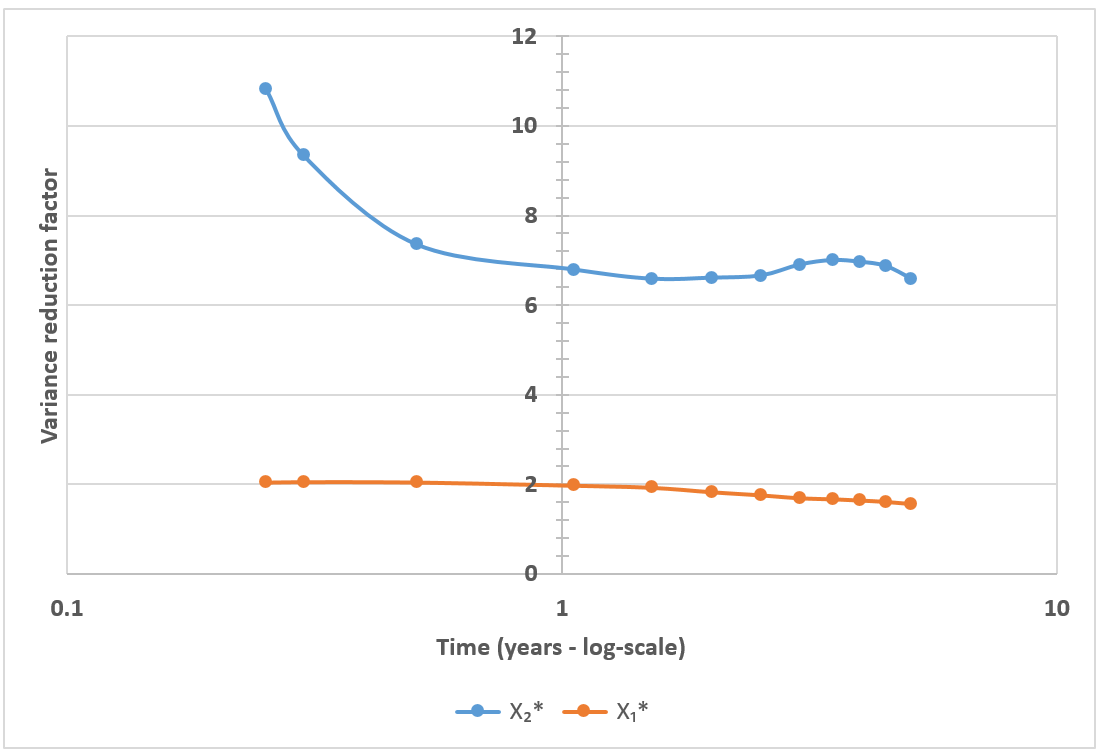}
		\par\end{centering}
	
	\caption{Variance reduction factor as a function of time for the conditional
		expectation estimator $p_{N}^{*}$ (left, using a log-log scale) and the standard
		expectations estimators $X_{1}^{*}$ and $X_{2}^{*}$ (right, using a log scale)
		\label{fig:varReductionFactorInTime}}
\end{figure}

\subsection{Stress scenario\label{sub:Stress-scenario}}

The last calibration result we present is for an extremely stressed
set of rates parameters. This will serve as a robustness test of the
control-variate particle method for very volatile short rate processes.
In order to perform the test, we multiply the calibrated $\xi_{d}$
and $\xi_{f}$ by $20$ as well as divide $\theta_{d}$ and $\theta_{f}$
by $20$. This leads to strongly violated Feller conditions and very high volatilities for the two CIR\scalebox{.9}{\raisebox{.5pt}{++}}
rate processes. We also set $\rho_{Sd}=0, \rho_{Sf}=0, \rho_{df}=0$ as this has shown to be more challenging for our method (as it makes the control variates less effective).  The stress scenario parameter values are 
\[
\begin{cases}
v_{0}=0.0094,\quad\theta=0.0137,\quad\kappa=1.4124,\quad\rho=-0.1194,\quad\xi=0.2988\,,\\
g_{0}^{d}=0.0001,\quad\theta_{d}=\mathbf{0.0273},\quad\kappa_{d}=0.0837,\quad\rho_{Sd}=0,\quad\xi_{d}=\mathbf{0.5480}\,,\\
g_{0}^{f}=0.0001,\quad\theta_{f}=\mathbf{0.0582},\quad\kappa_{f}=0.0110,\quad\rho_{Sf}=0,\quad\xi_{f}=\mathbf{0.7400}\,.
\end{cases}
\]
We emphasise that in practice, the volatility parameters $\xi_{d/f}$
are rarely above $0.06$. We refer the reader to \cite{Brigo2006} for more
details. A calibration summary for the average error in absolute volatility
is displayed in Table \ref{tab:stress_test}. In this stress scenario,
the calibration via control-variate particle method reaches an error of $0.0165\%$
for $20\,000$ particles, whereas $2\,500\,000$ particles are required
with the plain particle method (without control variates) to reach
the same accuracy.
Hence, the control-variate particle method shows a consistent improvement
over the plain particle method even under stress scenarios. The conditional control variate for $p_{N}^{*}$ yields
a variance reduction factor of almost $2$ for the last maturity
($5\text{Y}$), similar to the correlated case. 

\begin{table}[h]
\begin{centering}
\begin{tabular}{cccccc}
\toprule 
\textbf{$N$} & \textbf{$4\,000$} & \textbf{$20\,000$} & \textbf{$100\,000$} & \textbf{$500\,000$} & \textbf{$2\,500\,000$}\tabularnewline
\midrule 
\textbf{Plain particle method} & $0.0855\%$ & $0.0440\%$ & $0.0320\%$ & $0.0255\%$ & $0.0160\%$\tabularnewline
\textbf{Control variate particle method} & $0.0312\%$ & $0.0165\%$ & $0.0133\%$ & $0.0133\%$ & $0.0133\%$\tabularnewline
\bottomrule
\end{tabular}
\par\end{centering}

\centering{}\caption{Average error in absolute volatility (\% unit) for a high-volatility
stress scenario on the rate processes. $N$ is the number of particles.
\label{tab:stress_test}}
\end{table}

\subsection{Impact of stochastic rates}
\label{sec:pricing}
In this subsection, we discuss the pricing of more exotic products,
namely a no-touch option and a target accrual redemption
note (TARN); see Chapter 8 in \cite{Clark2010} and Section 2.2 in \cite{Wystup2007} respectively for a discussion of these products.
Specifically, we assess the impact of stochastic rates on
products embedding knock-out features with mid- to long-term expiries.

\subsubsection*{No-touches}
The foreign no-touch up option pays one EUR at maturity if the exchange
rate has not breached an upper barrier during the product lifespan.
The payout under an arbitrage-free model
	with foreign risk-neutral measure $\mathbb{Q}^{f}$ ,
	for a premium expressed in foreign units (EUR), a notional $N_{\text{EUR}}$ in
	foreign units (EUR) and for a maturity $T$ is 
	\[
	N_{\text{EUR}}\mathbb{E}^{\mathbb{Q}^{f}}\left[D_{T}^{f}\mathbf{1}_{M_{T}<B_{1}}\right]\,,
	\]
	where $M_t = \sup_{0\le u\le t} S_u$ is the running-maximum of the spot $S$ and $B_{1}$ is the
	upper barrier. For the tests, we pick 
	\[
	T=5.0,\qquad B_{1}=1.4\times S_{0}\,.
	\]

\subsubsection*{TARNs}
The specification of the TARN is here as follows: the buyer receives
the forward value $K-S_{t_{i}}$ at fixing date $t_{i}$ if $K>S_{t_{i}}$;
the buyer has to pay $S_{t_{i}}-K$ if $K<S_{t_{i}}$. At each fixing
date, if the amount received is positive, the accrued value is increased
by the paid amount. If at some point in the deal life-cycle, the accrued
amount breaches a target $H_{target}$, the deal is terminated early.
To protect the buyer, an additional knock-out barrier redeems
the deal early if the spot fixes at or above an upper barrier $B_{2}$.
{The payout }under an arbitrage-free model
with domestic risk-neutral measure $\mathbb{Q}^{d}$ , for a premium
expressed in domestic units (USD) and a notional $N_{EUR}$ in foreign
units (EUR), for a maturity $T$ is
\[
N_{\text{EUR}}\mathbb{E}^{\mathbb{Q}^{d}}\left[D_{T}^{d}\sum_{i=0}^{n_{f}}\mathbf{1}_{\tau>t_{i}}\left(K-S_{t_{i}}\right)\right]\,,
\]
where $\tau$ is the (early) redemption date that is either triggered
by the accrual breaching the target $H_{target}$ or the spot $S$
breaching the upper barrier $B_{2}$. In this test, we pick monthly
fixings and 
\[
T=5.0,\qquad n_{f}=12\times5,\qquad K=1.09\times S_{0},\qquad B_{2}=1.45\times S_{0},\qquad H_{target}=6\times\left(K-S_{0}\right)\,.
\]
We display in Table \ref{tab:5Y-NT-TARN} the prices computed with
1 048 575 quasi-Monte Carlo paths (with Sobol sequences and Brownian bridge
construction from \cite{SobolBB1997}) and 365 time steps per year for
the foreign no-touch and TARN contracts under the LV, LSV
and LSV--2CIR++ models. The running maximum is sampled with the
Brownian bridge technique, as described in Chapter 6 of \cite{Glasserman2004}.

\begin{table}[h]
	\begin{centering}
		\begin{tabular}{ccc}
			\toprule 
			\textbf{Model} & \textbf{NPV - No-touch} & \textbf{NPV - TARN}\tabularnewline
			\midrule
			\textbf{LV} & $72.42\%$ & $29.16\%$\tabularnewline
			\textbf{LSV} & $74.59\%$ & $14.69\%$\tabularnewline
			\textbf{LSV--2CIR}\scalebox{.9}{\raisebox{.5pt}{++}} & $73.60\%$ & $17.50\%$\tabularnewline
			\bottomrule
		\end{tabular}
		\par\end{centering}
	
	\centering{}\caption{Monte Carlo NPVs, in \% of $N_\text{EUR}$ and $N_\text{USD}$ respectively, for the 5Y foreign no-touch
		and target accrual redemption note\label{tab:5Y-NT-TARN}}
\end{table}

We infer from the data in Table \ref{tab:5Y-NT-TARN} a relative difference
of $1.35\%$ in price for the foreign no-touch and $-16.07\%$ for
the TARN between the LSV and LSV--2{CIR\scalebox{.9}{\raisebox{.5pt}{++}}}
models, which confirms that adding stochastic rates has a significant
impact even for a 5Y contract. 

Interestingly, in this setup, the price of the foreign no-touch option under the LSV--2{CIR\scalebox{.9}{\raisebox{.5pt}{++}}
	lies in between the LV and LSV models.
	It is common to observe a market price that lies between the LV model price and a Heston-type LSV model price, when the Heston parameters
	have been calibrated to the same vanilla options. Practitioners therefore introduce a mixing
	factor to control the amount of stochastic volatility of the LSV model and to manually match
	the no-touch options quotes (see
	\cite{Clark2010} for details). The introduction of stochastic rates
	seems to achieve a similar behaviour, at least in this example.

\section{Conclusion\label{sec:Conclusion}}

In this paper, we have provided a new and numerically effective method to
calibrate a 4-factor LSV model to vanilla options. In our
numerical tests with market data, we managed
to achieve an approximate $625$-fold speed-up for the calibration using control variates, as compared to the plain particle
method. We have shown that a high accuracy can be obtained with as few
as $\text{4\,000}$ particles (with a maximum error in absolute volatility
of $0.03\%$), and we were able to get a good fit with only $\text{800}$
particles (with a maximum error in absolute volatility of $0.05\%$). 

Using the calibrated leverage function from this paper, we showed that the addition of stochastic rates has a significant impact
on structured products, even more so when barrier features and coupon
detachments are combined for longer-dated contracts. Stochastic rates
become necessary in the modelling if one wishes to price hybrid products
where the rates appear explicitly (for instance, a spread option on
the FX performance and the Libor rate). One could use a second factor
in the CIR\scalebox{.9}{\raisebox{.5pt}{++}} processes to improve
the fit to caps and use the method presented in this article
to calibrate the leverage function.

{\small{}\bibliographystyle{siam}
\bibliography{GeneralBiblio}

\begin{thebibliography}{10}

\bibitem{abergel2010nonlinear}
{\sc F.~Abergel and R.~Tachet}, {\em A nonlinear partial integro-differential
  equation from mathematical finance}, Discrete and Continuous Dynamical
  Systems -- Series A, 27 (2010), pp.~907--917.

\bibitem{Abramowitz1974}
{\sc M.~Abramowitz}, {\em Handbook of Mathematical Functions, with Formulas,
  Graphs, and Mathematical Tables}, Dover Publications, 1974.

\bibitem{Ahlip2013}
{\sc R.~Ahlip and M.~Rutkowski}, {\em {Pricing of foreign exchange options
  under the Heston stochastic volatility model and CIR interest rates}},
  Quantitative Finance, 13 (2013), pp.~955--966.

\bibitem{Fenics2015}
{\sc M. S. Aln{\ae}s and J. Blechta, J. Hake, A. Johansson, B. Kehlet, A. Logg, C. Richardson, J. Ring, M. E. Rognes and G. N. Wells}, {\em The FEniCS Project Version 1.5}, Archive of Numerical Software, 100 (2015), pp. 9-23.

\bibitem{Andersen2008}
{\sc L.~Andersen}, {\em {Simple and efficient simulation of the Heston
  stochastic volatility model}}, The Journal of Computational Finance, 11
  (2008), pp.~1--42.

\bibitem{Andersen2007a}
{\sc L.~Andersen and V.~Piterbarg}, {\em {Moment explosions in stochastic
  volatility models}}, Finance and Stochastics, 11 (2007), pp.~29--50.

\bibitem{bokanowski2017stability}
{\sc O.~Bokanowski, A.~Picarelli, and C.~Reisinger}, {\em Stability results for
  second order backward differentiation schemes for parabolic
  {H}amilton--{J}acobi-{B}ellman equations}, 2017.
\newblock HAL preprint, hal-01628040.

\bibitem{Brigo2001}
{\sc D.~Brigo and F.~Mercurio}, {\em {A deterministic-shift extension of
  analytically-tractable and time-homogeneous short-rate models}}, Finance and
  Stochastics, 5 (2001), pp.~369--387.

\bibitem{Brigo2006}
{\sc D.~Brigo and F.~Mercurio}, {\em {Interest Rate Models: Theory and
  Practice}}, Springer, 2006.

\bibitem{SobolBB1997}
{\sc R.~E. Caflisch, W.~Morokoff, and A.~B. Owen}, {\em Valuation of
  mortgage-backed securities using {B}rownian bridges to reduce effective
  dimension}, Journal of Computational Finance, 1 (1997), pp.~27--46.

\bibitem{Clark2010}
{\sc I.~J. Clark}, {\em {Foreign Exchange Option Pricing: A Practitioner's
  Guide}}, John Wiley \& Sons, 2010.

\bibitem{coleman1994convergence}
{\sc T.~Coleman and Y.~Li}, {\em On the convergence of interior-reflective
  {N}ewton methods for nonlinear minimization subject to bounds}, Mathematical
  Programming, 67 (1994), pp.~189--224.

\bibitem{Coleman1996}
{\sc T.~F. Coleman and Y.~Li}, {\em {An interior trust region approach for
  nonlinear minimization subject to bounds}}, SIAM Journal on Optimization, 6
  (1996), pp.~418--445.

\bibitem{Cox1985}
{\sc J.~Cox, J.~Ingersoll, and S.~Ross}, {\em {A theory of the term structure
  of interest rates}}, Econometrica, 53 (1985), pp.~385--407.

\bibitem{Cozma2015}
{\sc A.~Cozma, M.~Mariapragassam, and C.~Reisinger}, {\em Convergence of an
  {E}uler scheme for a hybrid stochastic-local volatility model with stochastic
  rates in foreign exchange markets}, SIAM Journal of Financial Mathematics,
  (2018).
\newblock Forthcoming.

\bibitem{DeGraff2016}
{\sc C.~S. de~Graaf, D.~Kandhai, and C.~Reisinger}, {\em Efficient exposure
  computation by risk factor decomposition}, arXiv preprint arXiv:1608.01197,
  (2016).

\bibitem{Deelstra2013}
{\sc G.~Deelstra and G.~Rayee}, {\em {Local volatility pricing models for
  long-dated FX derivatives}}, Applied Mathematical Finance, 20 (2013),
  pp.~380--402.

\bibitem{Stoep2014}
{\sc A.~W.~Van der Stoep, L.~A. Grzelak, and C.~W. Oosterlee}, {\em {The Heston
  stochastic-local volatility model: efficient Monte Carlo simulation}},
  International Journal of Theoretical and Applied Finance, 17 (2014),
  pp.~1--30.

\bibitem{Dierckx1981}
{\sc P.~Dierckx}, {\em An algorithm for surface-fitting with spline functions},
  IMA Journal of Numerical Analysis, 1 (1981), pp.~267--283.

\bibitem{Dupire}
{\sc B.~Dupire}, {\em A unified theory of volatility}, in Derivatives Pricing:
  The Classic Collection, P.~Carr, ed., Risk Books, 2004.

\bibitem{Floch}
{\sc F.~L. Floch}, {\em {TR-BDF2} for stable {A}merican option pricing},
  Journal of {C}omputational {F}inance, 17 (2014).

\bibitem{gilescarter}
{\sc M.~Giles and R.~Carter}, {\em Convergence analysis of {Crank-Nicolson} and
  {Rannacher} time-marching}, Journal of Computational Finance, 9 (2006),
  pp.~89--112.

\bibitem{Glasserman2004}
{\sc P.~Glasserman}, {\em {Monte Carlo Methods in Financial Engineering}},
  vol.~53 of Stochastic Modelling and Applied Probability, Springer, 2003.

\bibitem{Gough2009}
{\sc B.~Gough}, {\em GNU Scientific Library Reference Manual - Third Edition},
  Network Theory Ltd., 3rd~ed., 2009.

\bibitem{Grzelak2011}
{\sc L.~A. Grzelak and C.~W. Oosterlee}, {\em {On the Heston model with
  stochastic interest rates}}, SIAM Journal on Financial Mathematics, 2 (2011),
  pp.~255--286.

\bibitem{Guyon2011}
{\sc J.~Guyon and P.~Henry-Labord{\`e}re}, {\em Being particular about
  calibration}, Risk Magazine,  January (2012).

\bibitem{GuyonLabordere2013}
{\sc J.~Guyon and P.~Henry-Labord{\`e}re}, {\em Nonlinear Option Pricing},
  Chapman and Hall/CRC Financial Mathematics, Chapman and Hall/CRC, 2013.

\bibitem{Heston1993a}
{\sc S.~Heston}, {\em {A closed-form solution for options with stochastic
  volatility with applications to bond and currency options}}, Review of
  Financial Studies, 6 (1993), pp.~327--343.

\bibitem{Hurd2008}
{\sc T.~R. Hurd and A.~Kuznetsov}, {\em {Explicit formulas for Laplace
  transforms of stochastic integrals}}, Markov Processes and Related Fields, 14
  (2008), pp.~277--290.

\bibitem{Jackel2015}
{\sc P.~Jäckel}, {\em Let's be rational}, Wilmott, 2015 (2015), pp.~40--53.

\bibitem{Lucic2008}
{\sc V.~Lucic}, {\em Boundary conditions for computing densities in hybrid
  models via {PDE} methods}, Stochastics: An International Journal of
  Probability and Stochastic Processes, 84 (2012), pp.~705--718.

\bibitem{McKean1966}
{\sc H.~P. McKean}, {\em A class of {M}arkov processes associated with
  nonlinear parabolic equations}, Proceedings of the National Academy of
  Sciences of the United States of America, 56 (1966), pp.~1907--1911.

\bibitem{MXLV2007}
{\sc {Murex Analytics}}, {\em Local volatility model}, Murex,  (2007).
\newblock Internal.

\bibitem{FEMBook2005}
{\sc R.~T. O.C.~Zienkiewicz and J.~Zhu}, {\em The Finite Element Method Set.
  Its Basis and Fundamentals}, Butterworth-Heinemann, 6~ed., 2005.

\bibitem{Oksendal2014}
{\sc B.~{\O}ksendal}, {\em Stochastic Differential Equations: An Introduction
  with Applications (Universitext)}, Springer, Jan 2014.

\bibitem{pooley2003convergence}
{\sc D.~M. Pooley, K.~R. Vetzal, and P.~A. Forsyth}, {\em Convergence remedies
  for non-smooth payoffs in option pricing}, Journal of Computational Finance,
  6 (2003), pp.~25--40.

\bibitem{quarteroni2008numerical}
{\sc A.~Quarteroni and A.~Valli}, {\em Numerical approximation of partial
  differential equations}, vol.~23, Springer Science \& Business Media, 2008.

\bibitem{Rehai2006}
{\sc A.~Reghai}, {\em The hybrid most likely path}, Risk Magazine, April
  (2006).

\bibitem{Ren&Madan2007}
{\sc Y.~Ren, D.~Madan, and M.~Qian~Qian}, {\em Calibrating and pricing with
  embedded local volatility models}, Risk Magazine, September (2007).

\bibitem{Rogers&Williams2000}
{\sc C.~L. C.~G. Rogers and D.~Williams}, {\em Diffusions, {M}arkov processes,
  and Martingales. {V}ol. 2}, Cambridge Mathematical Library, Cambridge
  University Press, Cambridge, 2000.

\bibitem{Schobel1999}
{\sc R.~Sch\"obel and J.~Zhu}, {\em {Stochastic volatility with an
  Ornstein--Uhlenbeck process: an extension}}, European Finance Review, 3
  (1999), pp.~23--46.

\bibitem{Silverman1986}
{\sc B.~W. Silverman}, {\em Density estimation for statistics and data
  analysis}, Biometrical Journal, 30 (1988), pp.~876--877.

\bibitem{Storn1997}
{\sc R.~Storn and K.~Price}, {\em {Differential evolution - a simple and
  efficient heuristic for global optimization over continuous spaces}}, Journal
  of Global Optimization, 11 (1997), pp.~341--359.

\bibitem{Sznitman1991}
{\sc A.-S. Sznitman}, {\em Ecole d'Et{\'e} de Probabilit{\'e}s de Saint-Flour
  XIX}, Springer, 1991, ch.~Topics in propagation of chaos, pp.~165--251.

\bibitem{Tian2015}
{\sc Y.~Tian, Z.~Zhu, G.~Lee, F.~Klebaner, and K.~Hamza}, {\em {Calibrating and
  pricing with a stochastic-local volatility model}}, Journal of Derivatives,
  22 (2015), pp.~21--39.

\bibitem{Tur2014}
{\sc L.~Tur}, {\em Local volatility calibration with fixed-point algorithm},
  GDF Suez Trading,  (2014).
\newblock Internal.

\bibitem{VanderStoep2016}
{\sc A.~W. Van~der Stoep, L.~A. Grzelak, and C.~W. Oosterlee}, {\em A novel
  {Monte Carlo} approach to hybrid local volatility models}, Quantitative
  Finance, forthcoming,  (2017).

\bibitem{Haastrecht2009}
{\sc A.~Van~Haastrecht, R.~Lord, A.~Pelsser, and D.~Schrager}, {\em {Pricing
  long-maturity equity and FX derivatives with stochastic interest rates and
  stochastic volatility}}, Insurance: Mathematics and Economics, 45 (2009),
  pp.~436--448.

\bibitem{Haastrecht2011}
{\sc A.~Van~Haastrecht and A.~Pelsser}, {\em {Generic pricing of FX, inflation
  and stock options under stochastic interest rates and stochastic
  volatility}}, Quantitative Finance, 11 (2011), pp.~665--691.

\bibitem{White2013}
{\sc R.~White}, {\em Numerical solution to {PDEs} with financial applications},
  {O}pen{G}amma {Q}uantitative {R}esearch,  (2013).
\newblock
  https://developers.opengamma.com/quantitative-research/numerical-solutions-to-pdes-with-financial-applications-opengamma.pdf.

\bibitem{wyns2015convergence}
{\sc M.~Wyns}, {\em Convergence analysis of the {Modified Craig-Sneyd} scheme
  for two-dimensional convection-diffusion equations with nonsmooth initial
  data}, arXiv preprint arXiv:1508.04296,  (2015).

\bibitem{Wyns2016FV}
{\sc M.~Wyns and J.~D. Toit}, {\em A finite volume - alternating direction
  implicit approach for the calibration of stochastic local volatility models},
  arXiv preprint arXiv:1611.02961,  (2016).

\bibitem{Wystup2007}
{\sc U.~Wystup}, {\em FX Options and Structured Products (The Wiley Finance
  Series)}, The Wiley Finance Series, Wiley, 2007.


\end{thebibliography}
}{\small \par}

\ifsiamvar
	\pagebreak
\else
\fi
\appendix



\section{Proof of Proposition \ref{prop:fwd_eq}
and Theorem \ref{thm:alphaFormula}}
\label{app:proof}

We first state a necessary auxiliary result, which is an adaptation of Tanaka's formula in Chapter 4 of \cite{Rogers&Williams2000}, where the integrand $D$ in the local time integral is $1$.
\begin{proposition}
\label{prop:DDeltaD2x_Dl}
On a filtered probability space ({\Large{}$\chi$},
$\mathcal{F},\left\{ \mathcal{F}_{t}\right\} _{t\geq0},\mathbb{Q}^{d})$,
let $D$ and $X$ be two $\mathcal{F}_{t}$-adapted continuous semi-martingales, with $D$ positive and integrable, $\left(l_{t}^{a}\right)_{t\geq0}$ the local time of $X$ at
level a and, for all $n>0$, 
\begin{equation}
\delta_{n}^{a}\left(x\right)=\begin{cases}
0, & \left|x-a\right|>\frac{1}{n},\\
\frac{n}{2}, & \left|x-a\right|\leq\frac{1}{n}\,.
\end{cases}\label{eq:2delta_n^K}
\end{equation}
Then, for any $T>0$ and $0 < t \leq T$,  $\int_{0}^{t}D_{s}\delta_{n}^{a}\left(X_{s}\right)\,d\langle X\rangle_{s}$
converges almost surely, and uniformly in time, to $\int_{0}^{t}D_{s}\,dl_s^{a}$.
\end{proposition}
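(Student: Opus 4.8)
The plan is to reduce to the classical weight-free statement---the symmetric occupation mollifier converging to local time---and then transfer the weight $D$ by a purely pathwise argument. Write $A_t^n := \int_0^t \delta_n^a(X_s)\,d\langle X\rangle_s$; since $\delta_n^a(x)=\tfrac n2\mathbf 1_{\{|x-a|\le 1/n\}}$ has integral one, $\delta_n^a$ is an approximate identity. Because $\langle X\rangle$ is a continuous finite-variation (nondecreasing) process, each ``stochastic integral'' in the statement---both $\int_0^t D_s\delta_n^a(X_s)\,d\langle X\rangle_s$ and $\int_0^t D_s\,dl_s^a$---is an ordinary pathwise Lebesgue--Stieltjes integral, so the whole assertion can be proved $\omega$ by $\omega$ once the convergence of $A^n$ is established.

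First I would show that, on an almost sure event $\Omega_0$, $\sup_{t\le T}|A_t^n-l_t^a|\to 0$. Using the occupation times formula (Chapter~IV of \cite{Rogers&Williams2000}) one has, on a single almost sure event, $A_t^n=\tfrac n2\int_{a-1/n}^{a+1/n}l_t^b\,db$ for all $t$, where $b\mapsto l_t^b$ is c\`adl\`ag, locally uniformly in $t$; averaging over the shrinking symmetric interval then gives $A_t^n\to l_t^a$ uniformly on $[0,T]$ (the level $a$ being treated with the continuity/symmetry convention implicit in the definition of $l^a$ in Tanaka's formula). Equivalently, one may start from the mollified It\^o--Tanaka formula for a $C^2$ primitive of $2\delta_n^a$ and upgrade the resulting convergence to uniform via the elementary fact that nondecreasing functions on $[0,T]$ converging pointwise to a continuous limit converge uniformly.

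Then, for fixed $\omega\in\Omega_0$, I would argue as follows. The finite non-atomic measures $\mu_n:=dA_\cdot^n(\omega)$ and $\mu:=dl_\cdot^a(\omega)$ on $[0,T]$ have uniformly convergent distribution functions and $\mu_n([0,T])\to\mu([0,T])$, so $\mu_n\rightharpoonup\mu$ weakly. As $s\mapsto D_s(\omega)$ is continuous, hence uniformly continuous and bounded on $[0,T]$, fix $\varepsilon>0$ and a partition $0=s_0<\cdots<s_K=T$ fine enough that the step function $\bar D=\sum_{j}D_{s_{j-1}}(\omega)\mathbf 1_{(s_{j-1},s_j]}$ obeys $\|D_\cdot(\omega)-\bar D\|_{\infty}\le\varepsilon$ on $[0,T]$. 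Then $\int_0^t\bar D\,d\mu_n=\sum_j D_{s_{j-1}}(\omega)\big(A_{s_j\wedge t}^n-A_{s_{j-1}\wedge t}^n\big)\to\int_0^t\bar D\,d\mu$ uniformly in $t\le T$, being a finite linear combination of increments of $A_\cdot^n$ that converge uniformly in $t$; replacing $\bar D$ by $D$ alters each side by at most $\varepsilon\,\mu_n([0,T])$, respectively $\varepsilon\,\mu([0,T])$, which are bounded uniformly in $n$ and $t$. Sending $n\to\infty$ and then $\varepsilon\to0$ yields $\sup_{t\le T}\big|\int_0^t D_s\delta_n^a(X_s)\,d\langle X\rangle_s-\int_0^t D_s\,dl_s^a\big|\to0$ on $\Omega_0$, which is the claim.

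I expect the main obstacle to be the bookkeeping required to carry the continuous---but not monotone---weight $D$ while keeping the convergence \emph{uniform in time}: weak convergence of the occupation measures disposes of the single endpoint $t=T$ immediately, but the uniform-in-$t$ statement forces one to combine \emph{uniform} (not merely pointwise) convergence of $A_\cdot^n$ with the step-function approximation of $D$. A secondary point to state carefully is the choice of version of $l^a$, so that the \emph{symmetric} mollifier $\delta_n^a$ converges to it (rather than to $\tfrac12(l^{a+}+l^{a-})$); this is harmless under the standing assumptions but should not be glossed over.
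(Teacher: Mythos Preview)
Your proof is correct and takes a genuinely different route from the paper's. The paper essentially reproves Tanaka's formula with the weight $D$ built in: it applies the It\^o--Doeblin formula to $D_t f_n(X_t)$ (where $\tfrac12 f_n''=\delta_n^a$), decomposes both $X$ and $D$ into their martingale and finite-variation parts, and then pushes each resulting term to its limit using Doob's $L^2$ inequality, dominated/monotone convergence, and the Kunita--Watanabe inequality; the limit is identified as $\int_0^t D_s\,dl_s^a$ by integrating Tanaka's formula by parts against $D$. Your argument instead black-boxes the unweighted statement (via the occupation-times formula and a P\'olya/Dini step to upgrade pointwise to uniform convergence of the nondecreasing $A^n$), and then transfers the weight $D$ by a purely pathwise Lebesgue--Stieltjes approximation: step-function approximation of the continuous path $s\mapsto D_s(\omega)$ together with uniform convergence of the increments of $A^n$. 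What your approach buys is economy and generality---you never use that $D$ is a semimartingale, only that its paths are continuous, so the stochastic calculus machinery (localisation, Doob, Kunita--Watanabe) is avoided entirely. What the paper's approach buys is self-containment: it does not presuppose the occupation-times formula or the unweighted convergence, and it makes the role of the mollified $f_n$ explicit. Your caveat about the symmetric mollifier and the choice of version of $l^a$ is well placed; the paper's proof relies on the same convention without flagging it.
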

\begin{proof}

\ifarxivvar

We closely follow Section 45 of \cite{Rogers&Williams2000}, but use a more concrete expression of the regularisation 
function. The {local time at level $a$
is defined as a continuous adapted increasing process such that
\begin{equation}
\left|X_{t}-a\right|-\left|X_{0}-a\right|=\int_{0}^{t} \sgn\left(X_{s}-a\right)\,dX_{s}+l_{t}^{a}\label{eq:localTimeDefByTanaka}\,,
\end{equation}
 with $\sgn\left(x\right)=-1$ for $x\leq0$ and $\sgn\left(x\right)=1$
for $x>0$. We define a sequence of functions $\left(f_{n}\right)_{n\geq0}$ for all $x\in\mathbb{R}$, as in Chapter 4 of \cite{Oksendal2014}
\[
f_{n}\left(x\right)=\begin{cases}
\left|x-a\right|, & \left|x-a\right|>\frac{1}{n},\\
\frac{1}{2}\left(\frac{1}{n}+n\left(x-a\right)^{2}\right), & \left|x-a\right|\leq\frac{1}{n}\,.
\end{cases}
\]
Hence, for all $n>\text{0}$, 
$\frac{1}{2}f_{n}^{''}=\delta_{n}^{a}$ a.e.} We recall
from the proof of Tanaka's formula in \cite{Rogers&Williams2000} 
 that $\int_{0}^{t}\delta_{n}^{a}\left(X_{s}\right)\,d\left\langle X\right\rangle _{s}$
converges almost surely to $l_{t}^{a}$ (uniformly in $t$)%
{. }%
{}%
{}%
We note that the sequence $f_{n}$ converges
uniformly to $x\rightarrow\left|x-a\right|$ and $f_{n}^{'}$ converges
point-wise to $\sgn\left(x-a\right)$. By the Itô-Doeblin
formula we can write
\begin{eqnarray}
D_{t}f_{n}\left(X_{t}\right)-D_{0}f_{n}\left(X_{0}\right) & = & \int_{0}^{t}f_{n}\left(X_{s}\right)\,dD_{s}+\int_{0}^{t}D_{s}f_{n}^{'}\left(X_{s}\right)\,dX_{s}\label{eq:itoSmoothedTanaka}\\
 & + & \frac{1}{2}\int_{0}^{t}D_{s}f_{n}^{''}\left(X_{s}\right)\,d\left\langle X\right\rangle _{s}+\int_{0}^{t}\,d\left\langle D,f_{n}\left(X\right)\right\rangle _{s}.\nonumber
\end{eqnarray}
We denote 
\[
C_{t}^{n}=\frac{1}{2}\int_{0}^{t}D_{s}f_{n}^{''}\left(X_{t}\right)\,d\left\langle X\right\rangle _{s}
\]
and, since 
$f_{n}^{''}\left(x\right)=0$
for any $x$ such that $\left|x-a\right|\geq\frac{1}{n}$, we have
\[
\int_{0}^{t}\mathbf{1}_{\left|X_{s}-a\right|>\frac{1}{n}}\,dC_{s}^{n}=0\,.
\]
{Also, from the definition of $f_n$, for all $x\in\mathbb{R}$,
\[
\sgn\left(x-a\right)-f_{n}^{'}\left(x\right)=\begin{cases}
0, & \left|x-a\right|\geq\frac{1}{n},\\
\sgn\left(x-a\right)-n\left(x-a\right), & \left|x-a\right|<\frac{1}{n},
\end{cases}
\]
 and then for any given $x$, and $n>0$, both $D_{t}\left|\sgn\left(x-a\right)-f_{n}^{'}\left(x\right)\right|$
and $D_{t}\left|\left|x-a\right|-f_{n}\left(x\right)\right|$ are smaller than $D_{t}$ which is integrable. }%
 Let $X_{t}=X_{0}+M_{t}+A_{t}$ be the canonical
decomposition of $X$ and $D_{t}=D_{0}+N_{t}+R_{t}$ the canonical
decomposition of $D$. 
Localisation allows us to reduce the problem
to the case where $M$ and $N$ are bounded and $A$ and $R$ are
of bounded variation. Then,

\[
\left\Vert \int_{0}^{T}D_{s}\left(\sgn\left(X_{s}-a\right)-f_{n}^{'}\left(X_{s}\right)\right)\,dM_{s}\right\Vert _{2}^{2}=\mathbb{E}\left[\int_{0}^{T}\left(D_{s}\left(\sgn\left(X_{s}-a\right)-f_{n}^{'}\left(X_{s}\right)\right)\right)^{2}\,d\left\langle M\right\rangle _{s}\right]\,,
\]
for which the right-hand-side goes to zero when $n$ goes to infinity. By Doob's $L^{2}$
martingale inequality, we can write 
\begin{eqnarray*}
\left\Vert \sup_{t\in\left[0,T\right]}\left|\int_{0}^{t}D_{s}\left(\sgn\left(X_{s}-a\right)-f_{n}^{'}\left(X_{s}\right)\right)\,dM_{s}\right|\right\Vert _{2}&\leq&2\left\Vert \int_{0}^{T}D_{s}\left(\sgn\left(X_{s}-a\right)-f_{n}^{'}\left(X_{s}\right)\right)\,dM_{s}\right\Vert _{2}\,,
\end{eqnarray*}
to conclude that
\begin{eqnarray}
\sup_{t\in\left[0,T\right]}\left|\int_{0}^{t}D_{s}\left(\sgn\left(X_{s}-a\right)-f_{n}^{'}\left(X_{s}\right)\right)\,dM_{s}\right|\rightarrow0\, \label{eq:supcvgL2}
\end{eqnarray}
in $L^{2}$ and in probability. We may then assume that \eqref{eq:supcvgL2} also holds almost surely (since we could work with a subsequence for which the statement is true instead).
{Similarly, we have
\[
\sup_{t\in\left[0,T\right]}\left|\int_{0}^{t}\left(\left|X_{t}-a\right|-f_{n}\left(X_{s}\right)\right)\,dN_{s}\right|\rightarrow0 \;\;a.s.
\]
Also,
\begin{eqnarray*}
\left|\int_{0}^{t}D_{s}\left(\sgn\left(X_{s}-a\right)-f_{n}^{'}\left(X_{s}\right)\right)\,dA_{s}\right| & \leq & \int_{0}^{t}D_{s}\left|\left(\sgn\left(X_{s}-a\right)-f_{n}^{'}\left(X_{s}\right)\right)\right|\left|\,dA_{s}\right|\\
 & \leq & \int_{0}^{t}D_{s}\left|\,dA_{s}\right|\,,
\end{eqnarray*}
and
\[
\sup_{t\in\left[0,T\right]}\left|\int_{0}^{t}D_{s}\left(\sgn\left(X_{s}-a\right)-f_{n}^{'}\left(X_{s}\right)\right)\,dA_{s}\right|\leq\int_{0}^{T}D_{s}\left|\,dA_{s}\right|\,.
\]
The monotone-convergence theorem allows us to conclude that
\[
\sup_{t\in\left[0,T\right]}\left|\int_{0}^{t}D_{s}\left(\sgn\left(X_{s}-a\right)-f_{n}^{'}\left(X_{s}\right)\right)\,dA_{s}\right|\rightarrow0\,
\]
in $L^{1}$, in probability and almost surely (on passing to a sub-sequence on $n$ if necessary). Hence,
\begin{equation}
\int_{0}^{t}D_{s}f_{n}^{'}\left(X_{s}\right)\,dA_{s}\rightarrow\int_{0}^{t}D_{s}\sgn\left(X_{s}\right)\,dA_{s}\label{eq:fv_monotone}
\end{equation}
 almost surely and uniformly in time}%
{. Similarly,
\begin{eqnarray*}
\left|\int_{0}^{t}\left(f_{n}\left(X_{s}\right)-\left|X_{s}-a\right|\right)\,dR_{s}\right| & \leq & \int_{0}^{t}\,\left|dR_{s}\right|\,,
\end{eqnarray*}
 and we get 
\begin{eqnarray*}
\int_{0}^{t}f_{n}\left(X_{s}\right)\,dR_{s} & \rightarrow & \int_{0}^{t}\left|X_{s}-a\right|\,dR_{s}
\end{eqnarray*}
almost surely and uniformly in time}%
{. Additionally, we can write
\[
\int_{0}^{t}\,d\left\langle D,f_{n}\left(X\right)\right\rangle _{s}=\int_{0}^{t}f_{n}^{'}\left(X_{s}\right)\,d\left\langle N,M\right\rangle _{s}\,.
\]
From the Kunita-Watanabe inequality,

\begin{eqnarray*}
\int_{0}^{t}\,\left|\left(\sgn\left(X_{s}-a\right)-f_{n}^{'}\left(X_{s}\right)\right)\right|\,d\left\langle N,M\right\rangle _{s} & \leq & \int_{0}^{t}\left|\left(\sgn\left(X_{s}-a\right)-f_{n}^{'}\left(X_{s}\right)\right)\right|\,\left|d\left\langle N,M\right\rangle _{s}\right|\\
 & \hspace{-1 cm} \leq & \hspace{-0.5 cm} \sqrt{\int_{0}^{t}\left|\left(\sgn\left(X_{s}-a\right)-f_{n}^{'}\left(X_{s}\right)\right)\right|^{2}\,d\left\langle M\right\rangle _{s}}\sqrt{\int_{0}^{t}\,d\left\langle N\right\rangle _{s}}.
\end{eqnarray*}
Since $\left\langle N\right\rangle _{s}$ and $\left\langle M\right\rangle _{s}$
are increasing processes of finite variation, we proceed
as in (\ref{eq:fv_monotone}) and conclude
\[
\int_{0}^{t}f_{n}^{'}\left(X_{s}\right)\,d\left\langle N,M\right\rangle _{s}\rightarrow\int_{0}^{t}\sgn\left(X_{s}-a\right)\,d\left\langle N,M\right\rangle _{s}\,a.s.
\]
Hence, from (\ref{eq:itoSmoothedTanaka}), $C^{n}$ converges
to a limit $\zeta$ almost surely (uniformly in time $t$). }%
{Applying integration
by parts to the Tanaka formula (\ref{eq:localTimeDefByTanaka}), we can write
\begin{eqnarray*}
D_{t}\left|X_{t}-a\right|-D_{0}\left|X_{0}-a\right| & = & \int_{0}^{t}\left|X_{s}-a\right|\,dD_{s}+\int_{0}^{t}D_{s}\sgn\left(X_{s}-a\right)\,dX_{s}\\
 & + & \int_{0}^{t}D_{s}\,dl_s^{a}+\int_{0}^{t}\,d\left\langle D,\left|X-a\right|\right\rangle _{s} \,,
\end{eqnarray*}
where Tanaka's formula also allows us to write 

\[
\int_{0}^{t}\,d\left\langle D,\left|X-a\right|\right\rangle _{s}=\int_{0}^{t}\sgn\left(X_{s}-a\right)\,d\left\langle N,M\right\rangle _{s}\,,
\]
and conclude that 
\[
\zeta_{t}=\int_{0}^{t}D_{s}\,dl_s^{a}\,.
\]
}

\else

Follows as in Section 45 of \cite{Rogers&Williams2000} with a few additional steps. A complete proof is given in Appendix \ref{app:proof}
of \cite{cmr18}.

\fi

\end{proof}

We now state and prove two lemmas necessary for the derivation of Theorem \ref{thm:alphaFormula}.
Lemma \ref{prop:EDdl} provides a link between the local time of a process and
its density function.


\begin{lemma}
\label{prop:EDdl}
Given a filtered probability space ({\Large{}$\chi$}, $\mathcal{F},\left\{ \mathcal{F}_{t}\right\} _{t\geq0},\mathbb{Q}^{d})$,
let $W$ be a standard Brownian motion and $\mu$, $Y$ two $\mathcal{F}_t$-adapted processes with $Y$ continuous,
with finite second moment and
\[
\int_{0}^{t} \left( \left| \mu_{u} \right| + Y_{u}^{2} \right) \,d{u} < \infty \,.
\]
Consider a continuous Itô process $X$ given by
\begin{eqnarray*}
X_{t} & = & X_{0}+\int_{0}^{t}\mu_{u}\,du+\int_{0}^{t}Y_{u}\,dW_{u}
\end{eqnarray*}
whose marginal density function $\phi(\cdot,t)$ and $\mathbb{E}^{\mathbb{Q}^{d}}\left[D_{t}Y_{t}^{2}\,|\,X_{t}=\cdot \right]$ are
assumed to be continuous.
{Further denote by $\left(l_{t}^{a}\right)_{t\geq0}$
the local time} of $X$ at level $a$. Then, for
any continuous, integrable and positive $\mathcal{F}_t$-adapted semi-martingale $D$ and any $a \in \mathbb{R}$,
\[
\mathbb{E}^{\mathbb{Q}^{d}}\left[\int_{0}^{t}D_{u}\,dl_{u}^{a}\right]=\int_{0}^{t} \mathbb{E}^{\mathbb{Q}^{d}}\left[D_{u}Y{}_{u}^{2}\,|\,X_{u}=a\right]\phi\left(a,u\right)\,du\,.
\]
\end{lemma}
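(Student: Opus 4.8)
The idea is to realise the local time integral as a limit of occupation-measure integrals via Proposition \ref{prop:DDeltaD2x_Dl}, compute the expectation of the regularised quantity explicitly by conditioning on $X_u$, and then justify the interchange of limit and expectation.

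First I would observe that, since $X_t = X_0 + \int_0^t \mu_u\,du + \int_0^t Y_u\,dW_u$, the quadratic variation is $\langle X\rangle_t = \int_0^t Y_u^2\,du$, so $d\langle X\rangle_u = Y_u^2\,du$. The hypotheses on $\mu$ and $Y$ make $X$ a continuous semi-martingale and $\int_0^t D_u\,dl_u^a$ a well-defined Lebesgue--Stieltjes integral (the measure $dl^a$ on $[0,t]$ is finite a.s., $D$ is continuous adapted). Applying Proposition \ref{prop:DDeltaD2x_Dl} to this $X$, to $D$, and to the kernels $\delta_n^a$ gives
\[ \int_0^t D_u\,\delta_n^a(X_u)\,Y_u^2\,du \;\longrightarrow\; \int_0^t D_u\,dl_u^a \qquad \text{a.s. as } n\to\infty . \]
The plan is to take $\mathbb{E}^{\mathbb{Q}^{d}}[\cdot]$ of both sides and to show the limit of the left-hand side equals the claimed expression.

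For the left-hand side the integrand is non-negative, so Tonelli's theorem gives $\mathbb{E}^{\mathbb{Q}^{d}}\big[\int_0^t D_u\delta_n^a(X_u)Y_u^2\,du\big] = \int_0^t \mathbb{E}^{\mathbb{Q}^{d}}[D_u Y_u^2\,\delta_n^a(X_u)]\,du$. Conditioning on $X_u$ via the tower property and using the marginal density $\phi(\cdot,u)$ of $X_u$ yields
\[ \mathbb{E}^{\mathbb{Q}^{d}}\!\left[D_u Y_u^2\,\delta_n^a(X_u)\right] \;=\; \frac{n}{2}\int_{a-1/n}^{\,a+1/n} g_u(x)\,dx, \qquad g_u(x):=\mathbb{E}^{\mathbb{Q}^{d}}\!\left[D_u Y_u^2\,\big|\,X_u=x\right]\phi(x,u). \]
By the assumed continuity of $x\mapsto\mathbb{E}^{\mathbb{Q}^{d}}[D_uY_u^2\,|\,X_u=x]$ and of $\phi(\cdot,u)$, the map $g_u$ is continuous, so $\tfrac{n}{2}\int_{a-1/n}^{a+1/n} g_u(x)\,dx\to g_u(a)$ for every $u$ and $a$. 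Bounding this, for $n\ge 1$, by $\sup_{|x-a|\le 1} g_u(x)$ — which is where the regularity hypotheses (and, in the application, boundedness of $D$) are used to obtain a dominating function integrable over $u\in[0,t]$ — dominated convergence gives $\mathbb{E}^{\mathbb{Q}^{d}}\big[\int_0^t D_u\delta_n^a(X_u)Y_u^2\,du\big] \to \int_0^t g_u(a)\,du$, which is precisely the right-hand side of the lemma.

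The remaining, and main, obstacle is the interchange $\mathbb{E}^{\mathbb{Q}^{d}}[\lim_n(\cdot)] = \lim_n \mathbb{E}^{\mathbb{Q}^{d}}[\cdot]$. Fatou applied to the almost-sure convergence above immediately gives $\mathbb{E}^{\mathbb{Q}^{d}}[\int_0^t D_u\,dl_u^a]\le \int_0^t g_u(a)\,du<\infty$, so the local-time integral is integrable; the reverse inequality is the delicate point. I would close it by showing uniform integrability of $\{\int_0^t D_u\delta_n^a(X_u)Y_u^2\,du\}_{n\ge 1}$ — either from the a.s.\ convergence together with the uniform bound on the expectations just obtained, via Vitali's convergence theorem, or by a direct $L^2$ estimate exploiting the second-moment assumption on $Y$ and the integrability of $D$ — thereby upgrading to $L^1$-convergence and concluding $\mathbb{E}^{\mathbb{Q}^{d}}[\int_0^t D_u\,dl_u^a] = \lim_n \mathbb{E}^{\mathbb{Q}^{d}}[\int_0^t D_u\delta_n^a(X_u)Y_u^2\,du] = \int_0^t \mathbb{E}^{\mathbb{Q}^{d}}[D_uY_u^2\,|\,X_u=a]\,\phi(a,u)\,du$. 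Throughout, a standard localisation by stopping times keeping $D$, $\mu$ and $Y$ under control can be used to make the integrability manipulations rigorous, exactly as in the proof of Proposition \ref{prop:DDeltaD2x_Dl}.
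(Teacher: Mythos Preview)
Your proof follows essentially the same route as the paper: apply Proposition~\ref{prop:DDeltaD2x_Dl} to get $\int_0^t D_u\,\delta_n^a(X_u)\,Y_u^2\,du \to \int_0^t D_u\,dl_u^a$ a.s., use Fubini/Tonelli to swap expectation and time integral, condition on $X_u$ via the marginal density $\phi$, and use the continuity assumptions to identify the limit as $\int_0^t g_u(a)\,du$. You are in fact more scrupulous than the paper about the interchange $\mathbb{E}[\lim_n(\cdot)]=\lim_n\mathbb{E}[\cdot]$, which the paper dispatches with the line ``this implies convergence in distribution, so we can write\ldots''; your only slip is that a.s.\ convergence together with bounded expectations does \emph{not} by itself yield uniform integrability, so it is your second alternative (an $L^2$ estimate or localisation) that actually closes the gap.
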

\begin{proof} 
From Proposition \ref{prop:DDeltaD2x_Dl}, we know that the following
holds almost surely:
\[
\int_{0}^{t}D_{u}\,dl_{u}^{a}=\lim_{n\rightarrow\infty}\int_{0}^{t}D_{u}\delta_{n}^{a}\left(X_{u}\right)Y_{u}^{2}\,du\,.
\]
This implies convergence in distribution, so that we can write
\[
\mathbb{E}^{\mathbb{Q}^{d}}\left[\int_{0}^{t}D_{u}\,dl_{u}^{a}\right]=\lim_{n\rightarrow\infty}\mathbb{E}^{\mathbb{Q}^{d}}\left[\int_{0}^{t}D_{u}\delta_{n}^{a}\left(X_{u}\right)Y_{u}^{2}\,du\right]\,,
\]
and, by the stochastic Fubini theorem, we get 
\begin{eqnarray*}
\mathbb{E}^{\mathbb{Q}^{d}}\left[\int_{0}^{t}D_{u}\,dl_{u}^{a}\right] & = & \lim_{n\rightarrow\infty}\int_{0}^{t}\mathbb{E}^{\mathbb{Q}^{d}}\left[D_{u}\delta_{n}^{a}\left(X_{u}\right)\,Y_{u}^{2}\right]\,du\\
 & = & \lim_{n\rightarrow\infty}\int_{0}^{t}\mathbb{E}^{\mathbb{Q}^{d}}\left[\delta_{n}^{a}\left(X_{u}\right)\mathbb{E}^{\mathbb{Q}^{d}}\left[D_{u}Y_{u}^{2}\,|\,X_{u}\right]\right]\,du\,.
\end{eqnarray*}
We denote $\gamma\left(x,u\right)=\mathbb{E}^{\mathbb{Q}^{d}}\left[D_{u}Y_{u}^{2}\,|\,X_{u}=x\right]$,
 such that
\begin{eqnarray*}
\mathbb{E}^{\mathbb{Q}^{d}}\left[\int_{0}^{t}D_{u}\,dl_{u}^{a}\right]
&=&\lim_{n\rightarrow\infty}\int_{0}^{t}\int_{0}^{\infty}\delta_{n}^{a}\left(x\right)\gamma\left(x,u\right)\phi\left(x,u\right)\,dxdu \\
&=& \lim_{n\rightarrow\infty}\int_{0}^{\infty}\delta_{n}^{a}\left(x\right)\left(\int_{0}^{t}\gamma\left(x,u\right)\phi\left(x,u\right)\,du\right)\,dx, 
\end{eqnarray*}
where we have used Fubini's Theorem in the second line. 
By the continuity assumptions on $\phi$ and $\gamma$, we deduce that
\begin{eqnarray*}
\mathbb{E}^{\mathbb{Q}^{d}}\left[\int_{0}^{t}D_{u}\,dl_{u}^{a}\right] & = & \int_{0}^{t}\left(\mathbb{E}^{\mathbb{Q}^{d}}\left[D_{u}Y_{u}^2\,|\,X_{u}=a\right]\phi\left(a,u\right)\right)\,du\,.
\end{eqnarray*}

\end{proof}


\begin{lemma}
\label{prop:TrueMartingale}
Given the set-up of Theorem \ref{thm:alphaFormula},

\[
M_t = \int_{0}^{t}\mathbf{1}_{S_{u}\geq K}D_{u}^{d}\alpha\left(S_{u},u\right)S_{u}\sqrt{V_{u}}\,dW_{u}
\]
is a true martingale up to $T^*$ given by (\ref{explosion_time}).
\end{lemma}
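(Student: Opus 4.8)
The plan is to establish that $\mathbb{E}^{\mathbb{Q}^{d}}[\langle M\rangle_{T}]<\infty$ for every $T<T^{*}$: for a continuous local martingale such as $M$ this forces $(M_{t})_{t\le T}$ to be a true ($L^{2}$-bounded, hence uniformly integrable) martingale on $[0,T]$, and letting $T\uparrow T^{*}$ gives the claim. Using Assumption~\ref{main-ass} (so $\alpha\le\alpha_{\max}$) together with $\mathbf{1}_{S_{u}\ge K}\le1$, one has $\langle M\rangle_{T}\le\alpha_{\max}^{2}\int_{0}^{T}(D_{u}^{d}S_{u})^{2}V_{u}\,du$, so by Tonelli it is enough to prove that $u\mapsto\mathbb{E}^{\mathbb{Q}^{d}}\!\big[(D_{u}^{d}S_{u})^{2}V_{u}\big]$ is finite and locally bounded on $[0,T^{*})$.

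For this I would use the exponential representation $D_{u}^{d}S_{u}=S_{0}\,e^{-\int_{0}^{u}r_{s}^{f}\,ds}\,\mathcal{M}_{u}$, where $\mathcal{M}_{u}=\exp\!\big(\int_{0}^{u}\alpha(S_{s},s)\sqrt{V_{s}}\,dW_{s}-\tfrac12\int_{0}^{u}\alpha(S_{s},s)^{2}V_{s}\,ds\big)$. Since $g^{f}$ stays non-negative (its diffusion and its quanto drift both vanish at $0$) and $h^{f}$ is bounded, $e^{-2\int_{0}^{u}r_{s}^{f}\,ds}\le e^{2T\|h^{f}\|_{\infty}}$; this disposes of the rate factor and, incidentally, shows that neither the quanto correction in $g^{f}$ nor the correlations $\rho_{Sd},\rho_{Sf},\rho_{df}$ play a role in the hard part. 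Writing $\mathcal{M}_{u}^{2}=\mathcal{E}_{u}\,\exp\!\big(\int_{0}^{u}\alpha^{2}V_{s}\,ds\big)$ with $\mathcal{E}_{u}$ the Dol\'eans--Dade exponential of $2\!\int_{0}^{\cdot}\alpha\sqrt{V_{s}}\,dW_{s}$, changing measure via $d\mathbb{Q}'/d\mathbb{Q}^{d}=\mathcal{E}_{u}$, and using $\alpha^{2}\le\alpha_{\max}^{2}$, the problem reduces to bounding $\mathbb{E}^{\mathbb{Q}'}\!\big[V_{u}\exp(\alpha_{\max}^{2}\int_{0}^{u}V_{s}\,ds)\big]$. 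Under $\mathbb{Q}'$, Girsanov turns the CIR equation for $V$ into $dV_{t}=\big(\kappa\theta-(\kappa-2\rho\xi\alpha(S_{t},t))V_{t}\big)\,dt+\xi\sqrt{V_{t}}\,d\widetilde{W}_{t}^{V}$, so the effective mean-reversion speed lies in $[\kappa-2\zeta,\kappa+2\zeta]$ with $\zeta=\xi\alpha_{\max}$ (here $|\rho|<1$ is used); a pathwise comparison argument (Yamada--Watanabe type) then dominates $V$ by an explicit CIR process with speed $\kappa-2\zeta$, and the quantity in question is controlled by the affine transform $\mathbb{E}[\exp(\alpha_{\max}^{2}\int_{0}^{u}\bar{V}_{s}\,ds)]$ of the integrated CIR process (the extra factor $V_{u}$ contributes only a polynomial prefactor and does not shorten the horizon, by the affine structure), hence is finite exactly as long as the associated Riccati ODE has not blown up. The closed-form blow-up time of that ODE is computed in \cite{Cozma2015} and, after a conservative simplification that replaces the shifted speed by $\kappa$, it equals the $T^{*}$ of (\ref{explosion_time}) with $\varphi=2+\sqrt{2}$, which gives the lemma.

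The main obstacle is making this reduction rigorous. First, $\mathcal{E}$ must be a genuine martingale before it can be used as a Radon--Nikodym density; rather than prove this directly one can run the whole estimate through a localising sequence of stopping times for $M$ and pass to the limit by Fatou, so that it is enough for $\mathcal{E}$ to be a supermartingale. Second, the CIR comparison has to be carried out with a bounded but path-dependent mean-reversion coefficient. Third, the precise Riccati computation that produces the constant $\varphi=2+\sqrt{2}$ and the exact form of $T^{*}$ --- including the verification that the $V_{u}$-prefactor and the measure change do not erode it --- is exactly the content of \cite{Cozma2015}, which I would invoke rather than reproduce here.
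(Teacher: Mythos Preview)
Your opening reduction---bounding $\langle M\rangle_{T}$ by $\alpha_{\max}^{2}\int_{0}^{T}(D_{u}^{d}S_{u})^{2}V_{u}\,du$ and invoking Tonelli---is exactly how the paper begins. From there, however, the paper takes a much shorter route than you do. Rather than keeping $(D_{u}^{d}S_{u})^{2}$ and $V_{u}$ coupled and attacking the product via a measure change, CIR comparison, and Riccati analysis, the paper simply separates the two factors by H\"older's inequality: since $t<T^{*}$, Proposition~3.13 of \cite{Cozma2015} furnishes some $\omega>2$ with $\sup_{u\le t}\mathbb{E}^{\mathbb{Q}^{d}}[(D_{u}^{d}S_{u})^{\omega}]<\infty$, while Theorem~3.1 of \cite{Hurd2008} gives $\sup_{u\le t}\mathbb{E}^{\mathbb{Q}^{d}}[V_{u}^{\omega/(\omega-2)}]<\infty$; the conjugate pair $(\omega/2,\,\omega/(\omega-2))$ then yields $\mathbb{E}^{\mathbb{Q}^{d}}[(D_{u}^{d}S_{u})^{2}V_{u}]<\infty$ in one line.

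Your approach is not wrong, but it essentially reconstructs the machinery \emph{inside} \cite{Cozma2015} (the Girsanov shift of the CIR drift, the comparison with an explicit CIR, the Riccati blow-up) in order to reach the same $T^{*}$ that the paper obtains by citation. The three obstacles you flag---justifying the density $\mathcal{E}$, the path-dependent comparison, and the exact constant $\varphi=2+\sqrt{2}$---are precisely the work that \cite{Cozma2015} already packages into a moment bound on $D_{u}^{d}S_{u}$. The H\"older step is what lets the paper treat that bound as a black box and pair it with the (elementary) fact that a CIR process has all positive moments, avoiding the need to carry the extra $V_{u}$ factor through the Riccati computation or to worry about whether it erodes the horizon.
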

\begin{proof}
Since $\alpha$
and $\mathbf{1}_{S_{u}\geq K}$ are bounded, the process \[M_t = \int_{0}^{t}\mathbf{1}_{S_{u}\geq K}D_{u}^{d}\alpha\left(S_{u},u\right)S_{u}\sqrt{V_{u}}\,dW_{u}, \,\]
is a true martingale if 
\[
\mathbb{E}^{\mathbb{Q}^{d}}\left[\int_{0}^{t}\left(D_{u}^{d}S_{u}\right)^{2}V_{u}\,du\right]<\infty\,.
\]
On the one hand, since $t<T^{*}$, from Proposition 3.13 in \cite{Cozma2015},
we can find $\omega>2$ such that 
\[
\sup_{u\in\left[0,t\right]}\mathbb{E}^{\mathbb{Q}^{d}}\left[\left(D_{u}^{d}S_{u}\right)^{\omega}\right]<\infty\,.
\]
On the other hand, from Theorem 3.1 in \cite{Hurd2008}, 
\[
\sup_{u\in\left[0,t\right]}\mathbb{E}^{\mathbb{Q}^{d}}\left[V_{u}^{\frac{\omega}{\omega-2}}\right]<\infty\,.
\]
Using Hölder's inequality with the pair $\left(\frac{\omega}{2},\frac{\omega}{\omega-2}\right)$,
\[
\mathbb{E}^{\mathbb{Q}^{d}}\left[\left(D_{u}^{d}S_{u}\right)^{2}V_{u}\right]\leq\mathbb{E}^{\mathbb{Q}^{d}}\left[\left(D_{u}^{d}S_{u}\right)^{\omega}\right]^{\frac{2}{\omega}}\mathbb{E}^{\mathbb{Q}^{d}}\left[V_{u}^{\frac{\omega}{\omega-2}}\right]^{\frac{\omega-2}{\omega}}<\infty.
\]
Finally, using the Fubini theorem, $\mathbb{E}^{\mathbb{Q}^{d}}\left[\int_{0}^{t}\left(D_{u}^{d}S_{u}\right)^{2}V_{u}\,du\right]<\infty$
and hence 
$M$
is a true martingale of zero expectation.
\end{proof}

Combining Lemmas \ref{prop:EDdl} and \ref{prop:TrueMartingale},
we can derive Proposition \ref{prop:fwd_eq}.

\begin{proof}[Proof of Proposition \ref{prop:fwd_eq}]
Let $K\in\mathbb{R}^{+}$, $0<t<T^{*}$ and $H_{t}=\left(S_{t}-K\right)^{+}$.
{The Trotter-Meyer theorem \cite{Rogers&Williams2000}
gives}
\begin{eqnarray*}
\left(S_{t}-K\right)^{+}-\left(S_{0}-K\right)^{+} 
 & = & \int_{0}^{t}\mathbf{1}_{S_{u}\geq K}\,dS_{u}+\frac{1}{2}l_{t}^{K}\,,
\end{eqnarray*}
which we can write in differential form as
\begin{eqnarray*}
dH_{t} 
 & = & \mathbf{1}_{S_{t}\geq K}S_{t}\left(r_{t}^{d}-r_{t}^{f}\right)\, dt+\frac{1}{2}dl_{t}^{K}+\mathbf{1}_{S_{t}\geq K}\alpha\left(S_{t},t\right)S_{t}\sqrt{V_{t}}\, dW_{t}\,.
\end{eqnarray*}
Also,
\begin{eqnarray}
\nonumber
d\left(D_{t}^{d}H_{t}\right) 
  &=& D_{t}^{d}\left[-r_{t}^{d}H_{t}+\mathbf{1}_{S_{t}\geq K}S_{t}\left(r_{t}^{d}-r_{t}^{f}\right)\right]\,dt+\frac{1}{2}D_{t}^{d}\,dl_{t}^{K} \\
  && \hspace{3 cm} 
  + \, \mathbf{1}_{S_{t}\geq K}D_{t}^{d}\alpha\left(S_{t},t\right)S_{t}\sqrt{V_{t}}\,dW_{t}\,.\;\; \label{eq:mainIto}
\end{eqnarray}
Hence, by applying Lemma \ref{prop:EDdl} with $X_{t}=S_{t}$,
$D_{t}=D_{t}^{d}$ and $Y_{t}=\alpha\left(S_{t},t\right)S_{t}\sqrt{V_{t}}$,
we can write 
\begin{eqnarray}
\mathbb{E}^{\mathbb{Q}^{d}}\left[\int_{0}^{t}D_{u}^{d}\,dl_{u}^{K}\right] & = & \int_{0}^{t}\left(\alpha^2\left(K,u\right) K^{2}\mathbb{E}^{\mathbb{Q}^{d}}\left[D_{u}^{d}V_{u}\,|\,S_{u}=K\right]\phi\left(K,u\right)\right)\,du\,,\label{eq:EC_Intermediary}
\end{eqnarray}
where $\phi$ is the marginal density function of $S$ at time $t${.
Furthermore, one can define $\bar{\phi}_{n}$ as} 
\[
\bar{\phi}_{n}\left(K,u\right)=\mathbb{E}^{\mathbb{Q}^{d}}\left[D_{u}^{d}\delta_{n}^{K}\left(S_{u}\right)\right]=\int_{0}^{\infty}\delta_{n}^{K}\left(x\right)\mathbb{E}^{\mathbb{Q}^{d}}\left[D_{u}^{d}\,|\,S_{u}=x\right]\phi\left(x,u\right)\,dx\,,
\]
with $\delta_{n}^{K}$ defined as in {(\ref{eq:2delta_n^K}})
by 
\[
\delta_{n}^{K}\left(x\right)=\begin{cases}
0, & \left|x-K\right|>\frac{1}{n},\\
\frac{n}{2}, & \left|x-K\right|\leq\frac{1}{n}\,,
\end{cases}
\]
and by a similar reasoning to that of Lemma \ref{prop:EDdl} we get

\[
\lim_{n\rightarrow\infty}\bar{\phi}_{n}=\mathbb{E}^{\mathbb{Q}^{d}}\left[D_{u}^{d}\,|\,S_{u}=K\right]\phi\left(K,u\right)\,.
\]
Since 
\[
\frac{\partial^{2}C\left(K,u\right)}{\partial K^{2}}=\lim_{n\rightarrow\infty}\,\mathbb{E}^{\mathbb{Q}^{d}}\left[D_{u}^{d}\delta_{n}^{K}\left(S_{u}\right)\right]\,,
\]
we write 
\begin{equation}
\frac{\partial^{2}C\left(K,u\right)}{\partial K^{2}}=\mathbb{E}^{\mathbb{Q}^{d}}\left[D_{u}^{d}\,|\,S_{u}=K\right]\phi\left(K,u\right).\label{eq:D2CDK2Exp}
\end{equation}
Combining (\ref{eq:EC_Intermediary}) and (\ref{eq:D2CDK2Exp}) allows
to write
\[
\mathbb{E}^{\mathbb{Q}^{d}}\left[\int_{0}^{t}D_{u}^{d}\,dl_{u}^{K}\right]=\int_{0}^{t}\alpha^2\left(K,u\right) K^{2}\frac{\mathbb{E}^{\mathbb{Q}^{d}}\left[D_{u}^{d}V_{u}\,|\,S_{u}=K\right]}{\mathbb{E}^{\mathbb{Q}^{d}}\left[D_{u}^{d}\,|\,S_{u}=K\right]}\frac{\partial^{2}C\left(K,u\right)}{\partial K^{2}}\,du\,.
\]
Hence, integrating (\ref{eq:mainIto}), 

\begin{eqnarray}
C\left(K,t\right)=\mathbb{E}^{\mathbb{Q}^{d}}\left[D_{t}^{d}H_{t}\right] & = & \int_{0}^{t}\left(-\mathbb{E}^{\mathbb{Q}^{d}}\left[D_{u}^{d}r_{u}^{d}\left(S_{u}-K\right)^{+}\right]+\mathbb{E}^{\mathbb{Q}^{d}}\left[D_{u}^{d}\mathbf{1}_{S_{u}\geq K}S_{u}\left(r_{u}^{d}-r_{u}^{f}\right)\right]\right)\,du\nonumber \\
 & & + \;\; \frac{1}{2}\int_{0}^{t}\alpha^2\left(K,u\right) K^{2}\frac{\mathbb{E}^{\mathbb{Q}^{d}}\left[D_{u}^{d}V_{u}\,|\,S_{u}=K\right]}{\mathbb{E}^{\mathbb{Q}^{d}}\left[D_{u}^{d}\,|\,S_{u}=K\right]}\frac{\partial^{2}C\left(K,u\right)}{\partial K^{2}}\,du\label{eq:callInDeferentialForm}\\
 & & + \;\; \mathbb{E}^{\mathbb{Q}^{d}}\left[\int_{0}^{t}\mathbf{1}_{S_{u}\geq K}D_{u}^{d}\alpha\left(S_{u},u\right)S_{u}\sqrt{V_{u}}\,dW_{u}\right]\,.\nonumber 
\end{eqnarray}
Furthermore, on a fixed time interval $\left[0,T^{*}\right]$, $D_{t}^{d}$
is uniformly bounded by $\exp({T^{*}\max_{u\in[0,T^{*}]}\left|h^{d}\left(u\right)\right|})$.
Then, from Lemma \ref{prop:TrueMartingale} we know that $\int_{0}^{t}\mathbf{1}_{S_{u}\geq K}D_{u}^{d}\alpha\left(S_{u},u\right)S_{u}\sqrt{V_{u}}\,dW_{u}$
is a true martingale of zero expectation.

We write (\ref{eq:callInDeferentialForm}) at time $T$, differentiate
with respect to $T$ and, upon noticing that $\mathbf{1}_{S_{T}\geq K}S_{T}=\left(S_{T}-K\right)^{+}+\mathbf{1}_{S_{T}\geq K}K$,
we get (\ref{eq:stoVolDupirePDE}).
\end{proof}

We are now ready to give the proof of Theorem \ref{thm:alphaFormula}.

\begin{proof}
[Proof of Theorem \ref{thm:alphaFormula}] 
First, we want to ensure that (\ref{eq:alphaFormula}) is a necessary condition for 
\begin{equation}
C\left(K,T\right)= 
C_{LV}\left(K,T\right)\,.\label{eq:CModel_CLV}
\end{equation}
Hence, by subtracting the Dupire PDE (\ref{eq:dupirePDE}) from (\ref{eq:stoVolDupirePDE}),
we obtain
\begin{eqnarray*}
\frac{1}{2}K^{2}\left(\alpha^2\left(K,T\right) \frac{\mathbb{E}^{\mathbb{Q}^{d}}\left[D_{T}^{d}V_{T}\,|\,S_{T}=K\right]}{\mathbb{E}^{\mathbb{Q}^{d}}\left[D_{T}^{d}\,|\,S_{T}=K\right]}-\sigma_{LV}^2\left(K,T\right)\right)\frac{\partial^{2}C_{LV}}{\partial K^{2}} &=& 
\mathbb{E}^{\mathbb{Q}^{d}}\left[D_{T}^{d}r_{T}^{f}\left(S_{T}-K\right)^{+}\right] \\
&&\hspace{-8 cm} 
-\bar{r}^{f}\left(T\right)C_{LV}-\mathbb{E}^{\mathbb{Q}^{d}}\left[D_{T}^{d}\mathbf{1}_{S_{T}\geq K}K\left(r_{T}^{d}-r_{T}^{f}\right)\right]-K\left(\bar{r}^{d}\left(T\right)-\bar{r}^{f}\left(T\right)\right)\frac{\partial C_{LV}}{\partial K}\,,
\end{eqnarray*}
so 
\begin{eqnarray*}
\alpha^2\left(K,T\right) 
 &=&
 {\frac{\mathbb{E}^{\mathbb{Q}^{d}}\left[D_{T}^{d}\,|\,S_{T}=K\right]}{\mathbb{E}^{\mathbb{Q}^{d}}\left[D_{T}^{d}V_{T}\,|\,S_{T}=K\right]}\!\left(\sigma_{LV}^2\!\left(K,T\right)+\bar{q}(K,T)\!\right)},
\end{eqnarray*}
where
\begin{eqnarray*}
&&\hspace{-0.8 cm} \bar{q}(K,T) =
\frac{\mathbb{E}^{\mathbb{Q}^{d}}\left[\overline{Q}_{T}\right]}{\frac{1}{2}K^{2}\frac{\partial^{2}C_{LV}}{\partial K^{2}}}\,, \\
&&\hspace{-0.8 cm} \overline{Q}_{T} =
D_{T}^{d}r_{T}^{f}\left(S_{T}-K\right)^{+}-\bar{r}^{f}\left(T\right)C_{LV}-K\left(D_{T}^{d}\mathbf{1}_{S_{T}\geq K}\left(r_{T}^{d}-r_{T}^{f}\right)+\left(\bar{r}^{d}\left(T\right)-\bar{r}^{f}\left(T\right)\right)\frac{\partial C_{LV}}{\partial K}\right)\,.
\end{eqnarray*}
It remains to show that we can replace $\overline{Q}_{T}$ by $Q_T$ in $\bar{q}$ . 
First, $D_{T}^{d}\left(S_{T}-K\right)^{+}$ is weakly differentiable
with respect to $K$ with $\frac{\partial D_{T}^{d}\left(S_{T}-K\right)^{+}}{\partial K}=D_{T}^{d}\mathbf{1}_{S_{T}\geq K}$,
which is bounded by the integrable process $D_{T}^{d}$. We can interchange
differentiation and expectation to get $\mathbb{E}^{\mathbb{Q}^{d}}\left[D_{T}^{d}\mathbf{1}_{S_{T}\geq K}\right]=-\frac{\partial C}{\partial K}$.
Since the models agree, 
$\mathbb{E}^{\mathbb{Q}^{d}}\left[D_{T}^{d}\mathbf{1}_{S_{T}\geq K}\right]=-\frac{\partial C_{LV}}{\partial K}$
and $\mathbb{E}^{\mathbb{Q}^{d}}\left[D_{T}^{d}\left(S_{T}-K\right)^{+}\right]=C_{LV}$, and (\ref{eq:alphaFormula}) holds.

By re-tracing the steps in reverse order, one sees that (\ref{eq:alphaFormula}) is also a  sufficient condition for (\ref{eq:CModel_CLV})
provided the solution to \eqref{eq:stoVolDupirePDE} is unique.
\end{proof}


%

\section{Monte Carlo QE-scheme\label{appendix:Monte-Carlo-QE-scheme}}

The \emph{Quadratic-Exponential (QE)} scheme \cite{Andersen2008} that
is used to discretise the square-root process, employs moment-matching techniques and can significantly reduce the Monte
Carlo discretisation error. While the full truncation Euler scheme
and the QE scheme have shown to perform well in our tests, we experienced
a faster convergence in time for the QE scheme when the Feller condition
is broken. Hence, we choose the QE scheme for the variance process and
the full truncation Euler for both stochastic rates, as the computational
cost will be smaller. We briefly write a generalised QE scheme based
on the original scheme from \cite{Andersen2008} to incorporate a
leverage function and stochastic rates in the discretisation.

We follow our time interpolation rule for the calibration of $\alpha$
and interpolate forward-flat in time. We assume for simplicity that
each Monte Carlo time step belongs to the $\alpha$ time grid. We
can write
\[
V_{t+\Delta t}=V_{t}+\int_{t}^{t+\Delta t}\kappa\left(\theta-V_{u}\right)\,du+\xi\int_{t}^{t+\Delta t}\sqrt{V_{u}}\,dW_{u}^{V}\,,
\]
and hence
\[
\int_{t}^{t+\Delta t}\sqrt{V_{u}}\,dW_{u}^{V}=\frac{V_{t+\Delta t}-V_{t}-\int_{t}^{t+\Delta t}\kappa\left(\theta-V_{u}\right)\,du}{\xi}\,,
\]
and 
\[
d\ln S_{t}=\left(r_{t}^{d}-r_{t}^{f}-\frac{1}{2}\alpha^{2}\left(S_{t},t\right)V_{t}\right)\,dt+\alpha\left(S_{t},t\right)\rho\sqrt{V_{t}}\,dW_{t}^{V}+\alpha\left(S_{t},t\right)\sqrt{1-\rho^{2}}\sqrt{V_{t}}\,dW_{t}^{S}\,,
\]
where $W_{t}^{S}$ is a Brownian motion independent of $W_{t}^{V}$. Therefore,
\begin{eqnarray*}
\ln S_{t+\Delta t} & = & \ln S_{t}+\int_{t}^{t+\Delta t}\left(r_{u}^{d}-r_{u}^{f}\right)\,du-\frac{1}{2}\alpha^{2}\left(S_{t},t\right)\int_{t}^{t+\Delta t}V_{u}\,du\\
 & + & \frac{\alpha\left(S_{t},t\right)\rho\left(V_{t+\Delta t}-V_{t}-\kappa\theta\Delta t+\kappa\int_{t}^{t+\Delta t}V_{u}\,du\right)}{\xi}\\
 & + & \alpha\left(S_{t},t\right)\sqrt{1-\rho^{2}}\int_{t}^{t+\Delta t}\sqrt{V_{u}}\,dW_{u}^{S}\,.
\end{eqnarray*}
We approximate $\int_{t}^{t+\Delta t}V_{u}\,du$ by $\left(\frac{V_{t+\Delta t}+V_{t}}{2}\right)\Delta t$
, and note that conditional on $V_{t}$ and $\int_{t}^{t+\Delta t}V_{u}\,du$, since $W_{u}^{V}$ and $W_{u}^{S}$ are independent,
the Itô integral $\int_{t}^{t+\Delta t}\sqrt{V_{u}}\,dW_{u}^{S}$
is normally distributed with mean zero and variance $\int_{t}^{t+\Delta t}V_{u}\,du$. We write the full
scheme below

\begin{equation}
\begin{cases}
g_{t+\Delta t}^{d} & =g_{t}^{d}+\kappa_{d}\left(\theta_{d}-\left(g_{t}^{d}\right)^{+}\right)\Delta t+\xi_{d}\sqrt{\left(g_{t}^{d}\right)^{+}}\sqrt{\Delta t}\,Y_{d}\\
g_{t+\Delta t}^{f} & =g_{t}^{f}+  \left( \kappa_{f}\left(\theta_{f}-\left(g_{t}^{f}\right)^{+}\right) - \left(\rho_{Sf}\xi_{f}\sqrt{\left(g_{t}^{f}\right)^{+}}\alpha\left(S_{t},t\right)\sqrt{V_{t}}\right) \right)\Delta_{t}\\
& + \xi_{f}\sqrt{\left(g_{t}^{f}\right)^{+}}\sqrt{\Delta t}\,Y_{f}\\
\ln S_{t+\Delta t} & =\ln S_{t}+\left(\frac{\left(r_{t+\Delta t}^{d}-r_{t+\Delta t}^{f}\right)+\left(r_{t}^{d}-r_{t}^{f}\right)}{2}-\frac{1}{4}\alpha^{2}\left(S_{t},t\right)\left(V_{t+\Delta t}+V_{t}\right)\right)\Delta t\\
 & +\frac{\alpha\left(S_{t},t\right)\rho\left(V_{t+\Delta t}-V_{t}+\kappa\left(\frac{V_{t+\Delta t}+V_{t}}{2}-\theta\right)\Delta t\right)}{\xi}\\
 & +\alpha\left(S_{t},t\right)\sqrt{1-\rho^{2}}\sqrt{\frac{V_{t+\Delta t}+V_{t}}{2}}\sqrt{\Delta t}\,Z\\
V_{t+\Delta t}: & \begin{cases}
\text{if }\psi\leq\psi_{c} & :\,V_{t+\Delta t}=a\left(b+Z_{v}\right)^{2}\\
\text{else } & \begin{cases}
\text{if}\quad U\leq p & :\,V_{t+\Delta t}=0\\
\text{else} & :\,V_{t+\Delta t}=\ln\left(\frac{1-p}{1-U}\right)\frac{m}{1-p}\,,
\end{cases}
\end{cases}
\end{cases}\label{eq:QE-Scheme}
\end{equation}
with
\[
\begin{cases}
m & =\theta+\left(V_{t}-\theta\right)e^{-\kappa\Delta t}\,,\\
\gamma^{2} & =\frac{V_{t}\xi^{2}e^{-\kappa\Delta t}}{\kappa}\left(1-e^{-\kappa\Delta t}\right)+\frac{\theta\xi^{2}}{2\kappa}\left(1-e^{-\kappa\Delta t}\right)^{2}\,,\\
\psi & =\frac{\gamma^{2}}{m^{2}},\quad p=\frac{\psi-1}{\psi+1},\quad \beta=\frac{1-p}{m}\,,\\
b^{2} & =\frac{2}{\psi}-1+\sqrt{\frac{2}{\psi}}\sqrt{\frac{2}{\psi}-1},\quad a=\frac{m}{1+b^{2}}\,,\\
\psi_{c} & =1.5\,,
\end{cases}
\]
Let the Cholesky decomposition of the correlation matrix\footnote{If the correlation matrix is not positive definite, one can rely on spectral decomposition instead; see 2.3 in \cite{Glasserman2004} for details.}

\[
\left[\begin{array}{cccc}
	1 & \rho & 0 & 0\\
	\rho & 1 & \rho_{Sd} & \rho_{Sf}\\
	0 & \rho_{Sd} & 1 & \rho_{df}\\
	0 & \rho_{Sf} & \rho_{df} & 1
\end{array}\right]\,,\]
be $\mathbf{LL}^{T}$. $Y_{d}$
and $Y_{f}$
are defined as 
\[\left[\begin{array}{c}
	Y_{v}\\
	Y_{s}\\
	Y_{d}\\
	Y_{f}
\end{array}\right]=\mathbf{L}\left[\begin{array}{c}
	Z_{v}\\
	Z\\
	Z_{d}\\
	Z_{f}
\end{array}\right] \,,\]
where $Z,\,Z_{v},\,Z_{d},\,Z_{f}$ are independent draws from a standard
normal distribution and $U$ is a draw from a uniform distribution.

\section{Finite element mesh construction\label{sub:Mesh-construction}}

In order to refine the mesh in the most relevant area,
we use an exponential mesh on the variance axis and a hyperbolic
mesh (see \cite{White2013}) in the spot direction. This makes the
mesh finer around $z=0$ and $x=S_{0}$. In order to build our mesh,
we first define the grids in spot $\left(x_{i}\right)_{i\in\ensuremath{\left\llbracket 0,N_{S}\right\rrbracket }}$
and variance $\left(z_{j}\right)_{j\in\ensuremath{\left\llbracket 0,N_{V}\right\rrbracket }}$
separately. Additionally, to solve the PDE numerically, we need to
truncate at the boundary and use $\bar{\Omega}=\left\{ \left(x,z\right)\in\left[0,S_{\max}\right]\times\left[0,V_{\max}\right]\right\} $
on a time interval $\left[0,T\right]$ . We choose 
\[
S_{\max}=S_{0}e^{5\alpha\left(S_{0},T\right)\sqrt{\left(v_{0}e^{-\kappa T}+\theta\left(1-e^{-\kappa T}\right)\right)T}}
\]
and recall that the stationary distribution of the CIR process is
a gamma distribution of density $\phi_{v}^{\infty}$ as defined in
(\ref{eq:hestonDensityStationary}). We compute $V_{\max}$ with the
inverse cumulative density function such that 
\[
\mathbb{P}\left(z>V_{\max}\right)=0.01\%\,.
\]
We write 
\begin{eqnarray*}
x_{i} & = & f_{h}\left(g_{h}\left(\bar{x}_{i}\right)\right)\,,\\
z_{j} & = & f_{e}\left(g_{e}\left(\bar{z}_{j}\right)\right)\,,
\end{eqnarray*}
with

\noindent
\begin{minipage}[c]{1\columnwidth}%
\begin{flushleft}
\begin{minipage}[c]{0.49\columnwidth}%
\begin{eqnarray*}
f_{h}\left(x\right) & = & S_{0}+b\,\text{sinh}\left(\nu x+d\right),\\
b & = & \eta\left(S_{\max}-S_{\min}\right),\\
d & = & \text{arcsinh}\left(\frac{S_{\min}-S_{0}}{b}\right),\\
\nu & = & \text{arcsinh}\left(\frac{S_{\max}-S_{0}}{b}\right)-d,\\
\bar{x}_{i} & = & \frac{i}{\left(N_{S}+1\right)},\\
\eta & = & 0.02\,,
\end{eqnarray*}

where $\eta$ is defined according to our numerical experiments and
$g_{h}$ is the quadratic polynomial that passes through the points
$\left(0,0\right)$, $\left(1,1\right)$, 
\[
\left(\frac{\left\lfloor f_{h}^{-1}\left(S_{0}\right)\left(N_{S}+1\right)+0.5\right\rfloor }{N_{S}+1},f_{h}^{-1}\left(S_{0}\right)\right).
\]
\end{minipage}~~~%
\begin{minipage}[c]{0.48\columnwidth}%
\begin{eqnarray*}
f_{e}\left(z\right) & = & c+c\,\exp\left(\lambda z\right),\\
c & = & \frac{V_{\max}}{e^{\lambda}-1}\,,\\
\lambda & = & \max\left(1,4-\frac{3\kappa\theta}{\xi^{2}}\right),\\
\bar{z}_{j} & = & \frac{j}{\left(N_{V}+1\right)}\,,\\
 & \,\\
 & \,
\end{eqnarray*}
\vspace{-0.2 cm}

where $\lambda$ is defined according to our numerical experiments and $g_{h}$ is the quadratic polynomial that passes through the points
$\left(0,0\right)$, $\left(1,1\right)$, 
\[
\left(\frac{\left\lfloor f_{e}^{-1}\left(v_{0}\right)\left(N_{V}+1\right)+0.5\right\rfloor }{N_{V}+1},f_{e}^{-1}\left(v_{0}\right)\right).
\]
\end{minipage}
\par\end{flushleft}%
\end{minipage}
\vspace{0.2 cm}

The latter intermediate step makes sure that both $S_{0}$ and $v_{0}$
are vertices of their respective grids. The construction of the finite
element triangular mesh can be achieved by creating a vertex at each
point $\left(x_{i},z_{j}\right)$ and defining two triangular cells
(upper left and lower right) in each rectangle.

\ifarxivvar

\section{Shifted CIR model and calibration}

\label{subsec:rate}\label{sub:CIR++-Model-Calibration}

The domestic and foreign short interest rates are modeled by the shifted
CIR (CIR\scalebox{.9}{\raisebox{.5pt}{++}}) process \cite{Brigo2006}.
On the one hand, this model preserves the analytical tractability
of the CIR model for bonds, caps and other basic interest
rate products. On the other hand, it is flexible enough to fit the
initial term structure of interest rates exactly. For $i\in\{d,f\}$,
the short rate dynamics under their respective spot measures, i.e.,
$\mathbb{Q}^{d}$ -- domestic and $\mathbb{Q}^{f}$ -- foreign, are
given by 
\begin{align}
\begin{cases}
\hspace{5.5pt}r_{t}^{i}=g_{t}^{i}+h^{i}(t),\\
dg_{t}^{i}\hspace{-0.5pt}=\kappa_{i}(\theta_{i}-g_{t}^{i})dt+\xi_{i}\sqrt{g_{t}^{i}}\,dB_{t}^{\gi},\hspace{0.75em}g_{0}^{i}>0,
\end{cases}\label{eq5.1}
\end{align}
where $B^{\gd}$ and $B^{\gf}$ are Brownian motions under $\mathbb{Q}^{d}$
and $\mathbb{Q}^{f}$, respectively. The mean-reversion parameters
$\kappa_{i}$, the long-term mean parameters $\theta_{i}$ and the
volatility parameters $\xi_{i}$ are the same as in (\ref{eq:Model Definition}).
The calibration of the short rate model (\ref{eq5.1}) follows the
same approach for both the domestic and the foreign interest rate.
For simplicity, we drop the subscripts and superscripts ``$d$''
and ``$f$'' in the remainder of the subsection and define the vector
of parameters $\beta_{1}=(g_{0},\kappa,\theta,\xi)$. According to
Brigo and Mercurio \cite{Brigo2001}, an exact fit to the initial
term structure of interest rates is equivalent to $h(t)=\varphi^{\text{CIR}}(t;\beta_{1})$
for all $t\in[0,T]$, where 
\begin{align}
\varphi^{\text{CIR}}(t;\beta_{1}) & =\bar{r}(0,t)-\bar{r}^{\text{CIR}}(0,t;\beta_{1}),\label{eq5.2}\\[3pt]
\bar{r}^{\text{CIR}}(0,t;\beta_{1}) & =\frac{2\kappa\theta(\exp\{t\nu\}-1)}{2\nu+(\kappa+\nu)(\exp\{t\nu\}-1)}+g_{0}\hspace{1pt}\frac{4\nu^{2}\exp\{t\nu\}}{[2\nu+(\kappa+\nu)(\exp\{t\nu\}-1)]^{2}},\nonumber 
\end{align}
$\nu=\sqrt{\kappa^{2}+2\xi^{2}}$ and $\bar{r}(0,t)$ is the market
instantaneous forward rate at time $0$ for a maturity $t$, i.e.,
\begin{equation}
\bar{r}(0,t)=\bar{r}(t)=-\hspace{1pt}\frac{\partial\ln P(0,t)}{\partial t}\hspace{1pt},\label{eq5.3}
\end{equation}
where $P(0,t)$ is the market zero coupon bond price at time $0$
for a maturity $t$. The value of the zero coupon bond is given by
\begin{equation}
P(0,t)=\frac{1}{1+\Delta(0,t)R(0,t)},\label{eq5.4}
\end{equation}
where $\Delta(0,t)$ is the year fraction from $0$ to time $t$ and
$R(0,t)$ is the current (simply-compounded) deposit rate with maturity
date $t$ which is quoted in the market%
. As an aside, note that the standard day count convention for USD
and EUR is Actual $360$. %

The detailed calibration procedure for both domestic and foreign rate
processes can be found in Appendix \ref{appendix:Shifted-CIR-model}.
The calibration results are displayed in Table \ref{table4.3}. 
\begin{table}[htb]
\centering{}\caption{The calibrated CIR parameters}
\label{table4.3} \begin{tabularx}{\textwidth}{@{}YYYYY@{}} 	\addlinespace[-5pt]   \toprule[.1em] 	 \textbf{CCY} & $g_{0}$ & $\kappa$ & $\theta$ & $\xi$ \\   \midrule 	\textbf{USD} & $0.0001$ & $0.0837$ & $0.5469$ & $0.0274$ \\[2pt] 	\textbf{EUR} & $0.0001$ & $0.0110$ & $1.1656$ & $0.0370$ \\ 	\bottomrule[.1em] 	\addlinespace[3pt] \end{tabularx}
\end{table}

\subsection{Shifted CIR model calibration\label{appendix:Shifted-CIR-model}}

In order to estimate the zero coupon curve (also known as the term
structure of interest rates or the yield curve), we assume that the
instantaneous forward rate is piecewise-flat. Consider the time nodes
$t_{0}\!=\!0,t_{1},\hdots,t_{n}$ and the set of estimated instantaneous
forward rates $f_{1},f_{2},\hdots,f_{n}$ from which the curve is
constructed, and define 
\begin{equation}
\bar{r}(t)=f_{i}\hspace{1em}\text{ if }\hspace{1em}t_{i-1}\leq t<t_{i},\hspace{1em}\text{ for }\hspace{1em}i=1,2,\hdots,n.\label{eq5.5}
\end{equation}
Using (\ref{eq5.3}) -- (\ref{eq5.5}) and solving the resulting linear
system of equations, we get 
\begin{equation}
f_{i}=\frac{1}{\Delta(t_{i-1},t_{i})}\ln\left(\frac{1+\Delta(0,t_{i})R(0,t_{i})}{1+\Delta(0,t_{i-1})R(0,t_{i-1})}\right)\hspace{1em}\text{ for }\hspace{1em}i=1,2,\hdots,n.\label{eq5.6}
\end{equation}
The continuously-compounded spot rate, i.e., the constant rate at
which the value of a pure discount bond must grow to yield one unit
of currency at maturity, is defined as 
\begin{equation}
R_{0}(0,t)=\frac{1}{\Delta(0,t)}\int_{0}^{t}{\bar{r}(s)\,ds}.\label{eq5.7}
\end{equation}
Using (\ref{eq5.5}) -- (\ref{eq5.7}), we deduce that 
\begin{equation}
R_{0}(0,t)=\frac{\Delta(t,t_{i})}{\Delta(0,t)\Delta(t_{i-1},t_{i})}\ln\Big(1+\Delta(0,t_{i-1})R(0,t_{i-1})\Big)+\frac{\Delta(t_{i-1},t)}{\Delta(0,t)\Delta(t_{i-1},t_{i})}\ln\Big(1+\Delta(0,t_{i})R(0,t_{i})\Big)
\end{equation}
whenever $t_{i-1}\leq t<t_{i}$. In Figure \ref{fig:1}, we plot the
USD and EUR zero coupon curves $t\mapsto R_{0}(0,t)$, $t>0$, estimated
from the quoted deposit rates from March 18, 2016, together with the
flat-forward instantaneous forward rates. 
\begin{figure}[htb]
\centering \begin{subfigure}{.5\textwidth} \centering \captionsetup{justification=centering,margin=2cm}
\includegraphics[width=0.97\linewidth,height=1.7in]{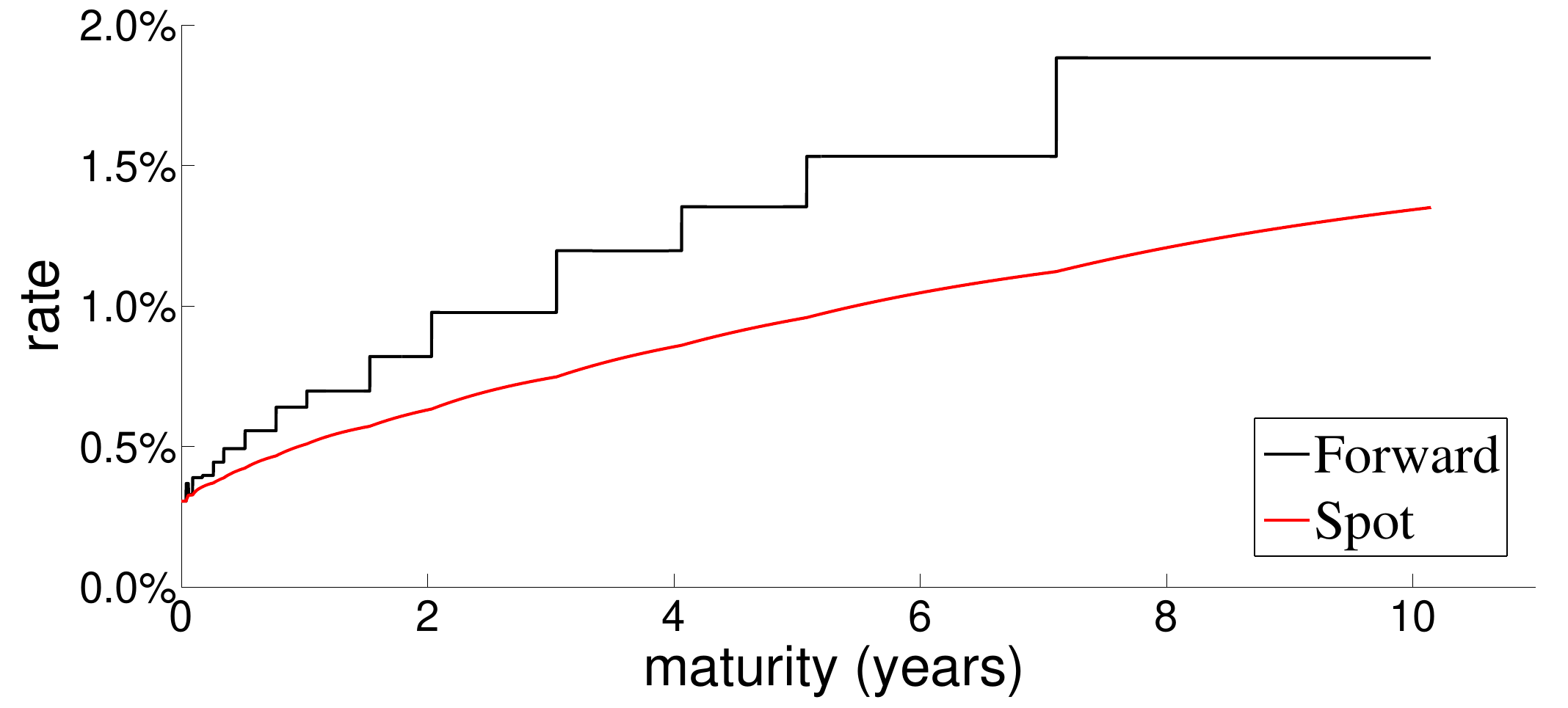}
\caption{USD Currency}
\label{fig:1a} \end{subfigure}\begin{subfigure}{.5\textwidth}
\centering \captionsetup{justification=centering,margin=2cm} \includegraphics[width=0.97\linewidth,height=1.7in]{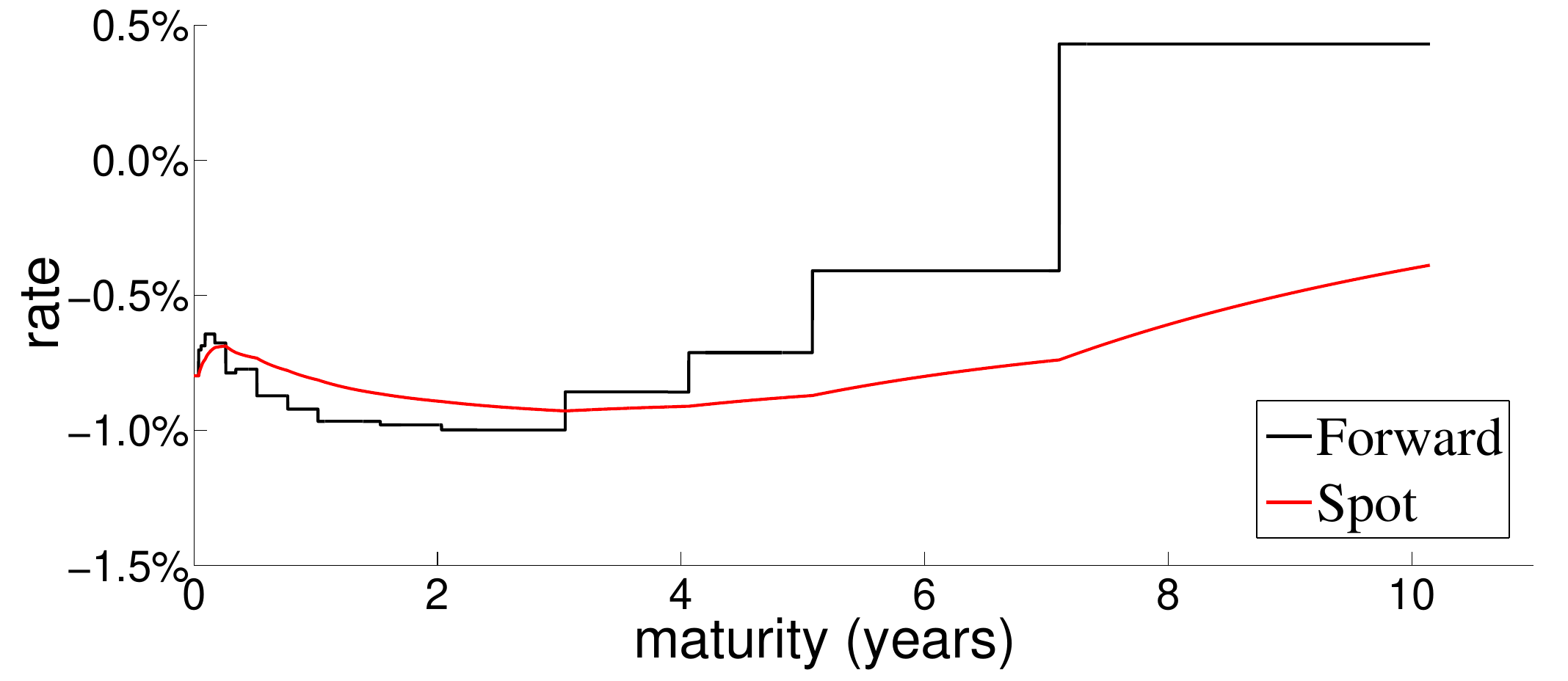}
\caption{EUR Currency}
\label{fig:1b} \end{subfigure} \\[1em] \caption{The instantaneous forward rates and the continuously-compounded spot
rates.}
\label{fig:1} 
\end{figure}

A choice of the shift function $h$ as in (\ref{eq5.2}) results in
an exact fit to the initial term structure of interest rates independent
of the value of the parameter vector $\beta_{1}$.

Next, we determine $\beta_{1}$ by calibrating the CIR\scalebox{.9}{\raisebox{.5pt}{++}}
model to the current term structure of volatilities, in particular,
by fitting at-the-money (ATM) cap volatilities.%
{} We consider caps with integer maturities ranging from $1$ to $10$
years for both currencies, with an additional $18$ month cap for
EUR. For USD, all caps have quarterly frequency, whereas for EUR the
$1$ year and $18$ month caps have quarterly frequency and the $2$
to $10$ year caps have semi-annual frequency. %
A cap is a set of spanning caplets with a common strike so the value
of the cap is simply the sum of the values of its caplets. It is market
standard to price caplets with the Black formula, in which case the
fair value of the cap at time $0$ with rate (strike) $K$, reset
times $T_{a},T_{a+1},\hdots,T_{b-1}$ and payment times $T_{a+1},\hdots,T_{b-1},T_{b}$
is: 
\begin{equation}
\text{Cap}_{\text{Black}}(K,\sigma_{a,b})=\sum_{i=a+1}^{b}P(0,T_{i})\Delta(T_{i-1},T_{i})\text{Black}\big(K,F(0,T_{i-1},T_{i}),\sigma_{a,b}\sqrt{T_{i-1}}\big),\label{eq5.9}
\end{equation}
where $F(0,T,S)$ is the simply-compounded forward rate at time $0$
for the expiry $T$ and maturity $S$ defined as 
\begin{equation}
F(0,T,S)=\frac{1}{\Delta(T,S)}\left(\frac{P(0,T)}{P(0,S)}-1\right)\label{eq5.10}
\end{equation}
and the Black volatility $\sigma_{a,b}$ corresponding to a strike
$K$ is retrieved from market quotes. Denoting by $\phi_{0}$ and
$\Phi_{0}$ the standard normal probability density function (PDF)
and cumulative distribution function (CDF), respectively, Black's
formula is: 
\begin{align}
\text{Black}(K,F,v) & =F\Phi_{0}(d_{1})-K\Phi_{0}(d_{2}),\label{eq5.11}\\[2pt]
d_{1,2} & =\frac{\ln(F/K)\pm v^{2}/2}{v}.\nonumber 
\end{align}
However, Black's formula cannot cope with negative forward rates $F$
or strikes $K$, in which case we switch to Bachelier's (normal) formula
in (\ref{eq5.9}): 
\begin{align}
\text{Normal}(K,F,v) & =(F-K)\Phi_{0}(d)+v\phi_{0}(d),\label{eq5.12}\\[2pt]
d & =\frac{F-K}{v}.\nonumber 
\end{align}
The data in Figure \ref{fig:1b} suggest that the instantaneous forward
rate for EUR takes negative values. Therefore, we use Black cap volatility
quotes for USD and Normal cap volatility quotes for EUR. The market
prices of at-the-money caps are computed by inserting the forward
swap rate 
\begin{equation}
S_{a,b}=\frac{P(0,T_{a})-P(0,T_{b})}{\sum_{i=a+1}^{b}\Delta(T_{i-1},T_{i})P(0,T_{i})}\label{eq5.13}
\end{equation}
as strike and the quoted cap volatility as $\sigma_{a,b}$ in (\ref{eq5.9}),
using either Black's or Bachelier's formula.

Fitting the CIR\scalebox{.9}{\raisebox{.5pt}{++}} model to cap
volatilities means finding the value of $\beta_{1}$ for which the
model cap prices, which are available in closed-form \cite{Brigo2006},
best match the market cap prices. The calibration is performed by
minimising the sum of the squared differences between model- and market-implied
cap volatilities: 
\begin{equation}
\min_{\beta_{1}\in\mathbb{R}_{\scalebox{.5}{\raisebox{.5pt}{+}}}^{4}}\hspace{1pt}\sum_{1\leq i\leq n}{\big[\sigma^{\text{CIR}}(T_{i};\beta_{1})-\sigma^{\text{M}}(T_{i})\big]^{2}},\label{eq5.14}
\end{equation}
where $\sigma^{\text{CIR}}$ and $\sigma^{\text{M}}$ stand for the
model- and the market-implied cap volatilities, respectively, and
$T_{1},\hdots,T_{n}$ are the cap maturities. Model-implied cap volatilities
are obtained by pricing market caps with the CIR\scalebox{.9}{\raisebox{.5pt}{++}}
model and then inverting the formula (\ref{eq5.9}) in order to retrieve
the implied volatility associated with each maturity. We choose to
calibrate the model to cap volatilities since they are of similar
magnitude, unlike cap prices which can differ by a few orders of magnitude.
The calibration results are displayed in Table \ref{table4.3}. 

On the one hand, (\ref{eq5.14}) is a highly nonlinear and non-convex
optimisation problem, and the objective function may have multiple
local minima. On the other hand, global optimisation algorithms require
a very high computation time and do not scale well with complexity,
as opposed to local optimisation methods. A fast calibration is important
in practice since option pricing models may need to be re-calibrated
several times within a short time span. Therefore, we used a nonlinear
least-squares solver, in particular the trust-region-reflective algorithm
\cite{Coleman1996}, for the calibration and a global optimisation
method, in particular a genetic algorithm \cite{Storn1997}, for verification
purposes only.

Figure \ref{fig:2} shows the fitting capability of the CIR\scalebox{.9}{\raisebox{.5pt}{++}}
model, and the implied cap volatility curve is compared to the market
curve for each currency. Taking into account that the model has only
$4$ parameters to fit between $10$ and $11$ data points, we conclude
that the CIR\scalebox{.9}{\raisebox{.5pt}{++}} model provides
a fairly reasonable fit to the term structure of cap volatilities
$T_{i}\mapsto\sigma^{\text{M}}(T_{i})$, $1\leq i\leq n$. 
\begin{figure}[htb]
\centering \begin{subfigure}{.5\textwidth} \centering \captionsetup{justification=centering,margin=2cm}
\includegraphics[width=0.97\linewidth,height=1.7in]{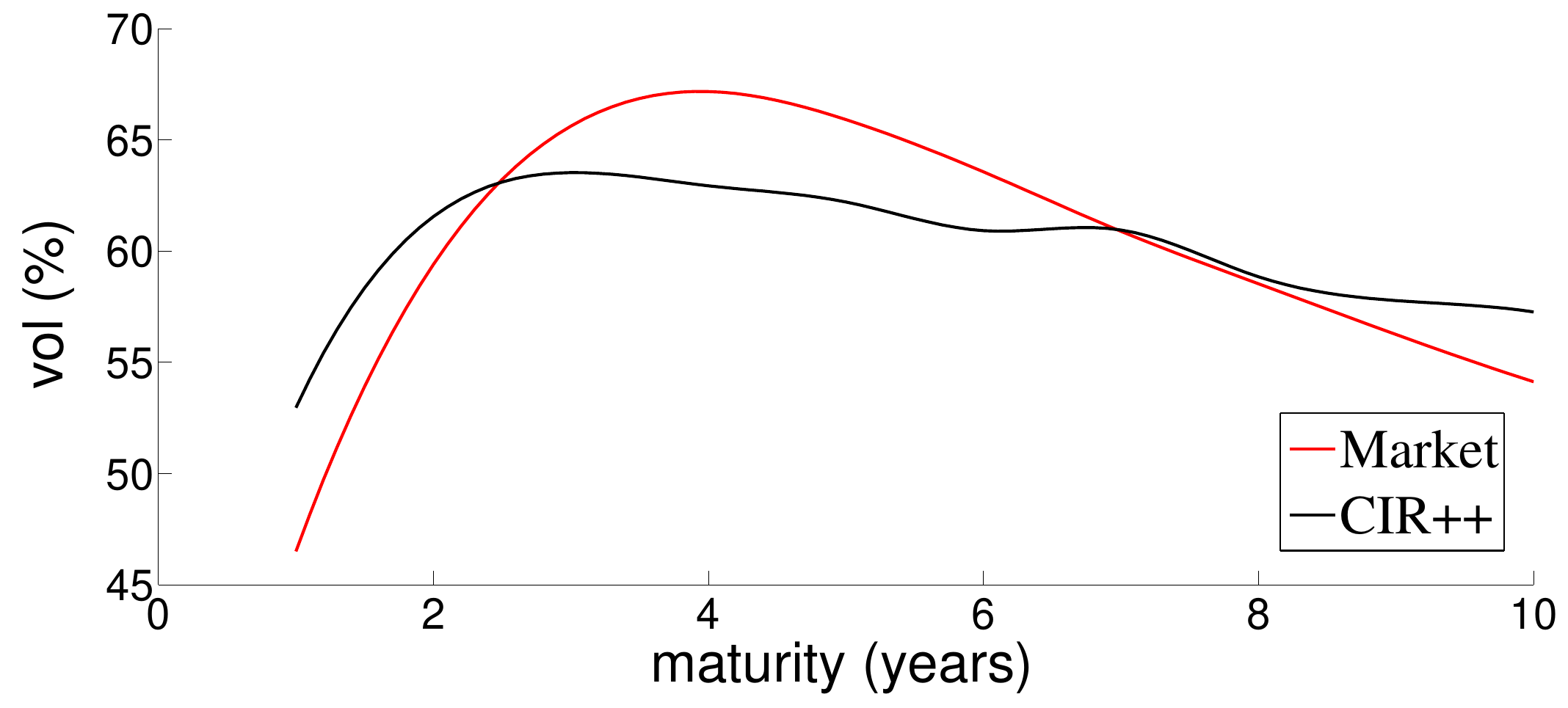}
\caption{USD Currency}
\label{fig:2a} \end{subfigure}\begin{subfigure}{.5\textwidth}
\centering \captionsetup{justification=centering,margin=2cm} \includegraphics[width=0.97\linewidth,height=1.7in]{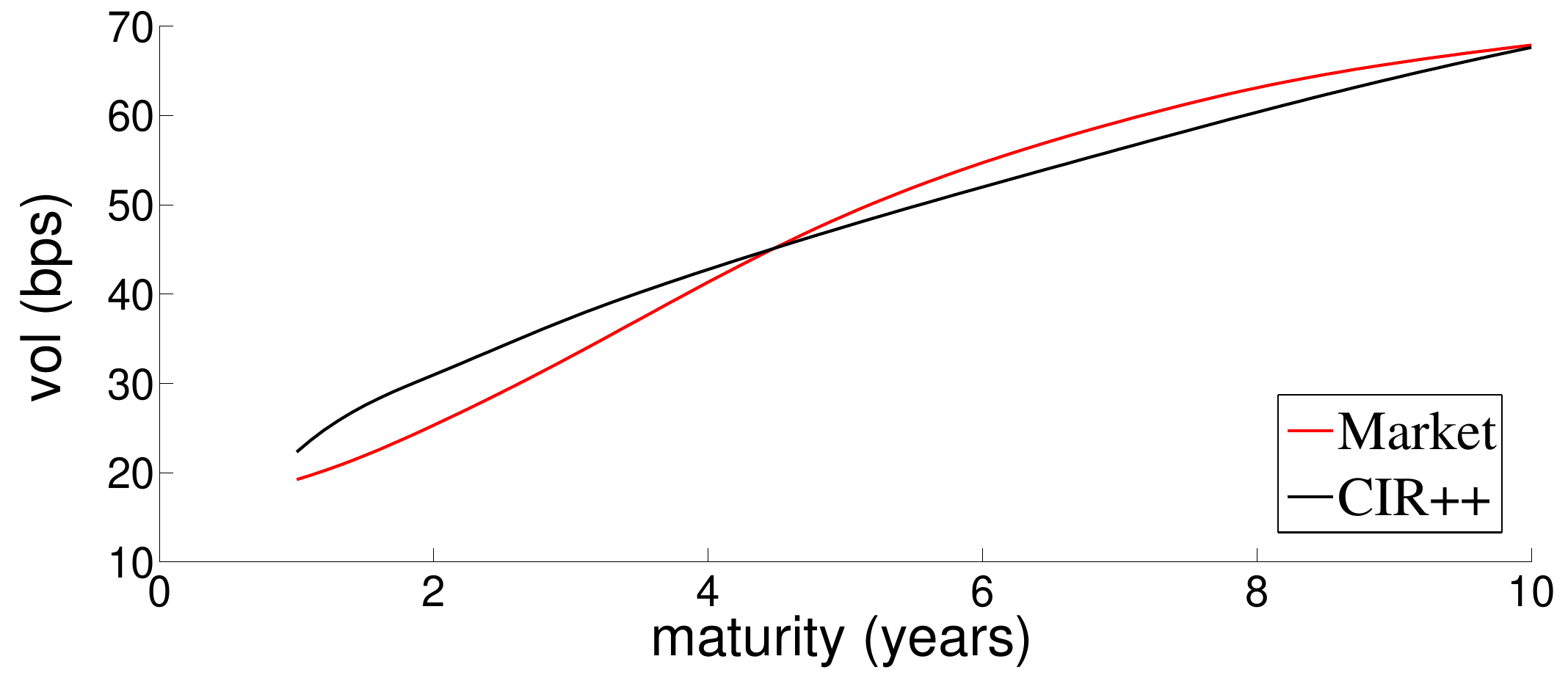}
\caption{EUR Currency}
\label{fig:2b} \end{subfigure} \\[1em] \caption{The market- and model-implied term structures of cap volatilities.}
\label{fig:2} 
\end{figure}


\section{Local volatility calibration algorithm\label{sub:Appendix-Local-volatility-calibration}}

\subsection{Calibration with Dupire PDE \label{sub:1-Factor-Local-Volatility-Calibration}}

The calibration routine for a pure local volatility model is run with
a standard algorithm forward in maturity. We recall that model (\ref{eq:modelDefinitionLocalVolatility})
is written as
\[
\frac{dS_{t}^{LV}}{S_{t}^{LV}}=\left(\bar{r}^{d}\left(t\right)-\bar{r}^{f}\left(t\right)\right)\,dt+\sigma_{LV}\left(S_{t}^{LV},t\right)\,dW_{t}\,,
\]
 and we want to find the function $\sigma_{LV}$ for which the call
prices under the local volatility model match the quoted market prices
exactly. This is crucial as both $\sigma_{LV}$ and
$\frac{\partial^{2}C_{LV}}{\partial K^{2}}$ appear in the leverage
function formula (\ref{eq:alphaFormula}).{
The forward Dupire PDE (\ref{eq:dupirePDE}),
\begin{equation}
\begin{cases}
\frac{\partial C_{LV}}{\partial T}+\left(\bar{r}^{d}\left(T\right)-\bar{r}^{f}\left(T\right)\right)K\frac{\partial C_{LV}}{\partial K}+\bar{r}^{f}\left(T\right)C_{LV}-\frac{1}{2}K^{2}\frac{\partial^{2}C_{LV}}{\partial K^{2}}\sigma_{LV}^{2}(K,T) =0\,,\\
C_{LV}\left(K,0\right)=\left(S_{0}-K\right)^{+},\quad C_{LV}\left(0,T\right)=S_{0},\quad C_{LV}\left(S_{\max},T\right)=0\, ,
\end{cases}\label{eq:Dupire-Forward-PDE}
\end{equation}
provides an efficient way to calibrate and, eventually, regularise the problem. Denote by $\Phi$ the map from the local volatility function to the model implied
volatility function $\Sigma_{Model}$. 
Furthermore, the PDE solution for a guess $\sigma_{S}$ of the local volatility gives call prices for the whole set of strikes and maturities.
Inverting the Black formula allows to retrieve the model implied volatilities
$\Sigma$. Hence, as proposed in \cite{Tur2014}, we can use the forward
Dupire PDE (\ref{eq:Dupire-Forward-PDE}) combined with an efficient
implied volatility inverter \cite{Jackel2015} as the mapping function
$\Phi$. A very useful property of this PDE is that it can be solved
forward in maturity. Let a set of maturities quoted on the market
be $\left(T_{1},...,T_{N_{Mat}}\right)$ and a set of $M_{i}$ strikes
for a given maturity $T_{i}$ be $\left(K_{T_{i},1},...,K_{T_{i},M_{i}}\right)$.
It is possible to solve the PDE on $\left[0,T_{1}\right]$, then on
$\left[T_{1},T_{2}\right]$ and so forth. The full calibration algorithm
is presented for completeness in Appendix \ref{sub:Appendix-Local-volatility-calibration}.}

\subsection{Computation of the target volatility surface}

For the calibration routine, we will compute the solution of the PDE
(\ref{eq:Dupire-Forward-PDE}) by a finite difference method. The spot grid is defined on $\left[0,S_{\max}\right]$, where $S_{\max}=S_{0}e^{\frac{6}{2}\sigma_{ATM}\left(\frac{T_{\max}}{2}\right)\sqrt{\frac{T_{\max}}{2}}}$. In order to speed up the calibration
routine, we prefer not to use too many spot steps and time steps (150
steps in space and 20 time steps per year). Hence, the scheme will
not have converged to the solution of the PDE at this point. In order
to tackle this problem and still benefit from a good speed-up, we
will compute a ``target volatility surface'': instead of calibrating
the market volatility surface, we will calibrate a volatility surface
that takes into account the discretisation error of the numerical PDE
solution. Industry practitioners like Murex use this approach \cite{MXLV2007}.
The algorithm to build the target surface is explained below.

\begin{algorithm}[H]
\begin{algor}
\item [{for}] ( $i=1\,;\,i\leq N_{Mat}\,;\,i++$) 
\item [{for}] ( $j=1\,;\,j\leq M_{i}\,;\,j++$)
\item [{{*}}] \textbf{define} $\sigma_{Market}=$ $\Sigma_{Market}\left(K_{i,j},T_{i}\right)$
from the market volatility surface
\item [{{*}}] \textbf{solve} the PDE (\ref{eq:Dupire-Forward-PDE}) with
constant local vol $\sigma_{LV}=\sigma_{Market}$
\item [{{*}}] \textbf{get} $C\left(K_{i,j},T_{i}\right)$ from the numerical
solution
\item [{{*}}] \textbf{get} {$\Sigma_{Target}\left(K_{i,j},T_{i}\right)$
by inverting the price with the Black-Scholes formula}
\item [{endfor}]~
\item [{endfor}]~
\end{algor}
\caption{Computation of the target volatility surface}
\end{algorithm}

\subsection{Calibration by fixed-point algorithm and forward induction}
\label{subsec:fixed-point}

The local volatility
function is defined on a grid of points interpolated with cubic splines
in spot and backward flat in time. In the FX case, where there are
5 quoted strikes per maturity (10 maturities), the local volatility
is defined on a grid of $50$ points. Each one of the points $\sigma_{LV}^{i,j}=\sigma_{LV}\left(K_{T_{i},j},T_{i}\right)$,
with $i\in\llbracket1,10\rrbracket$ and $j\in\llbracket1,5\rrbracket$,
can be seen as a parameter of the local volatility surface. For a
given maturity $T_{i}$, the local volatility is defined on the interval
$\left[K_{T_{i},1},K_{T_{i},5}\right]$ and is extrapolated flat outside
those bounds.

In order to define a first guess for the calibration routine, we use
a smoothed bi-variate cubic spline following the algorithm in \cite{Dierckx1981}
to interpolate in strike and maturity the call prices on the market.
This allows us to use the Dupire formula to define a first guess for
the first maturity $T=T_{1}$. After the calibration of the first
maturity pillar $T_{1}$, the first guess for the next pillar is the
current maturity local volatility. This approach has shown the best
stability and speed in our tests. 

As we now have a way to get the model implied volatility from the
local volatility (with $\Phi$), one can follow a Picard fixed-point
algorithm as proposed in \cite{Rehai2006,Tur2014} that
we describe below.

\begin{algorithm}[H]
\begin{algor}
\item [{for}] ( $i=1\,;\,i\leq N_{Mat}\,;\,i++$) 
\item [{while}] it < maxIter
\item [{{*}}] \textbf{solve} PDE (\ref{eq:Dupire-Forward-PDE}) on $\left[T_{i-1},T_{i}\right]$
\item [{{*}}] \textbf{compute} model implied vol $\Sigma_{Model}$ for
maturity $T_{i}$ from the computed call prices
\item [{{*}}] \textbf{compute} $error=\sum_{m=1}^{M_{i}}\left(\Sigma_{Model}\left(K_{T_{i},m},T_{i}\right)-\Sigma_{Target}\left(K_{T_{i},m},T_{i}\right)\right)^{2}$
\item [{if}] error < tol
\item [{{*}}] \textbf{endwhile}
\item [{else}]~
\item [{for}] ( $j=1\,;\,j\leq M_{i}\,;\,j++$)
\item [{{*}}] \textbf{update} local volatility guess 
\[
\sigma_{LV}\left(K_{T_{i},j},T_{i}\right)=\sigma_{LV}\left(K_{T_{i},j},T_{i}\right)\frac{\Sigma_{Target}\left(K_{T_{i},j},T_{i}\right)}{\Sigma_{Model}\left(K_{T_{i},j},T_{i}\right)}
\]

\item [{endfor}]~
\item [{endif}]~
\item [{{*}}] it++
\item [{endwhile}]~
\item [{endfor}]~
\end{algor}
\caption{%
Fixed-point forward induction%
}
\end{algorithm}

\begin{remark}
It is stated but not proved in \cite{Rehai2006} that the map $\Phi$ is contracting and
so is $f\left(\left\{ \sigma_{LV}\right\} \right)\rightarrow\left\{ \sigma_{LV}\right\} *\frac{\left\{ \Sigma_{Target}\right\} }{\Phi\left(\left\{ \sigma_{LV}\right\} \right)}.$
Assuming this to be true, $f$ admits a unique fixed point that is the limit of the sequence
of local volatility guesses $\left(\sigma_{LV}^{n}\right)_{n\in\mathbb{N}}$
defined as $\left\{ \sigma_{LV}^{n+1}\right\} =f\left(\left\{ \sigma_{LV}^{n}\right\} \right)$.
In practice, convergence is achieved particularly fast (between 10 and 20 iterations).

The calibrated local volatility is shown in Figure \ref{fig:Local-Volatility-function-1},
where we plot it on a time scale to $T_{\max}$ for a better illustration
of its shape.
\end{remark}

%
We perform the calibration with 800 space steps and 100 time steps per year for the forward Dupire PDE, where we use the finite element method with
quadratic basis functions. We then price quoted vanilla
contracts with the backward Feynman--Kac PDE under the calibrated local volatility model. We get a maximum error in implied volatility
smaller than 0.01\% (i.e., for a market volatility of 20\%, the calibrated
volatility could be $20.00\pm0.01$\% in the worst case scenario).

Additionally, we plot the discounted marginal density of the spot
extracted from the market. As mentioned before, this quantity is $\frac{\partial^{2}C_{LV}}{\partial K^{2}}$
and can be computed from the PDE solution immediately and accurately.
As we will use the density in the calibration formula in Theorem \ref{eq:alphaFormula},
we want it to be smooth and accurate. Figure {\ref{fig:spotMarginalDensityPDE-1}
}shows that quantity.

%

\noindent
\begin{minipage}[c]{1.\columnwidth}%
\begin{minipage}[c]{0.46\columnwidth}%
\begin{centering}
\includegraphics[scale=0.19]{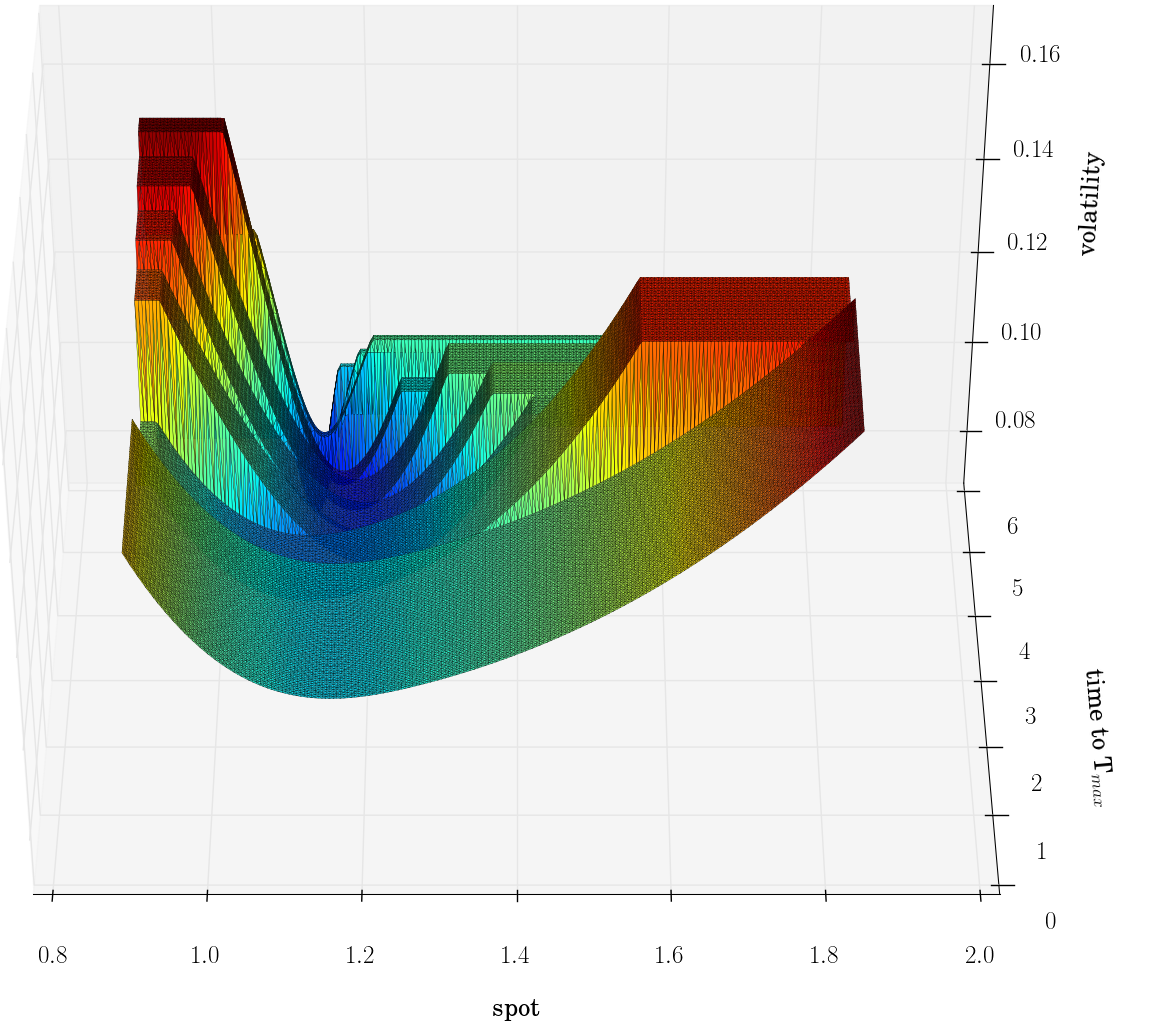}

\captionof{figure}{EURUSD Local volatility function calibrated by forward PDE and fixed-point algorithm}

\label{fig:Local-Volatility-function-1}%
\end{centering}
\end{minipage}~~~~%
\begin{minipage}[c]{0.46\columnwidth}%
\begin{centering}
\includegraphics[scale=0.17]{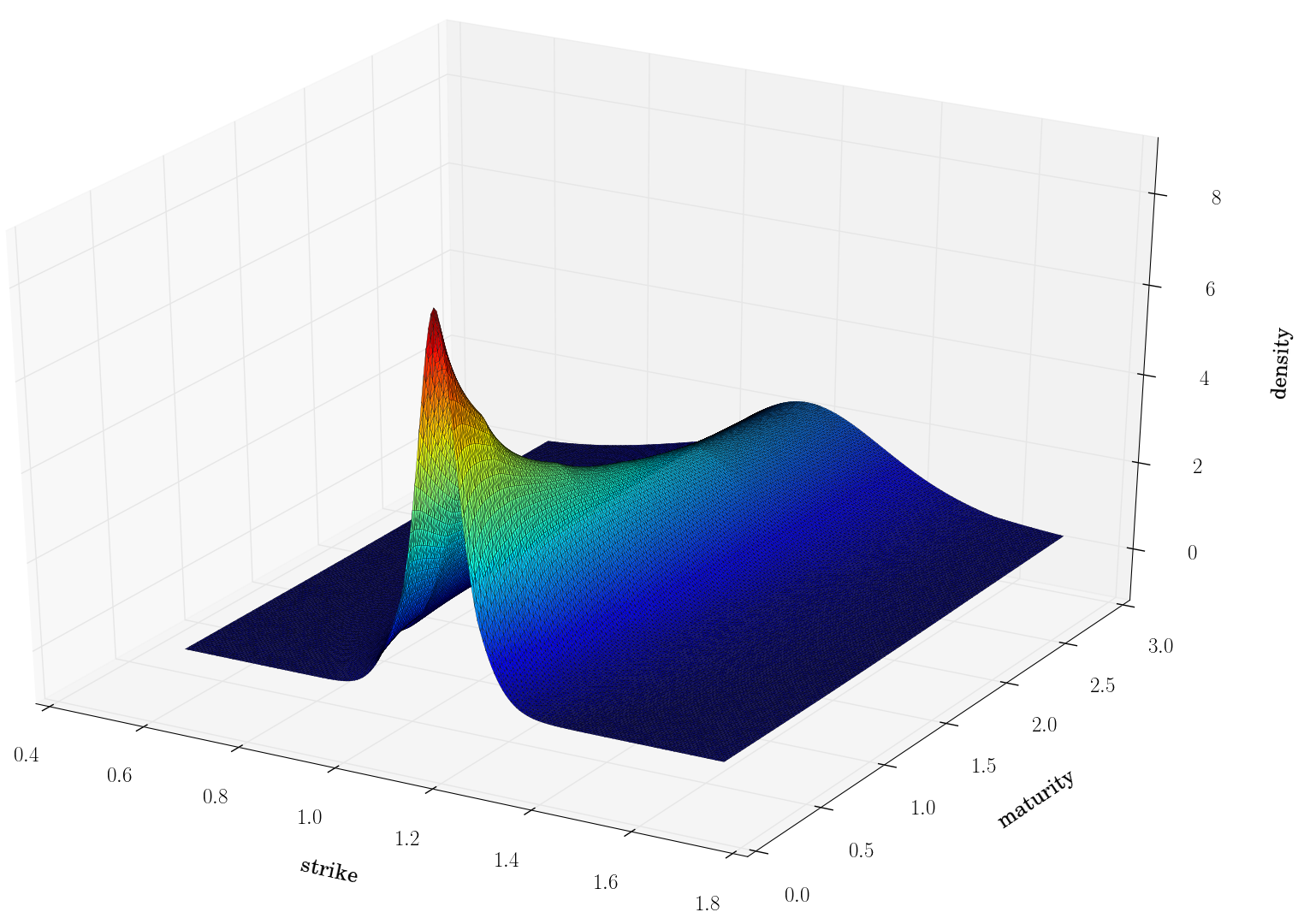}

\captionof{figure}{Market spot marginal density computed from the Dupire forward PDE with calibrated $\sigma_{LV}$}\label{fig:spotMarginalDensityPDE-1}%
\end{centering}
\end{minipage}%
\end{minipage}

%


\section{Four-factor hybrid stochastic volatility model calibration}
\label{subsec:hybridsv}
\label{sub:Heston-2CIR++-Calibration}

Consider a ``purely stochastic'' version of the model (\ref{eq:Model Definition})
-- the Heston-2CIR\scalebox{.9}{\raisebox{.5pt}{++}} model with
leverage function $\equiv1$ -- and additionally suppose that the
domestic and the foreign short interest rate dynamics are independent
of the dynamics of the spot FX rate. The model is governed by the
following system of SDEs under the domestic risk-neutral measure $\mathbb{Q}^{d}$:
\begin{align}
\begin{cases}
\cfrac{dS_{t}^{SV}}{S_{t}^{SV}}=\left(r_{t}^{d}-r_{t}^{f}\right)\,dt+\sqrt{V_{t}}\,dW_{t},\,\,S_{0}^{SV}=S_{0},\\
r_{t}^{d}=g_{t}^{d}+h^{d}\left(t\right)\\
r_{t}^{f}=g_{t}^{f}+h^{f}\left(t\right)\\
dg_{t}^{d}=\kappa_{d}\left(\theta_{d}-g_{t}^{d}\right)\,dt+\xi_{d}\sqrt{g_{t}^{d}}\,dW_{t}^{\gd}\\
dg_{t}^{f}=\kappa_{f}\left(\theta_{f}-g_{t}^{f}\right)\,dt+\xi_{f}\sqrt{g_{t}^{f}}\,dW_{t}^{\gf}\\
dV_{t}=\kappa\left(\theta-V_{t}\right)\,dt+\xi\sqrt{V_{t}}\,dW_{t}^{V}\,,
\end{cases}\label{eq5.15}
\end{align}
where $W$ and $W^{V}$ are correlated Brownian motions with correlation
coefficient $\rho$. Note that the quanto correction
term in the drift of the foreign short rate vanishes due to the postulated
independence assumption between the spot FX rate and foreign short
rate dynamics.

Define the vector of parameters $\beta_{2}=(v_{0},\kappa,\theta,\xi,\rho)$.
The next step in our calibration is to find the values of these $5$
model parameters for which European call option prices best match
the market call prices retrieved from volatility quotes for different
strikes and maturities%
. 
For a EURUSD transaction, the market standard is to choose USD as
the domestic currency and EUR as the foreign currency. The forward
FX rate for a payment date $T$ is defined as 
\begin{equation}
F_{T}=\frac{P^{f}(0,T)}{P^{d}(0,T)}\hspace{1pt}S_{0},\label{eq5.16}
\end{equation}
where $P^{d}(0,T)$ and $P^{f}(0,T)$ are the domestic and foreign
discount factors at time $0$ for a maturity $T$, respectively. 

Under the postulated simple correlation structure of the Brownian
drivers and when the short rates are driven by the CIR process, i.e.,
when $h^{d,f}=0$, Ahlip and Rutkowski \cite{Ahlip2013} derive an
efficient closed-form formula for the European call option price.
Hence, we denote by $C_{\text{A}}(K,T)$ the fair value under the
Heston--2CIR model of a European call option with strike $K$ and
maturity $T$ computed with the aforementioned formula, and by $C_{\text{H}}(K,T)$
the fair value of the same option but under the Heston--2CIR\scalebox{.9}{\raisebox{.5pt}{++}}
model. For $i\in\{d,f\}$, we define for brevity 
\[
H_{i}=\exp\left\{ \int_{0}^{T}{h^{i}(t)\,dt}\right\} ,
\]
where the shift functions $h^{d,f}$ were calibrated in 
Appendix \ref{subsec:rate}.
Then we can extend the pricing formula of Ahlip and Rutkowski
\cite{Ahlip2013} as follows. 
\begin{align*}
C_{\text{H}}(K,T) & =\E^{\mathbb{Q}^{d}}\left[\exp\left\{ -\int_{0}^{T}{r_{t}^{d}\,dt}\right\} \big(S_{T}^{SV}-K\big)^{+}\right]\\[2pt]
 & =H_{f}^{-1}\E^{\mathbb{Q}^{d}}\left[\exp\left\{ -\int_{0}^{T}{g_{t}^{d}\,dt}\right\} \big(H_{f}H_{d}^{-1}S_{T}^{SV}-H_{f}H_{d}^{-1}K\big)^{+}\right].
\end{align*}
Therefore, $C_{\text{H}}(K,T)=H_{f}^{-1}C_{\text{A}}(\tilde{K},T)$,
where $\tilde{K}=H_{f}H_{d}^{-1}K$. We now calibrate the Heston--2CIR\scalebox{.9}{\raisebox{.5pt}{++}}
model by minimising the sum of the squared differences between model
and market call prices: 
\begin{equation}
\min_{\beta_{2}\in\mathbb{R}_{\scalebox{.5}{\raisebox{.5pt}{+}}}^{4}\times[-1,\hspace{0.5pt}1]}\hspace{2pt}\sum_{\substack{1\leq i\leq n\\[1pt]
1\leq j\leq m
}
}{\big[C_{\text{H}}(K_{j},T_{i};\beta_{2})-C_{\text{BS}}(K_{j},T_{i},\sigma_{i,j})\big]^{2}},\label{eq5.18}
\end{equation}
where $\sigma_{i,j}$ is the quoted volatility corresponding to a
strike $K_{j}$ and a maturity $T_{i}$, for $j=1,\hdots,m$ and $i=1,\hdots,n$.
There are many ways to choose the objective function (error measure)
in (\ref{eq5.18}). For instance, we may consider either call prices
or Black--Scholes implied volatilities and minimise the sum of either
absolute or relative (squared) differences between model and market
values, using either uniform or non-uniform weights. We choose this
particular error measure, which assigns more weight to more expensive
options (in-the-money, long-term) and less weight to cheaper options
(out-of-the-money, short-term), for two reasons. First, the Heston
model, and hence the Heston--2CIR\scalebox{.9}{\raisebox{.5pt}{++}}
model by extension, cannot reproduce the smiles or skews typically
observed for short maturities that well and a more careful calibration
to these smiles would result in a larger overall model error due to
the inherent poor fit of the model to the short-term. Second, market
data becomes scarce as the maturity increases, and hence we already
assigned more weight to the short- and mid-term sections of the volatility
surface; for instance, we have more maturities up to $1$ year than
between $1$ and $5$ years.

As before, we employ a nonlinear least-squares solver (the trust-region-reflective
algorithm, see \cite{coleman1994convergence}) for the calibration and a global optimisation method (a
genetic algorithm) for verification purposes. Due to the non-linearity
and non-convexity of the problem, the calibrated model parameters
may end up in a local rather than a global minimum of the objective
function. Hence, a good initial parameter guess may significantly
improve the quality of the calibration. Practitioners usually use
variance swap prices to calibrate $v_{0}$, $\kappa$ and $\theta$.
In our case, we found the squared ATM $3$-week and $5$-year volatilities
to provide good initial guesses for $v_{0}$ and $\theta$, respectively.


\fi

\end{document}